\documentclass[11pt]{article}
\usepackage{geometry}                
\geometry{letterpaper}                   
\usepackage{array}
\usepackage{color}
\usepackage{graphicx}
\usepackage{amssymb}
\usepackage{epstopdf}
\usepackage[ruled]{algorithm2e}
\usepackage{multicol}
\usepackage{multirow}

\usepackage{url}

\usepackage{amsmath}
\usepackage{mathtools} 
\usepackage{blkarray}

\usepackage{mathtools}

\usepackage{amsthm}
\newtheorem{theorem}{{\bf Theorem}}[section]

\newtheorem{lemma}{{\bf Lemma}}[section]

\newtheorem{definition}{{\bf Definition}}[section]
\newtheorem{remark}{{\bf Remark}}[section]
\newtheorem{example}{{\bf Example}}[section]
\DeclareGraphicsRule{.tif}{png}{.png}{`convert #1 `dirname #1`/`basename #1 .tif`.png}

\textwidth 6.5in 
\topmargin -0.25in
\headheight 0.05in 
\headsep 0.00in 
\oddsidemargin -0.00in
\textheight 9.5in

\definecolor{purple}{RGB}{193,0,255}

\newcommand{\beq}{\begin{equation}}
\newcommand{\eeq}{\end{equation}}
\newcommand{\bdm}{\begin{displaymath}}
\newcommand{\edm}{\end{displaymath}}
\newcommand{\beqa}{\begin{eqnarray}}
\newcommand{\eeqa}{\end{eqnarray}}
\newcommand{\beqas}{\begin{eqnarray*}}
\newcommand{\eeqas}{\end{eqnarray*}}
\newcommand{\barr}{\begin{array}}
\newcommand{\earr}{\end{array}}
\newcommand{\bit}{\begin{itemize}}
\newcommand{\eit}{\end{itemize}}
\newcommand{\qq}[1]{\qquad \mbox{#1} \qquad}


\newcommand{\cE}{{\cal E}}

\newcommand{\cG}{{\cal G}}

\newcommand{\cN}{{\cal N}}
\newcommand{\cO}{{\cal O}}

\newcommand{\cV}{{\cal V}}
\newcommand{\cW}{{\cal W}}

\newcommand{\ba}{{\bf a}}
\newcommand{\bb}{{\bf b}}
\newcommand{\bc}{{\bf c}}

\newcommand{\be}{{\bf e}}

\newcommand{\br}{{\bf r}}

\newcommand{\bu}{{\bf u}}
\newcommand{\bv}{{\bf v}}
\newcommand{\bw}{{\bf w}}
\newcommand{\bx}{{\bf x}}
\newcommand{\by}{{\bf y}}

\newcommand{\bfo}{{\bf 1}}

\newcommand{\Ah}{\hat{A}}

\newcommand{\bigO}[1]{\ensuremath{\mathop{}\mathopen{}\mathcal{O}\mathopen{}\left(#1\right)}}

\title{Detecting highly cyclic structure with complex eigenpairs}

\author {
Christine Klymko \thanks{Center
for Applied Scientific Computing, Lawrence Livermore
National Laboratory, Livermore, CA 94550, USA 
(klymko1, sanders29@llnl.gov).} \and
Geoffrey Sanders\thanks{Center
for Applied Scientific Computing, Lawrence Livermore
National Laboratory, Livermore, CA 94550, USA 
(sanders29@llnl.gov).}
}

\date{}                                           

\begin{document}
\maketitle
\vspace{1cm}

\begin{abstract}
Many large, real-world complex networks have rich community structure that a network scientist seeks to understand.   These communities may overlap or have intricate internal structure.   Extracting communities with particular topological structure, even when they overlap with other communities, is a powerful capability that would provide novel avenues of focusing in on structure of interest.   In this work we consider extracting highly-cyclic regions of directed graphs (digraphs).   We demonstrate that embeddings derived from complex-valued eigenvectors associated with stochastic propagator eigenvalues near roots of unity are well-suited for this purpose.   We prove several fundamental theoretic results demonstrating the connection between these eigenpairs and the presence of highly-cyclic structure and we demonstrate the use of these vectors on a few real-world examples.
\end{abstract}


\section{Introduction}
Complex networks are found in many different disciplines and are used to model a wide variety of phenomena, from social interactions to biological processes to technological development \cite{Ca07,Es11,EsFoHi10,Ne03,Ne10}.  The analysis of these networks, which are, at their most basic, formed by objects (nodes/vertices) and connections (edges) can be useful in many aspects of study, from determining the structure of a network, to modeling or optimizing information flow, to determining most the ``important'' network elements.

One of the most commonly studied questions in network analysis is that of the detection and identification of communities, see \cite{EsKrSaXu96,Fo10,KiSoJe10,MaVa13,NeGi04,SeKoPi12} among many others.  Informally, a community in a complex network is a group of nodes that should be more closely associated with one another than with other nodes in the network, either because they perform similar functions within the network or because they form a cohesive group.  Perhaps the most commonly used definition of a community is that based on {\em modularity} \cite{Ne06}.  Informally, modularity measures the the number of internal and external edges among a subset of nodes in a graph and compares it to the number of such edges expected under a random graph model.  The subset of nodes has high modularity if there is a higher number of internal edges and a lower number of external than expected and, in this case, is said to form a good community.

As every network (graph) is associated with a number of matrices, linear algebra is a powerful and often used tool in network analysis in general \cite{BrEr05,KeGi11} and community detection specifically.  Many of these algorithms use the spectrum of the adjacency matrix  or Lapalcian of an undirected graph to find good partitions or other community structure, see \cite{FrKa01,Mc01,Vu14} among others.  The use of undirected graphs greatly simplifies many numerical approximation techniques due to the fact that the associated matrices are symmetric.  However, often complex networks have directed edges and when this edge direction is ignored many important facets of community structure can be lost.  In recent years, the amount of research on methods, including spectral methods, for community detection that take into account directed edges and higher order network structures (e.g. triangles) has been increasing \cite{BeVaBa13,BeGlLe15,KiSh12,KlGlKo14,LeNe08,RoQiYu15}.

There are many networks in which highly-cyclic structure plays an important role 
\cite{Es06,JaWaFi15,KiKi05,MiZiWi05,VaOlBa05}.  In networks with directed edges, ignoring edge direction 
can obscure details of cyclic structure.  In this work, we study linear algebraic techniques for mining graphs for various kinds of highly-cyclic structure, focusing more specifically on highly 3-cyclic structure (see Section \ref{sec:def} for a discussion of highly 3-cyclic structure). In the application of these techniques, directed graphs (digraphs) generally present both modeling and numerical approximation challenges. Scalable numerical approximation depends on iterative methods that apply basic linear algebra operations (e.g. matrix-vector multiply, inner product) to successively improve accuracy. However, iterative methods are typically less robust when applied to digraph mining, as the associated matrices are nonsymmetric. Applications involving nonsymmetric eigensolvers are typically thought of as less attractive, as the orthogonality of eigenvectors is not guaranteed, eigenpairs are possibly complex-valued, and the solvers are less robust in terms of producing highly accurate eigenvectors with a reasonable amount of work.   We argue that the analysis of nonsymmetric eigenpairs and application of nonsymmetric eigensolvers in data mining context is a research area of considerable interest for topological analyses of directed graphs.   Here we design a novel capability of using information in nonsymmetric eigenvectors to detect highly-cyclic regions of a digraph and demonstrate that the computation of these vectors is often reasonably efficient.  We prove theoretical results that pave the way for reliable and scalable algorithms.   We also outline several simple approximation techniques and do a preliminary study of their success on a few digraphs.

The rest of the paper is organized as follows.  Section \ref{sec:def} contains basic definitions and notation, including a discussion of what is meant by highly-cyclic regions in a directed network.  Section \ref{sec:motivation} provides a simple directed stochastic block model to demonstrate that, even in relatively simple examples, analysis of the underlying undirected network does not allow for easy identification of  highly-cyclic structure.  Results concerning the eigenvalues and eigenvectors of the row-normalized adjacency matrices of networks with global and local highly 3-cyclic structure are presented in Sections \ref{sec:global_structure} and \ref{sec:fuzzy}.  Section \ref{sec:experiments} contains experiments on a variety of generated and real world graphs, including on the graph from the motivating example in Section \ref{sec:motivation}.  Concluding remarks and discussion of future work can be found in Section \ref{sec:conclusions}.


\section{Definitions and Notation}
\label{sec:def}
A {\em directed graph} or {\em digraph} $G$ is defined as $G=(V, E)$ where $V$ is a set of $n$ vertices and $E=\{(i,j) \, | \, i, j \in V\}$ is a set of $m$ directed edges made up of ordered pairs of vertices.  The existence of $(i, j) \in E$ means that $G$ has an edge that points from {\em source vertex} $i$ to {\em target vertex} $j$.   Here, $(i, j) \in E$ does not imply $(j, i) \in E$.  Graphs where edges are formed by unordered pairs of vertices (and, thus, the implication holds) are called {\em undirected} graphs.  In a directed graph, if both $(i,j)$ and $(j,i)$ are in the edge set, they are often referred to as {\em reciprocal edges}. Each vertex $i$ has an {\em in-degree}, $d_i^{in}$, and an {\em out-degree}, $d_i^{out}$.    The in-degree counts the number of edges which terminate at vertex $i$, that is edges of the form $(j,i) \in E$.  The out-degree counts the number of edges of the form $(i,j) \in E$, which start at node $i$. In the remainder of this paper, we will use $d_i$ in place of $d_i^{out}$ for terseness. The {\em (total) degree} of node $i$ is given by $d_i^{tot}= d_i^{in} + d_i$.  In a directed graph, edges $(i,j)$ and $(j,i)$ are separate edges and contribute toward $d_i$ and $d_i^{in}$, respectively.  

A {\em walk} of length $k$ in a directed graph is sequence of $k+1$ nodes $i_1, i_2, \ldots i_k, i_{k+1}$ such that $(i_l, i_{l+1}) \in E$ for $1 \leq l \leq k$.  A {\em closed walk} of length $k$ is a walk of length $k$ where $i_1 = i_{k+1}$.  A {\em path} of length $k$ is a walk with no repeated nodes and a {\em cycle} of length $k$ is a closed path.  A (di)graph is {\em simple} if it has unweighted edges, there are no loops (edges from a node to itself), and no multiple edges.  An undirected graph is {\em connected} if there is a path between every pair of nodes.  A directed graph is connected if its underlying undirected graph is connected.  A digraph is {\em strongly connected} if there is a directed path between every pair of nodes.  Unless otherwise specified, all graphs considered in this paper are simple, strongly connected digraphs.

The {\em (directed) adjacency matrix} of $G$ is given by $A = (a_{ij})$ with 
$$a_{ij}=\left\{\begin{array}{ll}
1,& \textnormal{ if } (i,j) \in E,\\
0, & \textnormal{ else. }
\end{array}\right .
$$
The {\em out-degree matrix} of $G$ is given by $D = \mbox{diag}(A{\bf 1}) = (d_{ij})$ with 
$$d_{ij}=\left\{\begin{array}{ll}
d_i,& \textnormal{ if } i = j,\\
0, & \textnormal{ else. }
\end{array}\right .
$$
The {\em in-degree} and {\em (total) degree matrices} of $G$ can be defined similarly.  The {\em stochastic transition matrix} associated with the directed graph $G$ is given by 
$$B = D^{-1}A = (b_{ij}) = \left\{\begin{array}{ll}
\frac{1}{d_i},& \textnormal{ if } (i,j) \in E,\\
0, & \textnormal{ else. }
\end{array}\right .
$$

Clearly, $B$ is row stochastic, so the spectral radius of $B$ is given by $\sigma(B) = 1$.  If $G$ is a simple, strongly connected digraph, then $B$ is also irreducible (and vice versa).  In this case, by the Perron-Frobenius theorem \cite[p. 667]{Me00}, $\lambda_1 = 1$ is a simple eigenvalue of $B$ and both the left and right eigenvectors of $B$ associated with $\lambda_1$ can be chosen to be positive.

The {\em singular value decomposition (SVD)} of a matrix $A \in \mathbb{C}^{n \times n}$ is given by $A = U\Sigma W^*$ where $\Sigma = \rm{diag}(\sigma_1, \sigma_2, \ldots, \sigma_n)$, $\sigma_1 \geq \sigma_2 \geq \ldots \geq \sigma_n$ are the {\em singular values} of $A$, and $U, W \in \mathbb{C}^{n \times n}$ are orthogonal matrices whose columns are the {\em left and right singular vectors} of $A$, respectively \cite[p. 412]{Me00}.

A {\em purely $k$-cyclic graph} is a digraph $G = (V,E)$ in which $V$ is made up of $k$ non-intersecting groups of nodes, $V = V_0 \cup V_1 \cup \ldots \cup V_{k-1}$ such that $E = \{(i,j) | i \in V_l, j \in V_{(l+1)\mod k} \}$.  That is, edges only exist in a directed cycle across the {\em supernodes} $V_0, V_1, \ldots, V_{k-1}$.  A {\em highly $k$-cyclic graph} is a graph in which the probability that a (directed) edge will follow the $k$-cyclic structure is much higher than the probability that it will not.  A graph is {\em highly locally $k$-cyclic} if there is a subset of nodes in $V$ such that these nodes are highly $k$-cyclic.  This structure can also be defined in terms of random walks on the graph, as is done below in the 3-cyclic case.
\begin{definition}{(\sc Highly Three-Cyclic Structure)}
Given a connected digraph $G = (V,E)$, let $\cV \subset V$.   For any $i \in \cV$, consider a random walk $\cW = \{(i \rightarrow i_1), (i_1 \rightarrow i_2), \cdots, (i_{|\cW| -1} \rightarrow i)\}$ that starts at $i$, walks randomly with uniform probability over the out-ward edges, and returns to $i$.   If the probability that $|\cW| = 3$ is much higher than would be expected compared to a random edge placement, then we say $\cV$ has {\em highly 3-cyclic structure}.
\end{definition}

A {\em non-symmetric stochastic block model} with $b_r$ row blocks and $b_c$ column blocks is defined by a {\em row-indicator matrix} $Q_r = [Q_r]_{ij} \in \{0,1\}^{n \times b_r}$, a {\em column-indicator matrix} $Q_c = [Q_c]_{ij} \in \{0,1\}^{n \times b_c}$, and an {\em inter-block probability matrix} $P_0 = [P_0]_{ij} \in [0,1]^{b_r \times b_c}$.  The {\em edge probability matrix} is given by $P = Q_r  P_0 Q_c^t$, an $n \times n$ matrix with large rectangular submatrices of constant value.   To generate a graph from this model (given by an adjacency matrix $A$) one performs a Bernoulli trial for each edge $(i,j)$ with probability $P_{ij}$. Thus, $A_{ij} = 1$ whenever Uniform$([0,1]) < P_{ij}$, otherwise $A_{ij} = 0$.   In this work, we restrict ourselves to the case where the row and column blockings correspond, $b_r=b_c=:b$ and $Q_r = Q_c =: Q$.

Let $\theta_{p,q}$ be given by
$$\theta_{p,q} =  \exp\left(\frac{p}{q} 2\pi \iota \right)$$
where the complex unit $\iota$ satisfies $\iota^2 = -1$. Now, the $q$-th roots of unity, for $p=0,1,..., (q-1),$ are given by $\theta_{p,q}$.   It follows that $\theta_{p,q} = \theta_{p \mod q, q}$ and $\theta_{p_1 + p_2 , q} = \theta_{p_1,q}\theta_{p_2,q}.$


\section{A Motivating Example}
\label{sec:motivation}

Triangles have often been a structure of interest in complex networks, both directed and undirected.  One area of study has been to find areas in the network with a high number of triangles (a problem closely related to that of finding dense subgraphs), see \cite{FeKoPe99,SePiKo13} among many others.  Discovering such structure is useful for many areas of graph analysis, especially in community detection \cite{JiGaGaWa14,KlGlKo14,SeArGo11}.  However, as is often the case, it is easier to find and interpret triangle structure in undirected networks than in directed networks.  Part of the reason for this is due to the fact that in directed graphs, the differentiation between in- and out-edges leads to seven unique triangle structures, up to isomorphism \cite{SePiDuKo13}.  In this section, we are concerned with only one type of directed triangle, the three-cycle with no reciprocal edges.  More specifically, we are concerned with finding areas of {\em highly 3-cyclic structure} in directed graphs.

\begin{figure}
\centering
\includegraphics[width = 0.56\textwidth]{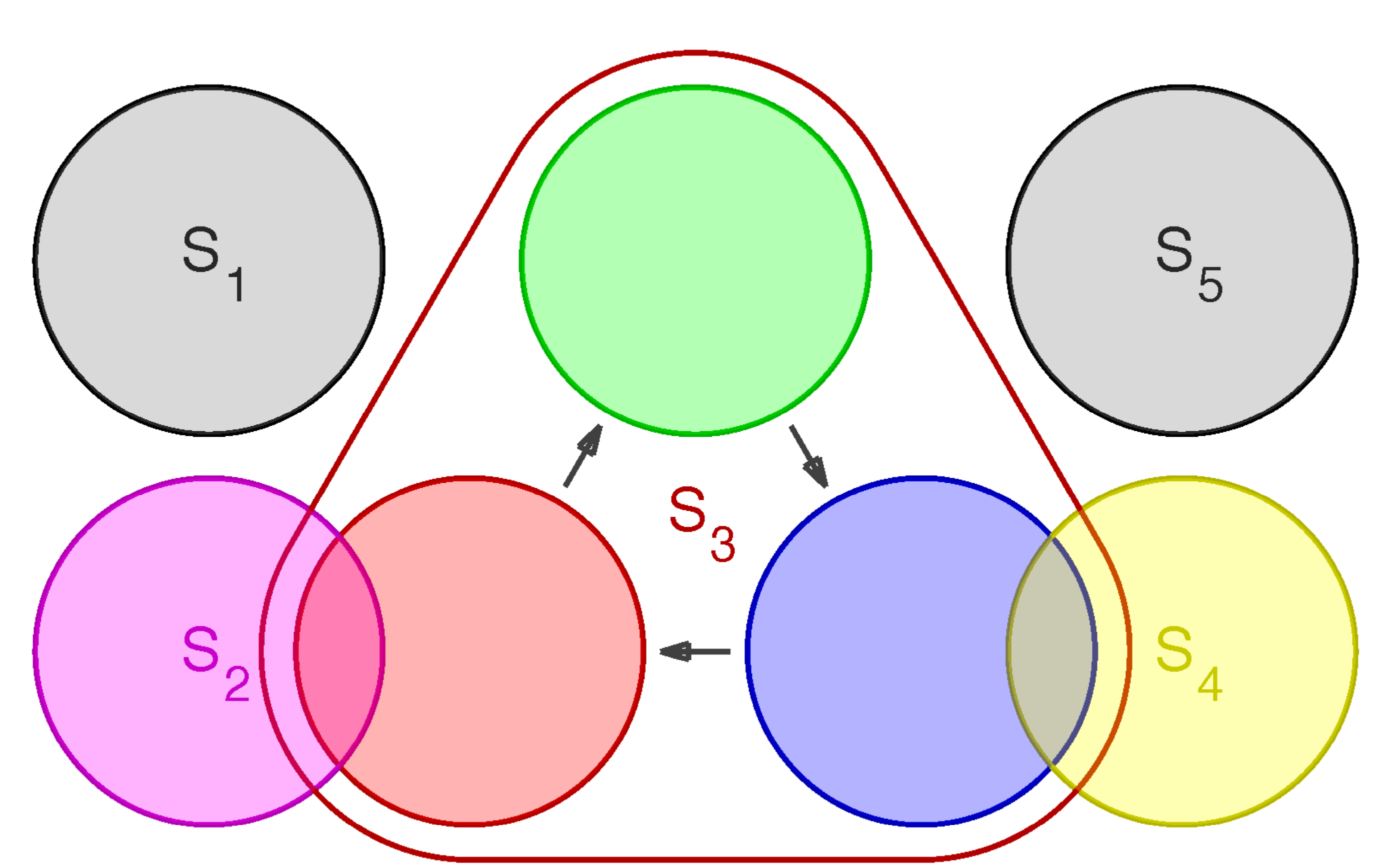} \quad
\hspace{0.1in}
\includegraphics[width = 0.33\textwidth]{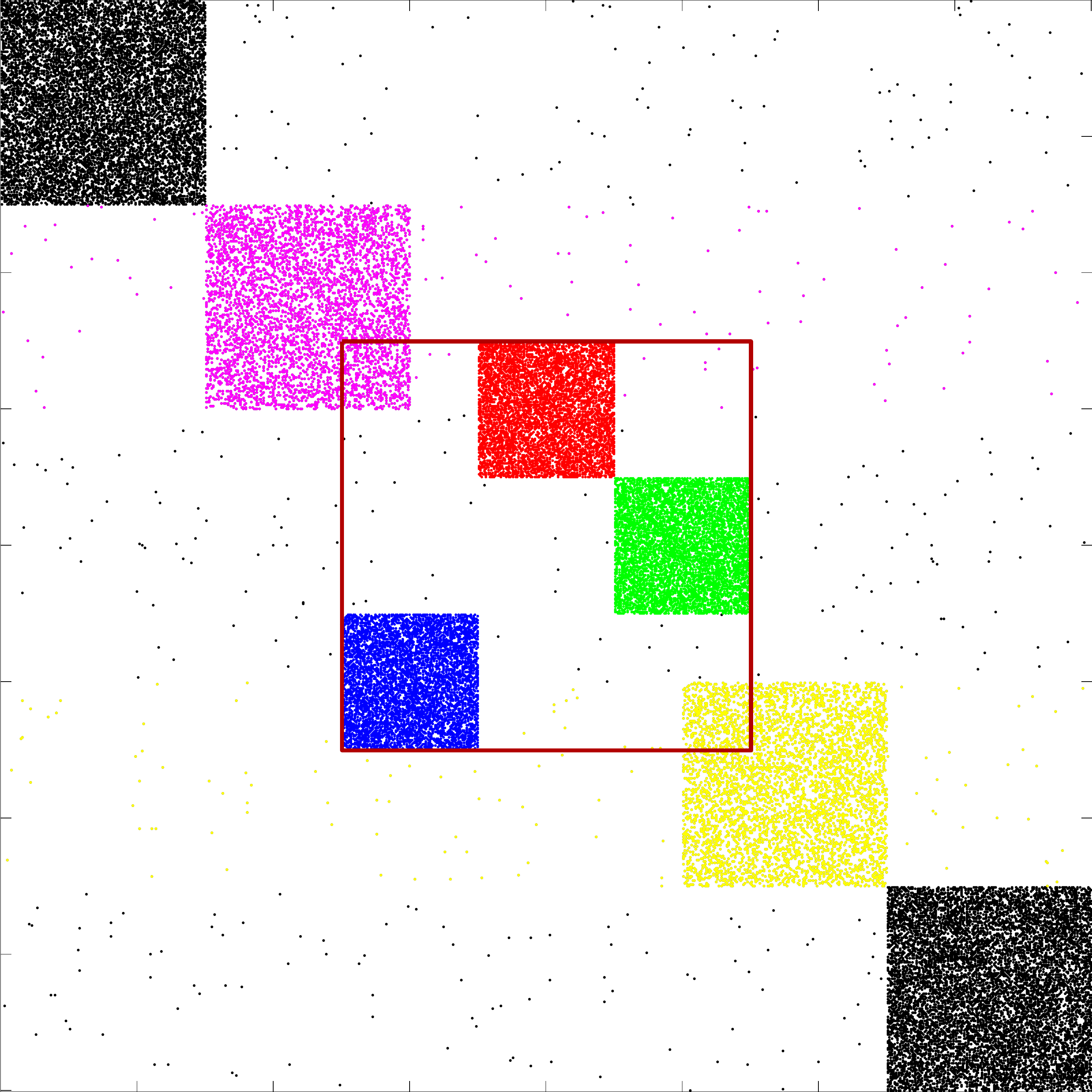}
\caption{A motivating example.   A vertex set diagram is depicted on the left.   There are four classic communities (2 black, one magenta, one yellow) and a single 3-cyclic community (dark red outline with three subsets in red, green and blue).   Two of the cyclic subsets are overlapping with two of the classic communities (red-magenta, and blue-yellow).   A matrix sparsity plot of $A$ for a realization of this setup is on the right.   Outgoing edges are represented by dots of associated color.}
\label{fig:ex0a}
\end{figure}

We consider a network generated from a non-symmetric stochastic block model which contains several dense, classical (modularity-based) communities and also a highly 3-cyclic region, some of which overlap.  The parameters of the stochastic block model used can be found in Example \ref{ex0}.  Finding a particular structure (or type of structure) of interest in a large directed graph can be quite difficult, especially when the graph contains various types of structures.  In the rest of this section, we demonstrate how spectral methods on the underlying undirected graph work well for discovering the classical, dense communities in Example \ref{ex0} but struggle to identify the highly 3-cyclic region.

\newcommand{\numext}{q_{ext}}

\begin{example}{\bf (A Hidden 3-Cyclic Community)}
\label{ex0}
Figure~\ref{fig:ex0a} depicts a particular stochastic block model that has 4 {\em classical communities} (where internal structure is purely random with constant probability) and one {\em 3-cyclic community} (where internal structure is largely dominated by edges cycling through three different subsets of vertices in order).   Two of the classical communities overlap with portions of the 3-cyclic community.   The overlap and internal structure can all be represented with a single stochastic block model with $b = 9$ blocks.   The edge probabilities are set to 0.4 for the non-overlapping classical communities, to 0.2 for the overlapping classical communities, to 0.5 for the cyclic community structure, and there is a background noise probability of 0.001.   The classical communities have 150 vertices and the cyclic community has 3 sets of 100 vertices.   The right side of Figure~\ref{fig:ex0a}  shows the sparsity structure of the adjacency matrix of a graph sampled from this model.   Additionally, we consider adding more of the non-overlapping, external communities.    Figure~\ref{fig:ex0a} depicts the case where there are 2 non-overlapping, classical (external) communities; we will analyze the cases with 8 and 14 as well.  These cases are referred to as $\numext = 2, 8,$ and 14.  We will return to this example several times throughout this paper.
\end{example}

Suppose one is given a graph such as that from Example \ref{ex0} with vertices not ordered by their community blocking, and no knowledge of the number of blocks or block sizes.    A common topological data mining goal would be to completely recover a plausible generative model (learning the blocking and all the associated probabilities).   As a general problem on stochastic block models, this endeavor is quite difficult; particularly when the number of blocks (which define classes of nodes) is quite large, the block interactivity is diverse, and blocks overlap in various ways.   Moreover, when one tries to do so with a real-world digraphs it is often the case that no simple model is plausible.

However, when one is interested in one or more specific types of graph topology, it may be unnecessary to understand the structure in the graph as a whole.   For example, in the case where we want to detect the 3-cyclic communities within a graph (e.g. $S_3$ in Figure~\ref{fig:ex0a}), the identification of communities of other types in the graph may be unimportant.  We spend the rest of this section briefly reviewing an existing SVD-based technique that can be used to recover the structure of interest for the example.   We discuss the limitations for this approach before moving on to present our complex eigenpair-based technique.


\subsection{Spectral Embedding via SVD}

\begin{figure}
\centering
\includegraphics[width = 0.318\textwidth]{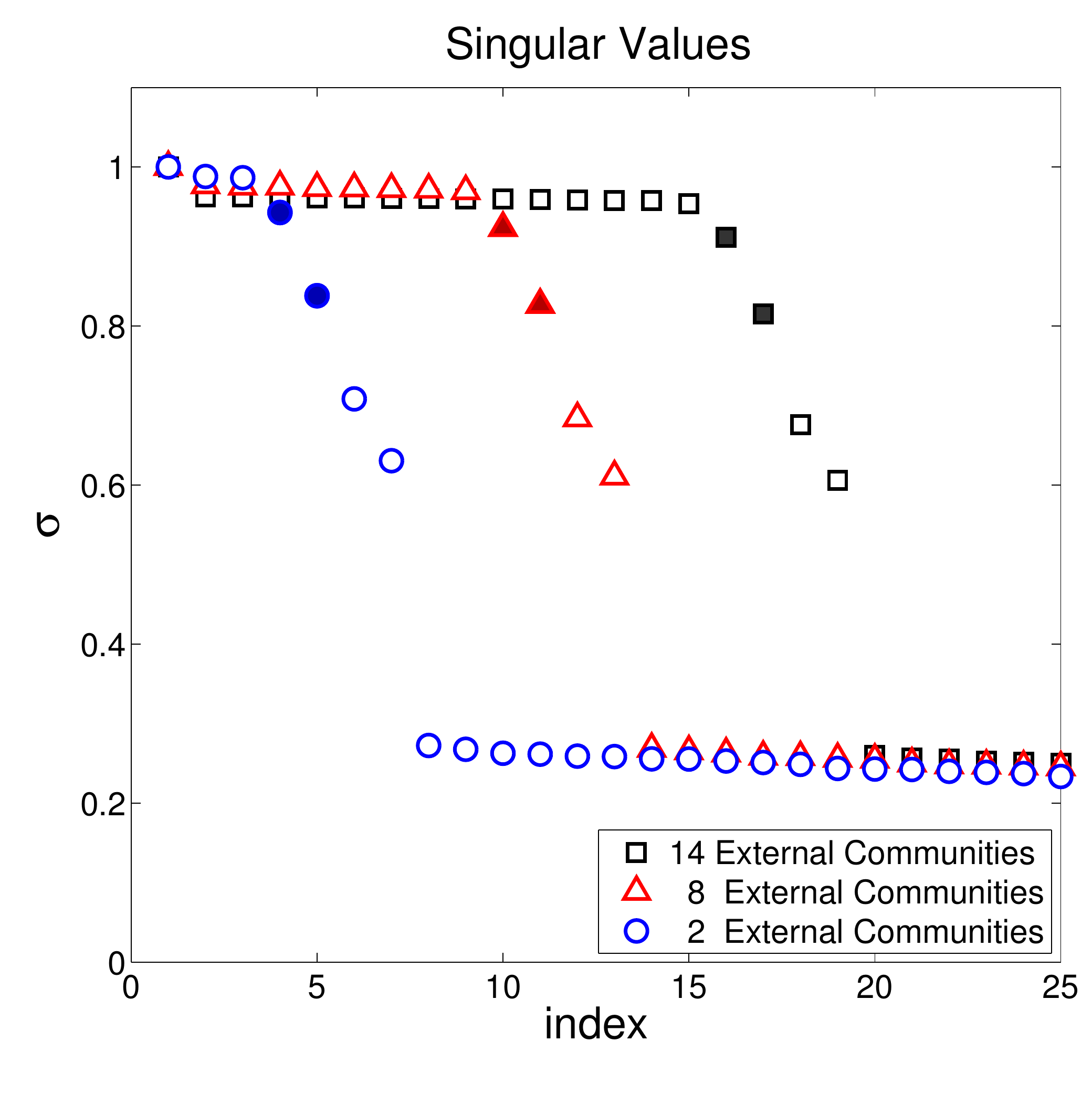} 
\hspace{0.05in}
\includegraphics[width = 0.31\textwidth]{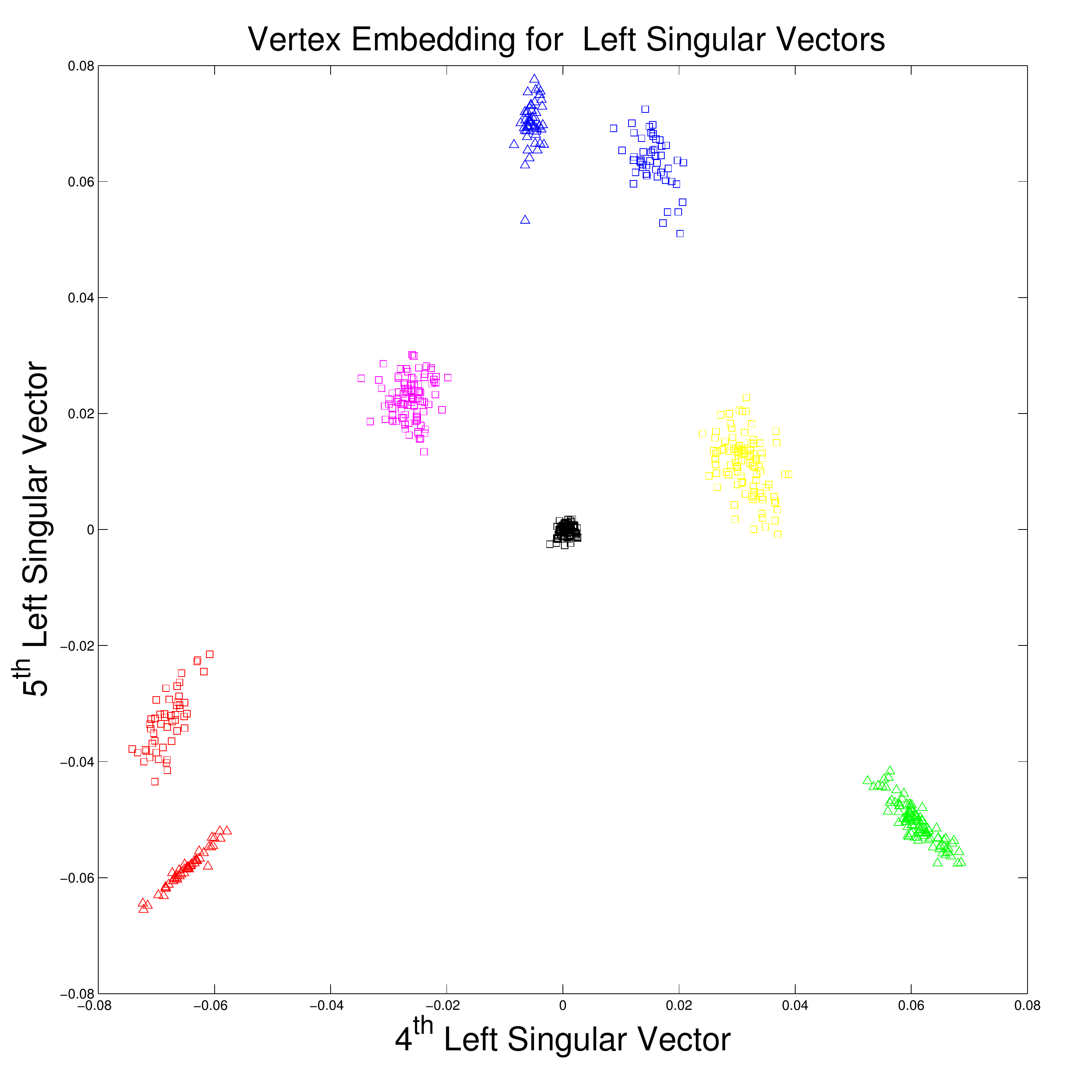}
\hspace{0.05in}
\includegraphics[width = 0.31\textwidth]{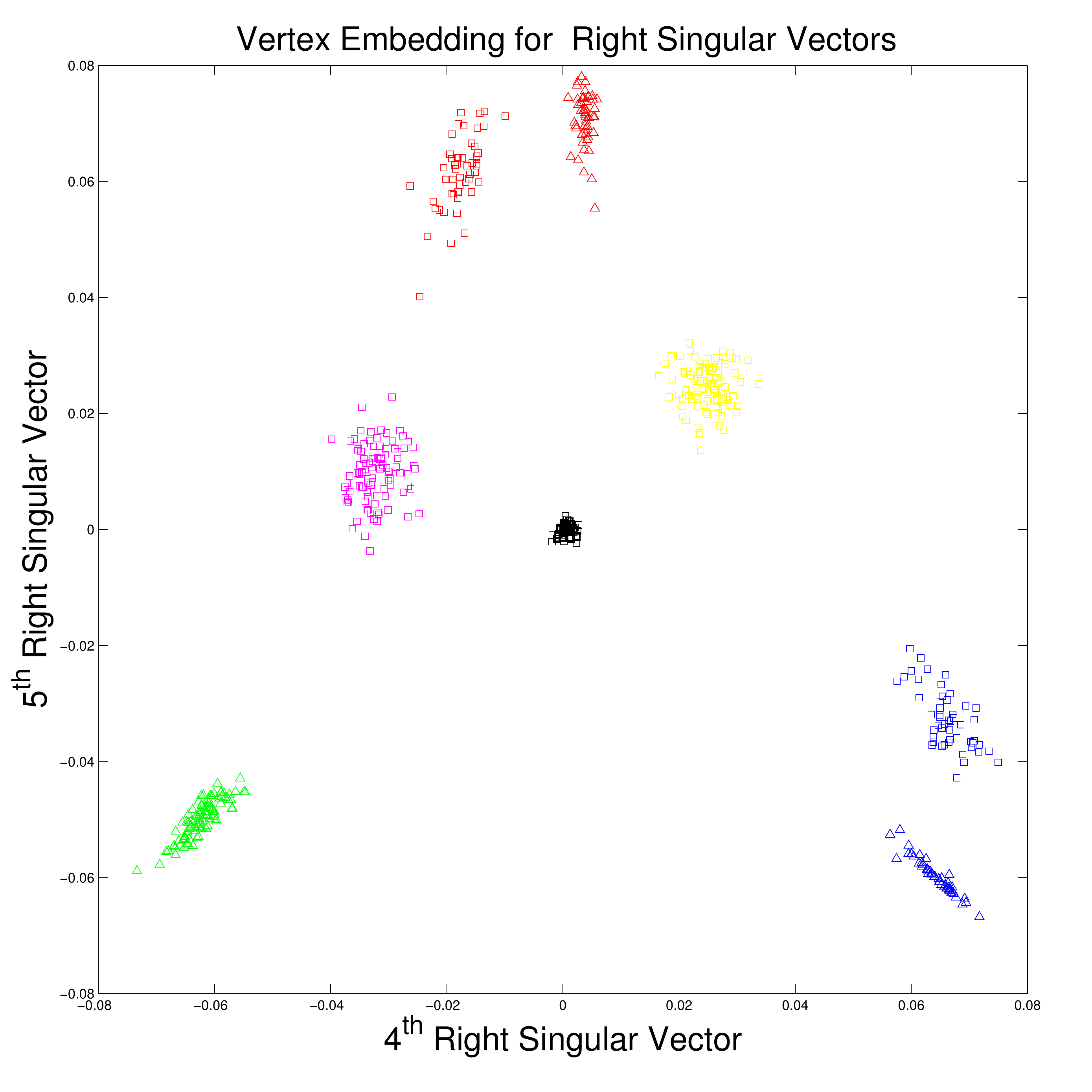} 
\caption{\small On the left are the top 25 singular values of a scaled adjacency matrix $D_r^{-1/2} A D_c^{-1/2}$ related to the graph from Example~\ref{ex0} with $ \numext = 2, 8,$ and 14 classical communities which do not overlap with the highly 3-cyclic community.  The middle plot is a two-dimensional vertex embedding from left singular vectors associated with the fourth- and fifth-largest singular values of the 2 community case (as marked in the left plot with solid blue circles).   The right plot is the respective embedding from the corresponding right singular vectors.   The solid-filled red triangles and solid-filled black squares on the left plot show the singular values for which the embeddings associated with left and right singular vectors clearly indicate the 3-cyclic structure for the cases of $\numext = 8$ and 14, respectively (not shown).   The color scheme for the spectral coordinates matches that of Figure~\ref{fig:ex0a}: in particular the three subclasses of the 3-cyclic community are red green and blue.  The red/green/blue triangles represent vertices that are internal to those classes and the red/blue squares represent vertices that overlap with the classical community structure (magenta and yellow squares).   Black coordinates are from the non-overlapping classical communities.} 
\label{fig:ex0b}
\end{figure}


For a stochastic block model with $b$ blocks, the DI-SIM algorithm \cite{RoQiYu15} uses a rank-$s$ ($s = \cO(b)$) SVD factorization to embed vertices in $s$-dimensional space, where spatial clustering algorithms are employed to find the blocking.   Given the full SVD is $A = U \Sigma W^*$, the rank-$s$ SVD is given by $U_s \Sigma_s W_s$, where matrices $U_s$ and $W_s$ are given by the first $s$ columns of $U$ and $W$ and $\Sigma_s = {\rm diag}(\sigma_1, \sigma_2, \ldots, \sigma_s)$.  These matrices are are computed and a dual spectral embedding of each vertex is available.   The coordinates for vertex $i$ are the $i$th rows of these matrices: $U_s^t \be_i$ and $W_s^t \be_i$.   A planar projection for each of these embeddings for Example~\ref{ex0} is visible in the middle and right plots of Figure~\ref{fig:ex0b}. 

Note that it is typically useful to scale rows and columns of $A$ and/or center $A$ via low-rank correction.   The results in Figure~\ref{fig:ex0b} use the SVD of $\Ah = D_r^{-1/2} A D_c^{-1/2}$, where $D_r = \mbox{diag}(A\bfo)$ and $D_c = \mbox{diag}(\bfo^t A)$ (this ensures $\|A\|_2 = 1$, as in \cite{Ro11}).

We observe that the SVD of $A$ is directly related to the spectral embedding for an undirected bipartite graph,

\beq
\label{eqn:Abp}
A_{bp} =
\left[
\barr{cc}
O & A \\
A^t & O
\earr
\right] = 
\frac{1}{2}
\left[
\barr{rr}
U &  U \\
W & -W
\earr
\right]
\left[
\barr{rr}
\Sigma &  O \\
O & -\Sigma
\earr
\right]
\left[
\barr{rr}
U &  U \\
W & -W
\earr
\right]^*.
\eeq

(Note that the SVD of the scaled matrix $\Ah = D_r^{-1/2} A D_c^{-1/2}$ can similarly be shown to related to the eigendecomposition of the normalized Laplacian matrix associated with $A_{bp}$.)  The graph associated with $A_{bp}$ has two copies of the vertex set, $\cV_r = \{1, ..., n\}$ and $\cV_c = \{n+1, ..., 2n\}$.   Each edge from the original directed graph is also rewired to connect vertices in $\cV_r$ to those in $\cV_c$: for each $(i,j) \in E$, we have $(i, n+j) = (n+j,i)$ in the undirected bipartite graph.  This connection allows us to reason about the SVD of a nonsymmetric matrix through the spectral decomposition of a symmetric matrix.  The graph associated with $A_{bp}$ may be disconnected (even if $G$ is strongly connected).   This can cause some amount of degradation of structures abundant in cycles of length 3 or longer, due to the inclusion of backwards edges; powers of $A_{bp}$ contain $A^t A$ and $AA^t$.   Example \ref{ex0a} contains an simple extreme example which demonstrates this.

\begin{example}
\label{ex0a}
Let $S_1 \in \{0,1\}^{n \times n}$ be a {\em permutation matrix} (there is exactly one $1$ in each row and each column).    For $A = S_1$, the associated graph is a union of disconnected cycles.   Yet, the bipartite graph $A^{(1)}_{bp}$ (formed as in Equation~(\ref{eqn:Abp})) is a collection of $n$ disconnected reciprocal edges on $2n$ vertices.   If $S_2$ is another permutation matrix that is not isomorphic to $S_1$, the respective $A^{(2)}_{bp}$ will still be isomorphic to $A^{(1)}_{bp}$.   All information about the number of cycles and the cycle length(s) is completely lost in an SVD factorization.   
\end{example}

Returning to Example~\ref{ex0} and $\numext =2, 8, 14$, we calculated the rank-25 SVD, and plotted the singular values in the left of  Figure~\ref{fig:ex0b}.   For $\numext=2$, we see there is a large gap between the seventh and the eighth singular value (and we have seven dominant classes of vertices).   We embed in the 7 dimensional space associated with the the left singular vectors. For this small example, {\tt dbscan} \cite{EsKrSaXu96} is easily tunable to accurately recover all 9 blocks within the 7d space.   After looking at several projections into two and three dimensional space we noticed that most of the separation of the blocks is given by the planar projection onto 
the left singular vectors associated with the fourth and fifth singular values, which are plotted in the middle of Figure~\ref{fig:ex0b}.   This projection also has the attractive feature that vertices in blocks corresponding to the highly-cyclic community are most easily separable from the rest of the vertices.   Follow on analysis can be used to determine that these blocks do compromise a single 3-cyclic community.   

In the $\numext = 8$ and 14 cases, however, we note that the separation in singular values happens at a higher index (13 and 19, respectively) and we consider embeddings into 13 and 19 dimensional space.   Using spatial clustering algorithms, it is much more difficult to correctly resolve the blocks associated with the highly-cyclic structure in these high-dimensional embeddings.   For $\numext=8$, we found that planar projections from singular vectors associated with the $10^{th}$ and $11^{th}$ largest singular values (marked with solid-red triangles on the left of Figure~\ref{fig:ex0b}) are highly useful for resolving the highly-cyclic structure (in fact the corresponding embeddings are qualitatively identical to those in Figure~\ref{fig:ex0b}, with the additional external communities also embedded near the origin).     For $\numext=8$, the $16^{th}$ and $17^{th}$ singular values (marked with solid-black squares on the left of Figure~\ref{fig:ex0b}) gave similar useful embeddings.   We used considerable knowledge of the desired structure to find these planar projections.

This example clearly demonstrates that the SVD approach can be somewhat attractive for detection of highly-cyclic structures within stochastic block models having few blocks.   The primary drawback is that one needs to compute $O(b)$ singular-value triplets to robustly resolve the structure.    Then spatial clustering must be performed in this very high-dimensional space.   Lastly, the follow on analysis is more difficult with many more blocks.   This poses severe difficulties for the scalability of the SVD approach ( e.g. if one is looking for a small number of highly 3-cyclic structures in a large graph with thousands of classical communities, one may have to dig fairly deep into the SVD to pull the blocks out).   This does not indicate that information in the SVD cannot ever be efficiently used to detect these structures when the number of blocks is high, we are merely observing drawbacks of the current out-of-the-box approaches for this endeavor.  In fact, we take a moment to catalogue some potentially powerful observations made while tinkering with Example~\ref{ex0}.

%

\begin{remark} {\bf (Interesting observations of SVD and highly-cyclic structure)}
We list a few attractive aspects of singular value embeddings for detecting highly-cyclic structure.
\bit
\item There is high potential in using regions of $A_{bp}$ where average path lengths are increased over those of $A$ as a indicator of highly-directed structure.
\item Coordinates of vertices in classical community structure are mapped to similar locations in the two embeddings associated with the left singular vector and  the right singular vector. 
\item Coordinates of vertices involved in cyclic community structure are mapped to dissimilar locations in the two embeddings associated with the left singular vector and the right singular vector.   In Figure~\ref{fig:ex0b} their locations are rotated one-third of the circle.   It is likely these types of embedding properties can be leveraged for detecting various classes of community structure or highlighting regions of highly-directed flow.
\eit
\end{remark}

The next sections discuss how a planar embedding from a single complex-vauled eigenpair can overcome some of the challenges associated with the SVD approach.  The approaches developed in this paper are applied to the network from Example \ref{ex0} in Section \ref{sec:experiments1}.



\section{The 3-cyclic case}
\label{sec:global_structure}
The simplest example of highly 3-cyclic structure is that of a purely 3-cyclic graph, $G=(V,E)$.  Here, the vertex set can be partitioned into three non-overlapping sets  $V = V_0 \cup V_1 \cup V_2$ and $E = \{(i,j)| i \in V_l, j \in V_{(l+1) \mod 3} \}$.  That is, edges only flow in a directed 3-cycle around supernodes $V_0, V_1,$ and $V_2$.  In this section, we examine the eigenvalues and eigenvectors of graphs with this purely 3-cyclic structure.

\subsection{Stateful graphs and 3-cyclic structure}

Given $G(V, E)$, one way to identify purely (or highly) 3-cyclic structure in the graph is to build a stateful graph, $\cG_3(\cV, \cE)$.  This is done in the following way: for each $i \in V$, make three copies, $i$, $i+n$, and $i+2n \in \cV$, called the {\em red}, {\em green}, and {\em blue} versions of $i$, respectively.    Then, for each $(j, i) \in E$, we also have three copies, $(j+2n, i), (j, i+n)$, and $(j+n, i+2n) \in \cE_3$.   See Figure~\ref{fig:cyc3} for a visual example.

\begin{figure}[h]
\centering
\includegraphics[draft = false, width = 3in]{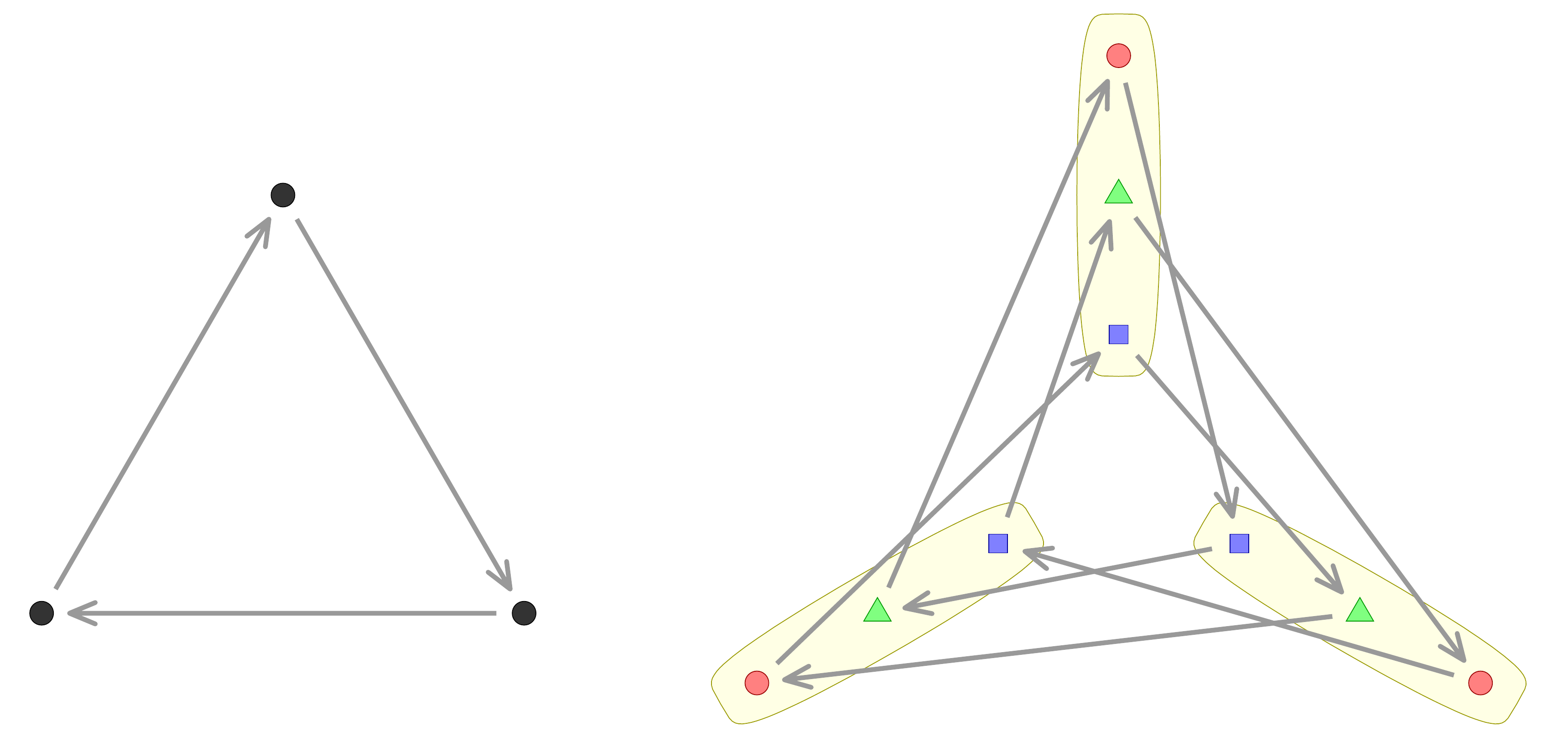} 
\includegraphics[draft = false, width = 3in]{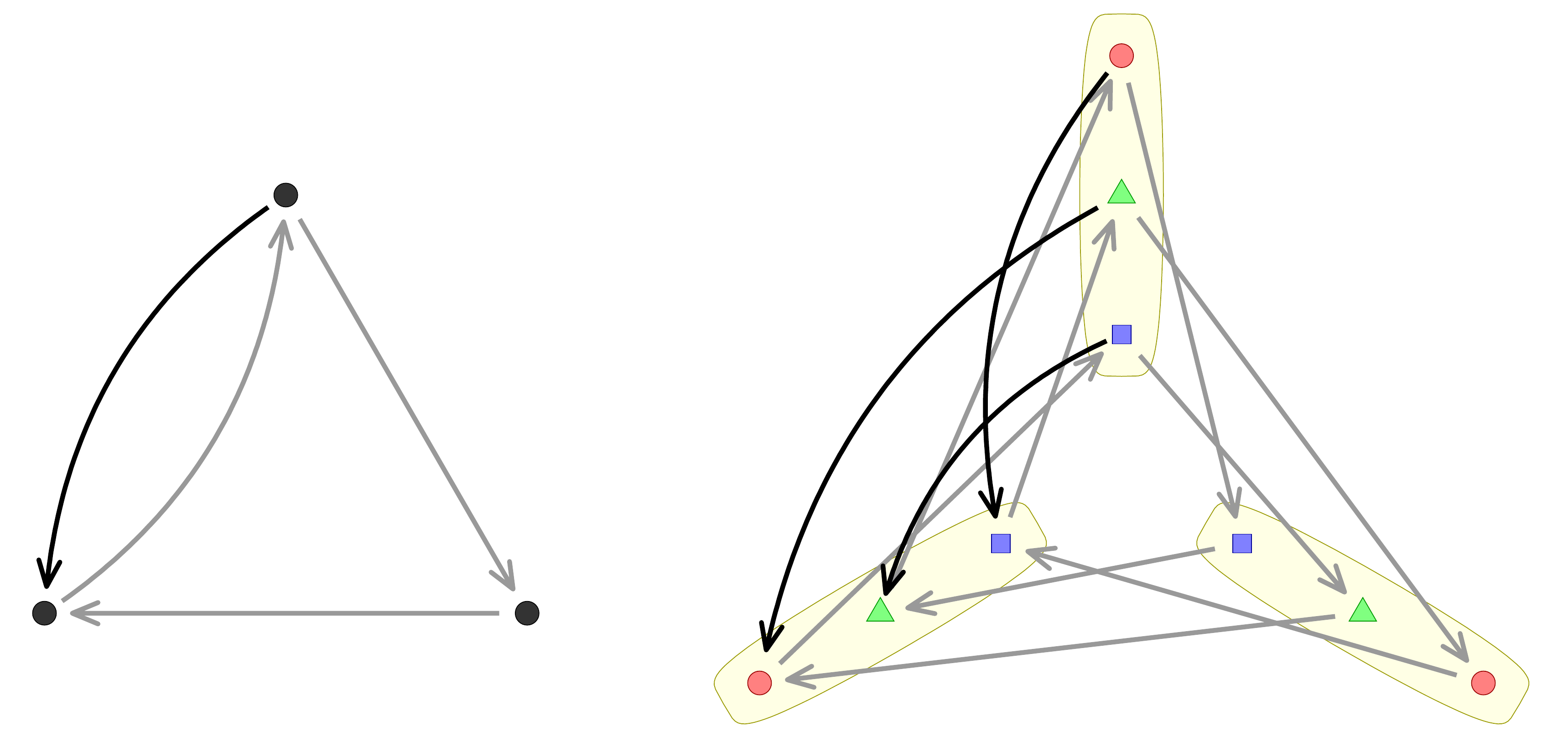} 
\caption{(left) a directed 3-cycle and its associated (center-left)  3-color state space; (center-right) a 3-cycle with a single reciprocal edge, and (right) a 3-cycle with a single reciprocal edge, 3-color state space.   }
\label{fig:cyc3}
\end{figure} 

The stateful graph provides a new topology where highly 3-cyclic structure in the original graph $G$ becomes evident.   Consider a portion of $G$ that is highly 3-cyclic:   there exist many short paths from a vertex $i$ around a 3-cycle back to itself in $G$.   However, in the stateful graph, we may not have any short paths from the red $i$ to  other colors of the same vertex, $i+n$, $i+2n$.   In fact, it is necessary that a path leave the 3-cyclic structure, either encountering a reciprocal edge or a cycle larger than 3, for $i$ to reach $i+n$ or $i+2n$.   Thus, if the {\em average self-distances} between $i$ and $i+n$ are relatively large, then $i$ is part of highly 3-cyclic structure.   Again, see Figure~\ref{fig:cyc3}.      


This concept is general, and a wide array of graph computations could be employed on $\cG_3$.    In this paper, we focus on spectral methods and  demonstrate some attractive properties of eigenvector techniques for identifying highly cyclic structure in both $G$ and $\cG_3$.

Let $A$ be the adjacency matrix associated with a digraph $G$.   Let $D$ be the out-degree matrix, $D = $diag$(\bfo^tA)$, so $B = D^{-1}A$ is the stochastic transition matrix associated with $G$.  Then, the stochastic transition matrix associated with $\cG_3$ is
$$
B_3 = 
\left[ 
\barr{ccc}
0 & 1 & 0 \\
0 & 0 & 1 \\
1 & 0 & 0 
\earr
\right]
 \otimes
 B=
\left[ 
\barr{ccc}
0 & B & 0 \\
0 & 0 & B \\
B & 0 & 0 
\earr
\right],
$$
where $\otimes$ denotes the Kronecker matrix product.  Because each vertex has an identical statespace and the edges are wired in a uniform way, the eigendecomposition of $B_3$ has a simple relationship with that of $B$, specifically $\sigma(B_3)$ is three rotated copies of $\sigma(B)$.

\begin{theorem}\
\label{thm:B3eig}
Let $(\bv, \lambda)$ be a right  eigenpair for $B$, $B \bv  = \lambda \bv$.     For $p=0,1,2$, we have
\beq
\label{eqn:B3eig}
B_3 
\left[ 
\barr{c}
\theta_{0,3} \bv \\
\theta_{p,3} \bv \\
\theta_{2p,3} \bv
\earr
\right] 
=
(\theta_{p,3} \lambda) 
\left[ 
\barr{c}
\theta_{0,3} \bv \\
\theta_{p,3} \bv \\
\theta_{2p,3} \bv
\earr
\right].
\eeq
\end{theorem}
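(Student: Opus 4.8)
The plan is to recognize $B_3$ as a Kronecker product $C \otimes B$, where $C$ is the $3\times 3$ cyclic permutation matrix
$$
C = \left[ \barr{ccc} 0 & 1 & 0 \\ 0 & 0 & 1 \\ 1 & 0 & 0 \earr \right],
$$
and to exploit the mixed-product property: if $C\bu = \mu\bu$ and $B\bv = \lambda\bv$, then $(C\otimes B)(\bu\otimes\bv) = (C\bu)\otimes(B\bv) = \mu\lambda\,(\bu\otimes\bv)$. The candidate vector in~(\ref{eqn:B3eig}) is exactly $\bu\otimes\bv$ for $\bu = (\theta_{0,3},\theta_{p,3},\theta_{2p,3})^t$ (its three stacked blocks are $\theta_{0,3}\bv$, $\theta_{p,3}\bv$, $\theta_{2p,3}\bv$), so the whole theorem collapses to the claim that this $\bu$ is an eigenvector of $C$ with eigenvalue $\theta_{p,3}$.

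First I would pin down the eigenpairs of $C$. Applying $C$ shifts the entries of $\bu$ cyclically, giving $C\bu = (\theta_{p,3},\theta_{2p,3},\theta_{0,3})^t$. Using $\theta_{0,3}=1$, the multiplicativity $\theta_{p_1+p_2,3}=\theta_{p_1,3}\theta_{p_2,3}$, and the periodicity $\theta_{3p,3}=\theta_{0,3}=1$ stated just before the theorem, one checks entrywise that $(\theta_{p,3},\theta_{2p,3},1)^t = \theta_{p,3}\,(1,\theta_{p,3},\theta_{2p,3})^t$, i.e. $C\bu = \theta_{p,3}\bu$ for each $p=0,1,2$. These are the standard Fourier eigenvectors of a cyclic shift.

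Combining the two facts through the mixed-product property gives $B_3(\bu\otimes\bv) = (\theta_{p,3}\lambda)(\bu\otimes\bv)$, which is precisely~(\ref{eqn:B3eig}) once the three blocks are written out. As an independent sanity check one can skip the Kronecker machinery and multiply $B_3$ against the block vector directly: the three resulting blocks are $\theta_{p,3}\lambda\bv$, $\theta_{2p,3}\lambda\bv$, and $\theta_{0,3}\lambda\bv=\lambda\bv$, and these match the blocks of $(\theta_{p,3}\lambda)(\bu\otimes\bv)$ term by term, the bottom one via $\theta_{p,3}\theta_{2p,3}=\theta_{3p,3}=1$.

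There is no substantive obstacle here; the only point demanding care is the exponent bookkeeping modulo $3$, in particular the wraparound $\theta_{3p,3}=1$ that lets the bottom block close the cycle back onto $\lambda\bv$. Accordingly I would state the identities $\theta_{0,3}=1$, $\theta_{p_1+p_2,3}=\theta_{p_1,3}\theta_{p_2,3}$, and $\theta_{3p,3}=1$ explicitly at the outset so the entrywise matching is transparent.
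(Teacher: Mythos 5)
Your proposal is correct and takes essentially the same approach as the paper: the paper likewise observes that the theorem is an instance of the general Kronecker-product eigenpair result and then verifies it by exactly the block-by-block computation you give as your sanity check, with the same wraparound $\theta_{3p,3}=\theta_{0,3}=1$ closing the bottom block. The only difference is one of emphasis---you carry out the Kronecker mixed-product argument in full (exhibiting the Fourier eigenvectors of the cyclic shift $C$), whereas the paper merely cites that general result and relies on the direct verification.
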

\begin{proof}
This is a consequence of a general result regarding Kronecker products of matrices and their eigenpairs.   We verify it in this case for the sake of completeness.   For $r = 0,1,2$, the $(r+1)$-th row block of Equation~(\ref{eqn:B3eig}) is verified by 
$$
B \theta_{rp+p,3} \bv   =   \theta_{rp+p,3} B \bv \,=\, \theta_{rp+p,3} \lambda \bv \,=\,  (\theta_{p,3} \lambda) (\theta_{rp,p} \bv ).
$$
\end{proof}


\subsection{Algebraic 3-periodicity}

In terms of a random walk on the stateful graph, if there is an eigenvalue in $\sigma(B)$ such that $\lambda \approx \theta_{1,3}$, then the second eigenvalue in $\sigma(B_3)$ has a value close to, but not equal to, $1$.   The associated eigenvector is a slowly mixing mode with respect to $\cG_3$.    This happens when highly 3$k$-cyclic structure is present.  Thus, the eigenvalues of $B$ and $B_3$ can be used to identify cyclic structure in directed graphs.   We prove the $\lambda = \theta_{1,3}$ case below.

\begin{theorem} Let the graph associated with $B$ be strongly connected. Then, $\theta_{1,3} \in \sigma(B)$ if and only if the graph associated with $B$ is $3k$-cyclic, for some positive integer $k$.
\label{thm:3-cyclic_evalues}
\end{theorem}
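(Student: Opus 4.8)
The plan is to prove both implications directly from the eigenvalue equation $B\bv = \theta_{1,3}\bv$, exploiting that $B$ is row stochastic and irreducible so that $\theta_{1,3}$ necessarily lies on the peripheral circle $|\lambda| = 1 = \rho(B)$. The backward direction is an explicit construction of an eigenvector; the forward direction is a Perron--Frobenius argument showing that a unit-modulus eigenvalue forces a rigid, $3$-periodic phase structure on the eigenvector, from which the cyclic partition can be read off.

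For the easy direction, suppose the vertex set admits a cyclic partition $V = V_0 \cup \cdots \cup V_{m-1}$ with $m = 3k$ and every edge running from $V_l$ to $V_{(l+1)\bmod m}$. I would define $\bv$ by $v_i = \theta_{1,3}^{\,l}$ for $i \in V_l$; since $\theta_{1,3}^{\,l}$ depends only on $l \bmod 3$ and $3 \mid m$, this is consistent around the cycle. For $i \in V_l$ every out-neighbour lies in $V_{l+1}$, so $(B\bv)_i = \theta_{1,3}^{\,l+1}\sum_j b_{ij} = \theta_{1,3}\, v_i$ because each row of $B$ sums to one. Hence $B\bv = \theta_{1,3}\bv$ with $\bv \neq \bfz$, so $\theta_{1,3} \in \sigma(B)$.

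For the forward direction, assume $B\bv = \theta_{1,3}\bv$ with $\bv \neq \bfz$. The first and main step is to show that $\bv$ has constant modulus. Taking absolute values componentwise gives $|v_i| = |(B\bv)_i| \le \sum_j b_{ij}|v_j|$, i.e. $|\bv| \le B|\bv|$ entrywise. Letting $\pi$ be the stationary distribution (the left eigenvector with strictly positive entries, $\pi^t B = \pi^t$, guaranteed by Perron--Frobenius), multiplication by $\pi^t$ collapses the inequality, $\pi^t|\bv| \le \pi^t B |\bv| = \pi^t|\bv|$, forcing $B|\bv| = |\bv|$. Because $1$ is a simple eigenvalue of $B$ with positive right eigenvector $\bfo$, we conclude $|\bv| = c\,\bfo$ for some $c>0$, so after rescaling $|v_i| = 1$ for all $i$. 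I expect this peripheral-eigenvalue step to be the \emph{crux} of the argument.

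With constant modulus in hand, equality must hold in each triangle inequality $|\sum_j b_{ij}v_j| = \sum_j b_{ij}|v_j|$, which for unit-modulus summands forces all $v_j$ with $(i,j) \in E$ to coincide; together with $\sum_j b_{ij} = 1$ this yields the edge relation $v_j = \theta_{1,3}\,v_i$ for every $(i,j) \in E$. Consequently the phase advances by the factor $\theta_{1,3}$ along each edge, so any closed walk of length $\ell$ gives $v_i = \theta_{1,3}^{\,\ell} v_i$ and hence $3 \mid \ell$; thus $3$ divides every cycle length and therefore the period $h = \gcd$ of the cycle lengths, giving $h = 3k$. Finally I would define a colouring $c(j)$ as the length modulo $3$ of any walk from a fixed base vertex to $j$: the edge relation makes this independent of the chosen walk, and setting $V_r = c^{-1}(r)$ produces a partition in which all edges go from $V_r$ to $V_{(r+1)\bmod 3}$. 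This exhibits the graph as $3k$-cyclic and completes the equivalence.
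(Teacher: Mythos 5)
Your proof is correct, but it takes a genuinely different route from the paper's. For the backward direction the paper forms $B^3$, observes it is block diagonal with three row-stochastic diagonal blocks so that $1 \in \sigma(B^3)$ has multiplicity at least three, and then uses simplicity of the Perron root plus conjugate pairing to conclude $\theta_{1,3} \in \sigma(B)$; you instead write down the eigenvector $v_i = \theta_{1,3}^{\,l}$ on $V_l$ explicitly, which is more constructive and immediate (and produces exactly the vector the paper later uses for spectral coordinates). For the forward direction the paper invokes the full Perron--Frobenius theory of imprimitive matrices as a black box: the period $p$ equals the number of peripheral eigenvalues, each of which is a $p$-th root of unity, hence $3 \mid p$, and the cyclic permutation normal form then exhibits the $3k$-cyclic structure. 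You instead argue elementarily from the eigenvalue equation: constant modulus of the eigenvector (your use of the positive left Perron vector to collapse $|\bv| \le B|\bv|$ is cleaner than the max-entry argument the paper uses for the analogous step in Theorem~\ref{thm:sorting}), equality in the triangle inequality forcing $v_j = \theta_{1,3} v_i$ across every edge, and a walk-length-mod-3 colouring, well defined by strong connectivity, yielding the partition $V_0, V_1, V_2$ directly. Your route needs only simplicity of the Perron eigenvalue and positivity of the left eigenvector, and it simultaneously establishes the phase/modulus rigidity that the paper re-derives separately in Theorem~\ref{thm:sorting}; what it does not give is the finest cyclic decomposition --- the paper's argument identifies the actual period $p = 3k$, while yours produces the coarse $3$-partition (i.e.\ $k=1$), which is all the statement requires.
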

\begin{proof}
If $B$ is $3k$-cyclic for some positive integer $k$, then $B$ can be written in the following block form: $$B = \left[ 
\barr{ccc}
0 & 0 & C_2 \\
C_0 & 0 & 0 \\
0 & C_1 & 0 
\earr
\right] $$
where $C_0, C_1,$ and $C_2$ are row stochastic.  Now, $$B^3 = \left[ 
\barr{ccc}
C_2C_1C_0 & 0 & 0 \\
0 & C_0C_2C_1 & 0 \\
0 & 0 & C_1C_0C_2 
\earr
\right].$$ 
is also row stochastic and has eigenvalue $\lambda = 1$ of multiplicity of at least three.  Now, there are at least three eigenvalues of $B$ of the form $\lambda  = \sqrt[3]{1}$.  However, since $B$ is irreducible, by the Perron-Frobenius theorem, $\lambda_1 = 1$ is an eigenvalue of $B$ with multiplicity 1.  The remaining eigenvalues of the form $\lambda  = \sqrt[3]{1}$ must come in pairs of $\theta_{1,3}, \overline{\theta_{1,3}}=\theta_{2,3}$.  Thus, $\theta_{1,3}$ is an eigenvalue of $B$.


Next, if $\theta_{1,3} \in \sigma(B)$, then $\theta_{2, 3} \in \sigma(B)$, due to fact that complex eigenvalues of real matrices come in conjugate pairs and that $\overline{\theta_{1,3}} = \theta_{2, 3}$.  Additionally, since $B$ is irreducible, the Perron-Frobenius theorem states that the period of $B$ is given by $p$, where $p$ is the the number of eigenvalues $\lambda$ with $|\lambda| = \rho(B) = 1$ and each of these eigenvalues is a $p$th root of unity.  As $\theta_{1,3}$ is only a $p$th root of unity when $p = 3k$ for some integer $k>0$, the period of $B$ is $p=3k$.  Since $p>1$, the Perron-Frobenius theorem also states that there exists a permutation matrix $P$ such that   
$$
PBP^{-1}= \left[ 
\barr{ccccc}
0 & 0 & 0 & \cdots & C_{p-1} \\
C_0 & 0 & \cdots & 0 & 0 \\
0 & C_1 & 0 & \cdots & 0 \\
 &  & \ddots &  &  \\
0 & 0 & \cdots & C_{p-2} & 0
\earr
\right]. 
$$
Thus, $B$ is $3k$-cyclic.
\end{proof}

\begin{remark}{\sc ($3k$-Cyclic Structure)}
Although we are concerned with finding $3$-cyclic structure, all of the methodologies presented here discuss finding regions of $3k$-cyclic structure for some positive integer $k$.  This is due to the fact that for $k>1$, any $3k$-cyclic structure with vertex groups $V_0, V_1, V_2, \ldots V_{3k-1}$ can also be viewed as $3$-cyclic with vertex groups $V_0 \cup V_3 \cup \ldots V_{3k-3}$, $V_1 \cup V_4 \cup \ldots \cup V_{3k-2}$, and $V_3 \cup V_5 \cup \ldots \cup V_{3k-1} $ and the row stochastic adjacency matrix will also have $\theta_{1,3}$ as an eigenvalue.
\end{remark}

\subsection{Spectral Coordinates}
\label{subsec:spectral}

Although it is nice to identify the existence of highly cyclic structure in a graph, it is often more important to classify the nodes of a network based on their participation in this structure.  This can be done using the eigenvectors associated with $\lambda = 1, \theta_{1,3},$ and $\theta_{2,3}$.  Let $B$ be strongly connected and 3-cyclic and let $V_0, V_1, V_2 \subset V$ be the three sets of nodes which make up the nontrivial strongly connected components of the graph of $B^3$.   By Perron-Frobenius theorem, for each of $V_0, V_1, V_2$ there exists both a left and a right real-valued eigenvector of $B^3$ associated with $\lambda=1$ that is positive on the nodes in the component and zero outside.   For the right eigenvectors, let the positive part on each $V_i$ be labeled $\bv_i$.    Then (potentially after node relabeling), the eigenspace of $B^3$ associated with $\lambda = 1$ is spanned by 

$$
\left[
\barr{ccc}
\bv_0 & 0 & 0 \\
0 & \bv_1 & 0 \\
0 & 0 & \bv_2 
\earr
\right].
$$   
Since $B^3$ is row stochastic and these are eigenvectors associated with the eigenvalue $\lambda=1$, each $\bv_i$ for $i =0,1,2$ must be a constant vector.
 
The the right eigenspaces of $B$ associated with $\lambda =1, \, \theta_{1,3},$ and  $\theta_{2,3}$ are also spanned by this basis.   We can rotate the basis of this span, using the methodology from Theorem \ref{thm:B3eig}, to form an equivalent basis:
$$
\left[
\barr{rrr}
\alpha_0 \bv_0 & \alpha_0 \bv_0 & \alpha_0 \bv_0 \\
\alpha_1 \bv_1 & \theta_{1,3} \alpha_1 \bv_1 & \theta_{2,3} \alpha_1 \bv_1 \\
\alpha_2 \bv_2 & \theta_{2,3} \alpha_2 \bv_2 & \theta_{1,3} \alpha_2 \bv_2 
\earr
\right] \\
$$   
where $\alpha_i > 0$ for $i=0,1,2$ are positive scalers.

Similarly, the left eigenspace of $B^3$ associated with $\lambda =1$ is spanned by
$$
\left[
\barr{ccc}
\bu_0 & 0 & 0 \\
0 & \bu_1 & 0 \\
0 & 0 & \bu_2 
\earr
\right]
$$ 
where $\bu_i \geq 0$although, here, the $\bu_i$'s are not necessarily constant.  The left eigenspace of $B$ associated with $\lambda =1, \, \theta_{1,3},$ and  $\theta_{2,3}$ is also spanned by this basis.  As in the case of the right eigenspace, this basis can be rotated as follows
$$
\left[
\barr{rrr}
\beta_0 \bu_0 & \beta_0 \bu_0 & \beta_0 \bu_0 \\
\beta_1 \bu_1 & \theta_{2,3} \beta_1 \bu_1 & \theta_{1,3} \beta_1 \bu_1 \\
\beta_2 \bu_2 & \theta_{1,3} \beta_2 \bu_2 & \beta_{2,3} \beta_2 \bu_2 
\earr
\right] \\
$$   
where $\beta_i >0$ for $i=0,1,2$ are positive scalers.

\begin{theorem}
Let $G$ be a strongly connected $3k$-cyclic graph with stochastic transition matrix $B = D^{-1}A$ that can be written in block form $$B = \left[ 
\barr{ccc}
0 & 0 & C_2 \\
C_0 & 0 & 0 \\
0 & C_1 & 0 
\earr
\right]. $$ Further, let $\bx$ be a right eigenvector of B associated with eigenvalue $\lambda = \theta_{1,3}$, where $$\bx^* = [(\alpha_0 \bv_0)^*, (\theta_{1,3} \alpha_1 \bv_1)^*, (\theta_{2,3} \alpha_2 \bv_2)^*].$$  Then, the entries of $\bx$ cluster the nodes in the network according to their membership in $C_0$, $C_1$, or $C_2$.
\label{thm:sorting}
\end{theorem}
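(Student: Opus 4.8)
The plan is to use that $\theta_{1,3}$ is a cube root of unity, so that a single eigenvector serves two eigenproblems at once. Since $\theta_{1,3}^3 = \theta_{3,3} = 1$, the relation $B\bx = \theta_{1,3}\bx$ gives $B^3\bx = \theta_{1,3}^3\bx = \bx$, so $\bx$ lies in the $\lambda=1$ eigenspace of $B^3$. Cubing the block form of $B$ yields the block-diagonal matrix
\[
B^3 = \left[\barr{ccc} C_2 C_1 C_0 & 0 & 0 \\ 0 & C_0 C_2 C_1 & 0 \\ 0 & 0 & C_1 C_0 C_2 \earr\right],
\]
whose diagonal blocks are products of row-stochastic matrices and hence row stochastic. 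Because $G$ is strongly connected and $3k$-cyclic, each diagonal block is irreducible (two vertices in a common class $V_i$ are joined by a directed path whose length is a multiple of $3$, i.e. by a walk in the corresponding block of $B^3$); this is exactly the assertion that $V_0,V_1,V_2$ are the nontrivial strongly connected components of the graph of $B^3$. Partitioning $\bx$ conformally into blocks $\bx_0,\bx_1,\bx_2$, each $\bx_i$ is an eigenvector of an irreducible, row-stochastic matrix for eigenvalue $1$; by Perron--Frobenius that eigenspace is one-dimensional and spanned by the constant vector, so $\bx_i = c_i\bfo$ for some $c_i \in \mathbb{C}$. This already proves that the entries of $\bx$ are constant on each class.

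Next I would fix the relative phases of $c_0,c_1,c_2$ by substituting $\bx_i = c_i\bfo$ back into the first-power relation $B\bx = \theta_{1,3}\bx$. Reading off each block row and using $C_j\bfo = \bfo$ collapses the matrix equations to scalar identities of the form $c_i = \theta_{1,3}\,c_{i'}$ linking consecutive classes around the cycle; these force $c_0,c_1,c_2$ to carry the three distinct phases $1,\theta_{1,3},\theta_{2,3}$ in the order dictated by the block structure, which recovers the form stated in the theorem with $\bv_i \propto \bfo$. Geometrically, the complex entries of $\bx$ collapse all of $V_0$ to a single point on one ray through the origin, all of $V_1$ to a point on a second ray, and all of $V_2$ to a point on a third, the three rays being separated by $120^\circ$. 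Thus the entries of $\bx$ place the vertices of each class at a distinct, well-separated location in the complex plane, which is precisely the claimed clustering by membership in $C_0,C_1,C_2$.

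I expect the only genuinely delicate point to be the irreducibility of the diagonal blocks of $B^3$, since this is what guarantees the $\lambda=1$ eigenspace of each block is one-dimensional, and hence that $\bx$ is \emph{exactly} constant on each class rather than merely piecewise constant on some finer partition. This follows from strong connectivity of $G$ together with the $3k$-cyclic block structure, and is already encapsulated in the hypothesis that $V_0,V_1,V_2$ are the strongly connected components of the graph of $B^3$. Once constancy within each class is secured, the phase computation and the resulting geometric separation are routine consequences of row-stochasticity and $\theta_{1,3}^3 = 1$.
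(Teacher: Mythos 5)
Your proof is correct, but it takes a genuinely different route from the one the paper gives for Theorem~\ref{thm:sorting}. The paper's proof is entrywise: writing $\lambda=\rho\exp(\iota 2\pi\phi)$ and $x_i=p_i\exp(\iota 2\pi t_i)$, it applies the triangle inequality $\rho\leq\max_{j\in\cN_i}p_j/p_i$ at a vertex of maximal modulus, uses the equality case of this averaging inequality to force all out-neighbors to share that modulus, propagates constancy of $p_i$ through the graph by strong connectivity, and then observes that an average of unit vectors has modulus one only if all arguments coincide, so the phase must shift by exactly $2\pi/3$ across every edge. Your route---cube $B$ to get $\mbox{diag}(C_2C_1C_0,\,C_0C_2C_1,\,C_1C_0C_2)$, prove each diagonal block irreducible from strong connectivity (your path-lengths-divisible-by-three argument is exactly the right justification, and you correctly identify it as the delicate step), invoke Perron--Frobenius on each row-stochastic block to get a one-dimensional $\lambda=1$ eigenspace spanned by $\bfo$, and back-substitute into $B\bx=\theta_{1,3}\bx$ to pin the relative phases---is essentially the reasoning the paper uses in the prose of Section~\ref{subsec:spectral} \emph{before} the theorem, to justify the basis form that the theorem statement then assumes; but it is not the proof the paper gives of the theorem itself. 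The trade-off is this: your argument is cleaner and more self-contained for the purely cyclic case (it derives the prescribed form of $\bx$ rather than assuming it, and it shows automatically that no entry vanishes, whereas the paper's proof divides by $x_i$), but it is rigid, resting on the exact identities $\lambda^3=1$ and block-diagonality of $B^3$, so none of it survives perturbation. The paper's inequality-based proof is built precisely so it can be perturbed: Lemmas~\ref{lemma:fuzzy1} and~\ref{lemma:fuzzy2} reuse its equations~(\ref{eqn:rhoandt}) and~(\ref{eqn:rho}) with $\rho=1-\epsilon$ to handle the fuzzy case, which is the paper's real objective.

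One small caution: for the block orientation given in the statement ($C_2$ in the upper-right), your back-substitution yields phases $(1,\theta_{2,3},\theta_{1,3})$ on $(V_0,V_1,V_2)$, so you do not literally recover the displayed form $[(\alpha_0\bv_0)^*,(\theta_{1,3}\alpha_1\bv_1)^*,(\theta_{2,3}\alpha_2\bv_2)^*]$; that displayed vector is in fact an eigenvector for $\theta_{2,3}$, not $\theta_{1,3}$. This is an internal inconsistency in the paper's statement (the basis in Section~\ref{subsec:spectral} was derived for the opposite cyclic orientation), not an error in your argument, and it does not affect the clustering conclusion---but you should not claim the orders match without checking.
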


\begin{proof}
For any $\lambda \in \sigma(B)$, $\lambda$ can be decomposed as $\lambda = \rho \exp(\iota 2 \pi \phi)$, with $\rho \geq 0$ and $\phi \in [0,1)$.  Let $\bx$ be the normalized right eigenvector associated with $\lambda.$  The $i$th entry in $\bx$, $x_i$, can be decomposed similarly, with $x_i = p_i \exp(\iota 2 \pi t_i)$.  Then, for $x_i \neq 0$ and $\cN_i$ the set of nodes in the out-neighborhood of node $i$ (that is, $(i,j) \in E(G)$), 

$$
\lambda  =  \frac{1}{d_i}\sum_{j \in \cN_i} \frac{x_j}{x_i} 
$$
can be rewritten as
\beq
\label{eqn:rhoandt}
\rho \exp(\iota 2 \pi \phi)  =   \frac{1}{d_i}\sum_{j \in \cN_i} \frac{p_j}{p_i} \exp(\iota 2 \pi (t_j - t_i)).
\eeq
Now, by applying absolute values, the triangle inequality gives:
\beq
\label{eqn:rho}
\rho \leq \frac{1}{d_i}\sum_{j \in \cN_i} \frac{p_j}{p_i} \leq \max_{j \in \cN_i} \frac{p_j}{p_i}.
\eeq    

In the case of $\lambda = \theta_{1,3}$ and $B\bx = \lambda \bx$, $\rho = 1$ and $p_j$ is constant for all $j \in V(G)$. To see that this is the case, let $i$ be a vertex such that $p_i \geq p_j$ for all vertices $j \neq i$.  Then, by Equation \ref{eqn:rho}, 
$$
1 \leq \frac{1}{d_i}\sum_{j \in \cN_i} \frac{p_j}{p_i} \leq \max_{j \in \cN_i} \frac{p_j}{p_i}.
$$  
If $p_j < p_i$ for any $j \in \cN_i$, the first inequality would no longer hold.  As $G$ is a strongly connected graph, the equality can be extended among all nodes in the network.

Now,  for any (directed) edge $(j,i) \in E(G)$, we have $\exp(\iota 2 \pi (t_j - t_i)) = \exp(\iota 2 \pi \phi)$.  This follows from plugging in $\rho = 1$ and $p_i = p_j$ for all $i,j \in V(G)$ into Equation \ref{eqn:rhoandt}:  
$$
\exp(\iota 2 \pi \phi)  =  \frac{1}{d_i}\sum_{j \in \cN_i} \exp(\iota 2 \pi (t_j - t_i)).
$$
This forces all complex numbers in the summation to have the same argument, $\iota 2 \pi \phi$, which is equal to $\frac{\iota 2 \pi}{3}$ in the case of $\lambda = \theta_{1,3}$.  Finally, this shows that as one moves across edge $(i, j) \in G(E)$ the phase shift of an eigenvector $\bx$ associated with $\lambda = \theta_{1,3}$ is exactly $\phi = \frac{1}{3}$.  

All together, this means that each entry in $\bx$ can be mapped to a vector in the $xy$-plane with a magnitude of at most 1 and an angle of $\frac{2\pi}{3}, \frac{4\pi}{3}$ or $2\pi$.  The formula for this mapping can be found in Lemma \ref{lemma:mapping}. This clusters the nodes of $G$ by their entries of $\bx$ into three groups, corresponding to membership in $C_0, C_1$, or $C_2$. \\
\end{proof}

\begin{lemma}
Let $(\lambda, \bv)$ and $(\overline{\lambda}, \overline{\bv})$ be eigenpairs of $B$ such that $\mbox{Im} \, \lambda \neq 0$.   Let $\bv = \br + \iota \, \bc$.   Consider the two-dimensional spectral coordinates $(v_i, \overline{v_i})$.   There exists a 2d complex orthogonal rotation that places these coordinates in $\mathbb{R}^2$: 
$$
\ba = \frac{1}{\sqrt{2}} \left( \bv + \overline{\bv} \right) = \sqrt{2} \br
\qq{and}
\bb =  \frac{1}{\sqrt{2} \iota} \left( \bv - \overline{\bv} \right) = \sqrt{2} \bc. \\
$$
\label{lemma:mapping}
\end{lemma}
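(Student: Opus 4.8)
The plan is to exhibit the single fixed $2\times2$ matrix that implements the coordinate change and then verify the three things the statement asserts: that it reproduces the two displayed formulas, that its output is real, and that it is a ``rotation'' in the complex sense. Concretely, I would take
$$
R = \frac{1}{\sqrt{2}} \begin{bmatrix} 1 & 1 \\ -\iota & \iota \end{bmatrix}
$$
and claim that $R\,[\,v_i,\ \overline{v_i}\,]^t = [\,a_i,\ b_i\,]^t$ for every index $i$, so that one and the same transformation (independent of $i$) carries all the spectral coordinates $(v_i,\overline{v_i})$ simultaneously to the coordinates $(a_i,b_i)$.

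The substance reduces to two elementary identities. Since $\bv=\br+\iota\bc$ with $\br,\bc$ real, conjugation gives $\overline{\bv}=\br-\iota\bc$, whence $\bv+\overline{\bv}=2\br$ and $\bv-\overline{\bv}=2\iota\bc$; these conjugate relations are legitimate precisely because $B$ is real and $\mathrm{Im}\,\lambda\neq0$, so $\overline{\bv}$ is a genuine eigenvector for $\overline{\lambda}$. First I would read off the top row of $R$: $\ba=\tfrac{1}{\sqrt2}(\bv+\overline{\bv})=\sqrt2\,\br$, which is real. For the bottom row I would use $1/\iota=-\iota$ to rewrite $\tfrac{1}{\sqrt2\,\iota}(\bv-\overline{\bv})=\tfrac{-\iota}{\sqrt2}(2\iota\bc)=\sqrt2\,\bc$, again real; equivalently the bottom row of $R$ applied to $(v_i,\overline{v_i})$ returns $\tfrac{\iota}{\sqrt2}(\overline{v_i}-v_i)$, matching the formula. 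This single computation establishes both displayed identities at once and the fact that $\ba,\bb\in\mathbb{R}^{n}$.

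It then remains to check that $R$ is the promised complex orthogonal rotation. Here ``orthogonal'' must be read in the unitary sense (columns orthonormal under the Hermitian inner product), since a direct computation of $R^{t}R$ does \emph{not} yield the identity. I would instead verify $R^{*}R=I$ directly: using $\overline{\mp\iota}=\pm\iota$, the diagonal entries are $\tfrac12\bigl(1+\iota(-\iota)\bigr)=1$, while both off-diagonal entries equal $\tfrac12\bigl(1+\iota^2\bigr)=0$. Hence $R\in U(2)$ with $|\det R|=1$, i.e.\ a rotation of $\mathbb{C}^2$, which is exactly the sense in which the real coordinates $(a_i,b_i)$ are obtained from $(v_i,\overline{v_i})$ by a rigid complex rotation.

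The main obstacle is essentially bookkeeping rather than mathematics. The two points requiring care are the manipulation $1/\iota=-\iota$ in the second formula (a dropped sign here would make $\bb$ come out imaginary) and the recognition that ``orthogonal'' names the Hermitian condition $R^{*}R=I$ rather than the symmetric condition $R^{t}R=I$, since the latter genuinely fails for this $R$. Once $R$ is written down, every assertion of the lemma follows by direct substitution, with no further appeal to the spectral structure of $B$ beyond the conjugate-pair fact already invoked above.
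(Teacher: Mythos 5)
Your proof is correct, and in fact it supplies more than the paper does: the paper states this lemma with no proof at all, treating it as an immediate computation, so there is no ``paper approach'' to diverge from. Your verification is exactly the intended one --- from $\bv = \br + \iota\,\bc$ one gets $\bv + \overline{\bv} = 2\br$ and $\bv - \overline{\bv} = 2\iota\,\bc$, and the two displayed formulas follow by substitution, using $1/\iota = -\iota$ for the second. Your additional observation is a genuine (and worthwhile) clarification of the paper's wording: the explicit matrix $R = \tfrac{1}{\sqrt{2}}\bigl[\begin{smallmatrix} 1 & 1 \\ -\iota & \iota \end{smallmatrix}\bigr]$ satisfies $R^{*}R = I$ but $R^{t}R \neq I$ (the latter equals the swap matrix), so the transformation is unitary rather than ``complex orthogonal'' in the strict sense of $Q^{t}Q = I$; the lemma's phrase must be read in the Hermitian/unitary sense, as you say. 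One tiny point of precision: $|\det R| = 1$ holds automatically for any unitary matrix (here $\det R = \iota$), so that remark does not add anything beyond $R \in U(2)$, but this does not affect the correctness of the argument.
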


The results from Theorems \ref{thm:3-cyclic_evalues} and \ref{thm:sorting} can be visualized on a graph containing purely 3-cyclic structure.  We build an example of such a graph using a stochastic block model with three specific groups of nodes, $V_0, V_1,$ and $V_2$, each of size 45.  Here, the probability of an edge from $V_0$ to $V_1$, from $V_1$ to $V_2$, or from $V_2$ to $V_0$ is given by $\rho = 0.8$ and the probability of any other edge is 0.  The adjacency matrix of one instance of the resulting graph can be seen on the left of Figure \ref{fig:ex0} and the associated spectrum of the row-stochastic adjacency matrix, $B$, can be seen on the right.  As expected, $\theta_{1,3}$ and $\theta_{2,3}$ are both eigenvalues of $B$.

\begin{figure}[h]
\centering
\includegraphics[width = .515\textwidth]{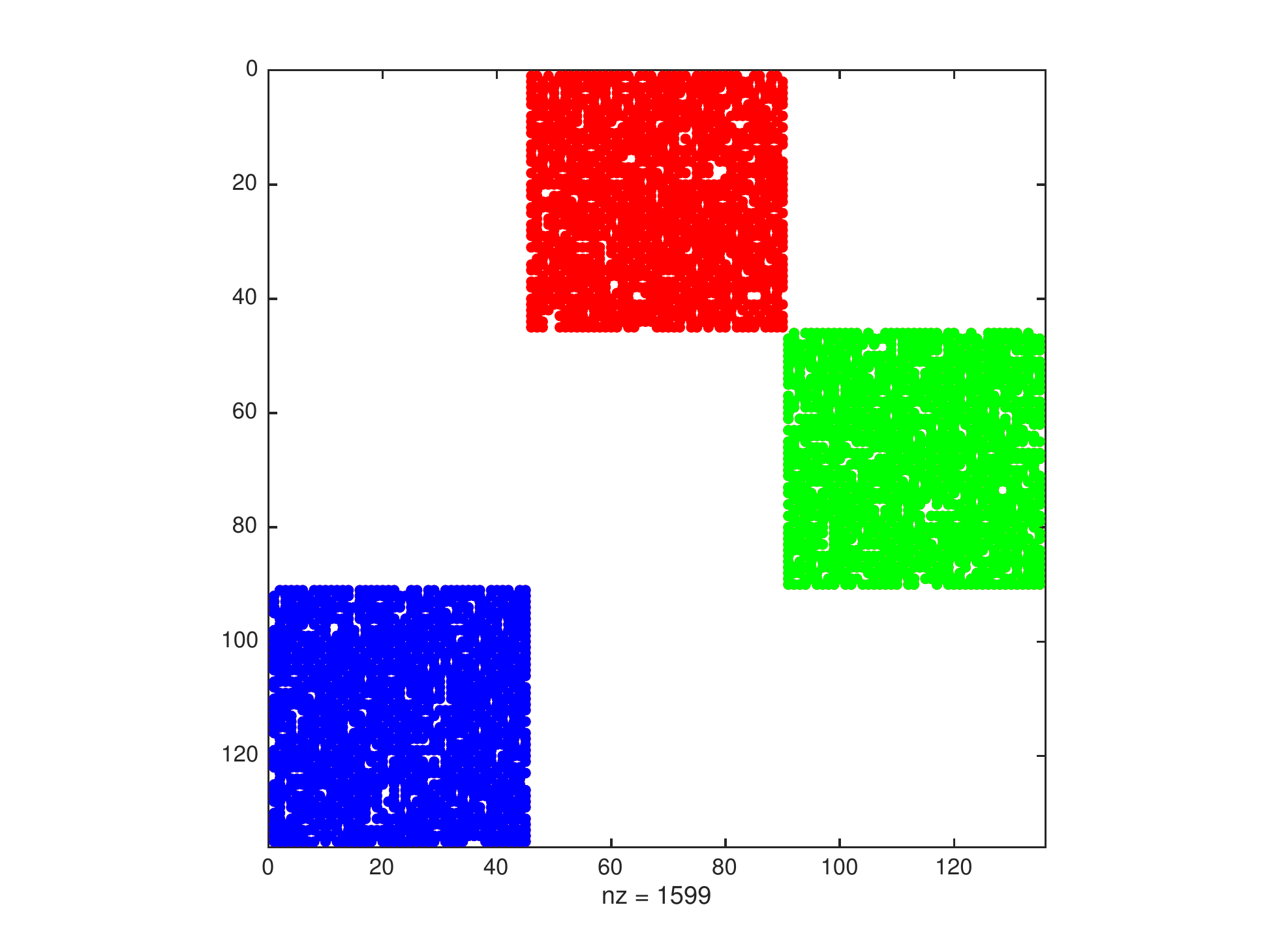} 
\hspace{0.15in}
\includegraphics[width = .44\textwidth]{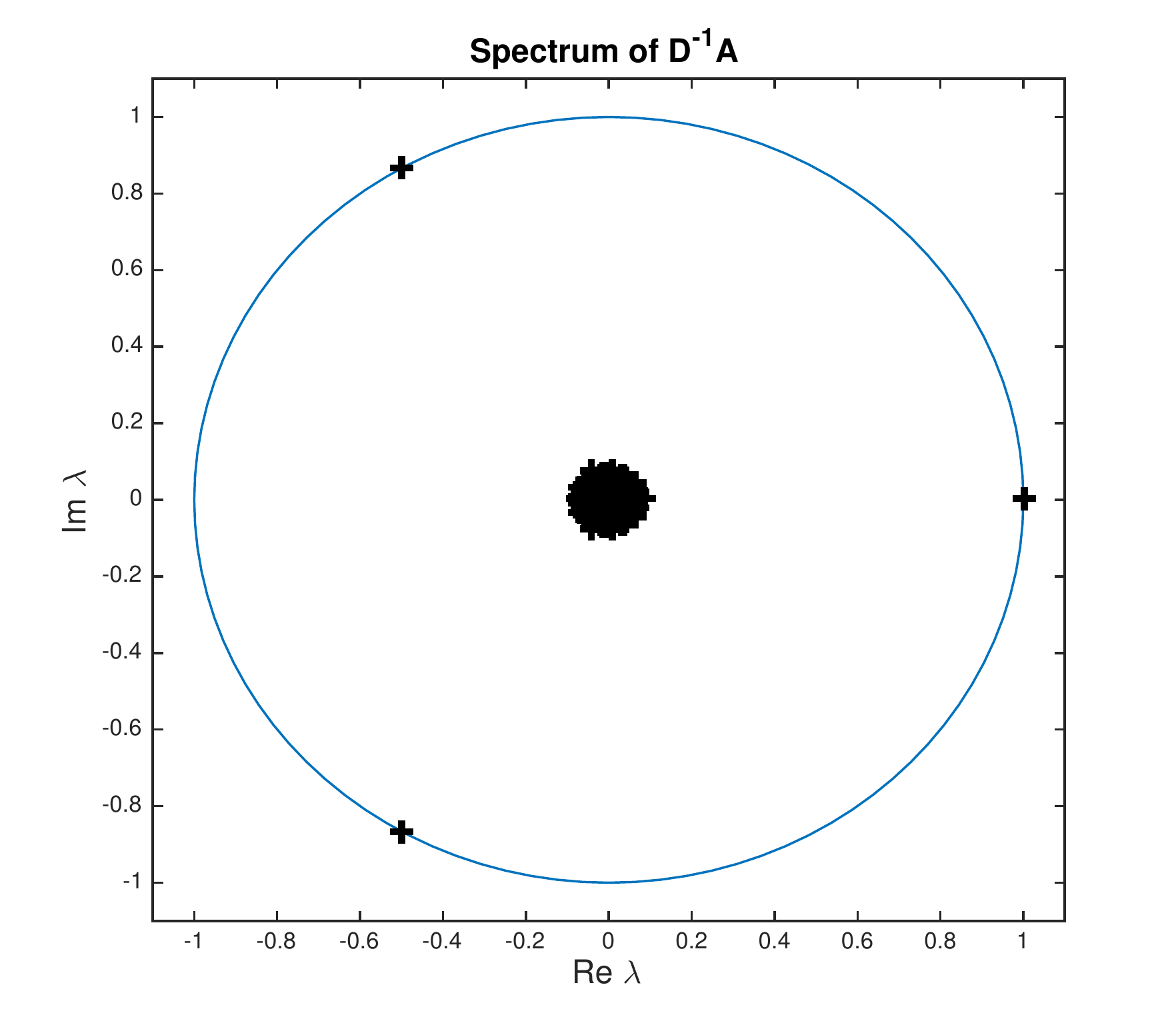} 
\caption{The adjacency matrix (left) and spectrum (right) of  a purely 3-cyclic graph.}
\label{fig:ex0}
\end{figure} 

The embedding of the nodes of the network into $\mathbb{R}^2$ using the left and right eigenvectors associated with $\lambda=\theta_{1,3}$ are displayed in Figure \ref{fig:exscs}.  In the embedding formed using the right eigenvector, the (red) nodes in $V_0$ are mapped onto a single point at an angle of $\frac{4\pi}{3}$, the (green) nodes from $V_1$ are mapped onto a single point at an angle of $\frac{2\pi}{3}$, and the (blue) nodes in $V_2$ are mapped onto a single pout at an angle of $2\pi$.  When the left eigenvector is used to embed the nodes, the three groups are also identified.  In this case, each node in $V_0$ is mapped to an angle of $\frac{4\pi}{3}$, but with a range of magnitudes.  Similarly, the nodes from $V_1$ are all mapped to an angle of $\frac{2\pi}{3}$ and the nodes from $V_2$ are all mapped to an angle of $2\pi$.

\begin{figure}[h]
\centering
\includegraphics[width = 0.48\textwidth]{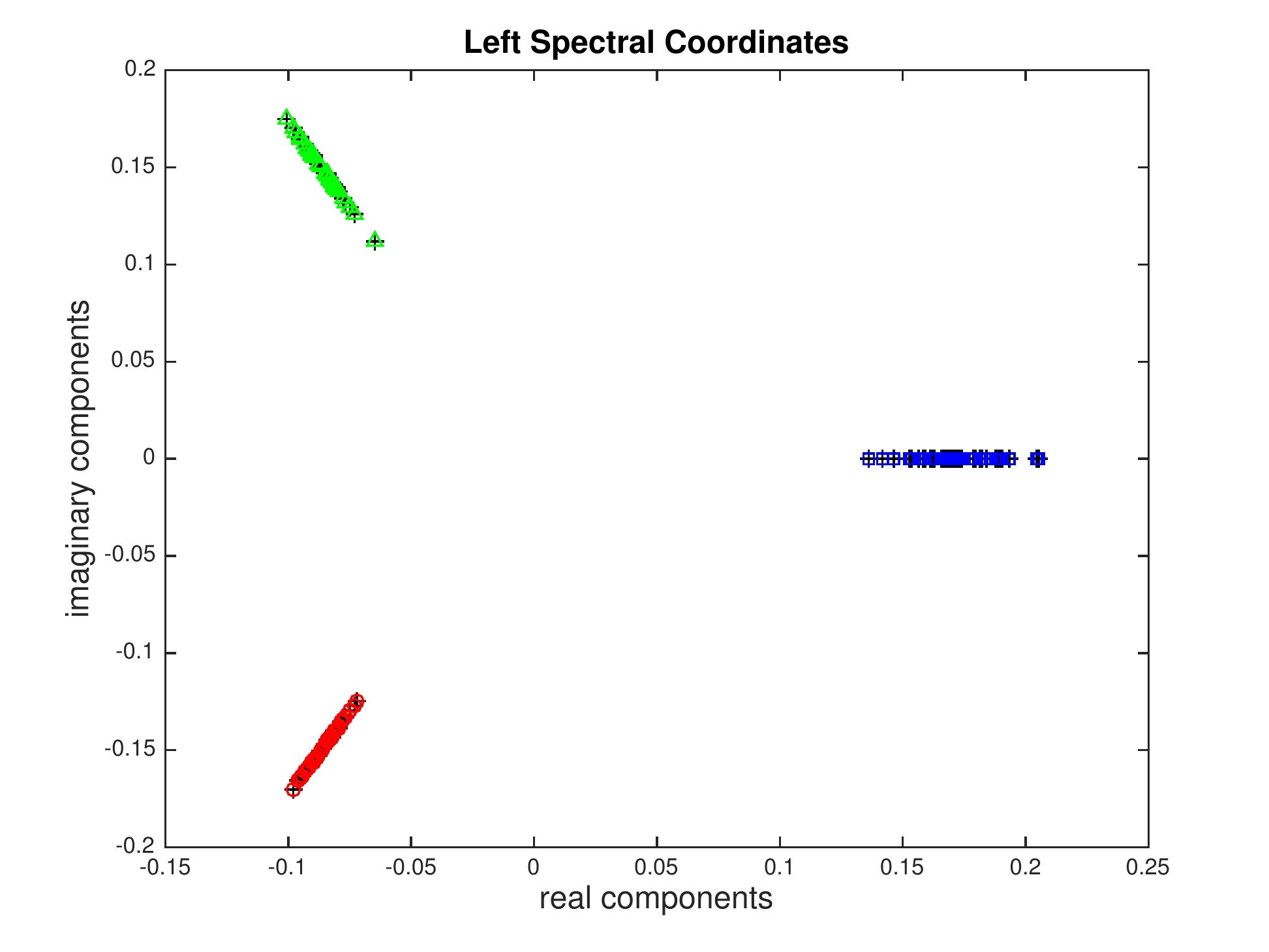} 
\hspace{0.15in}
\includegraphics[width = .48\textwidth]{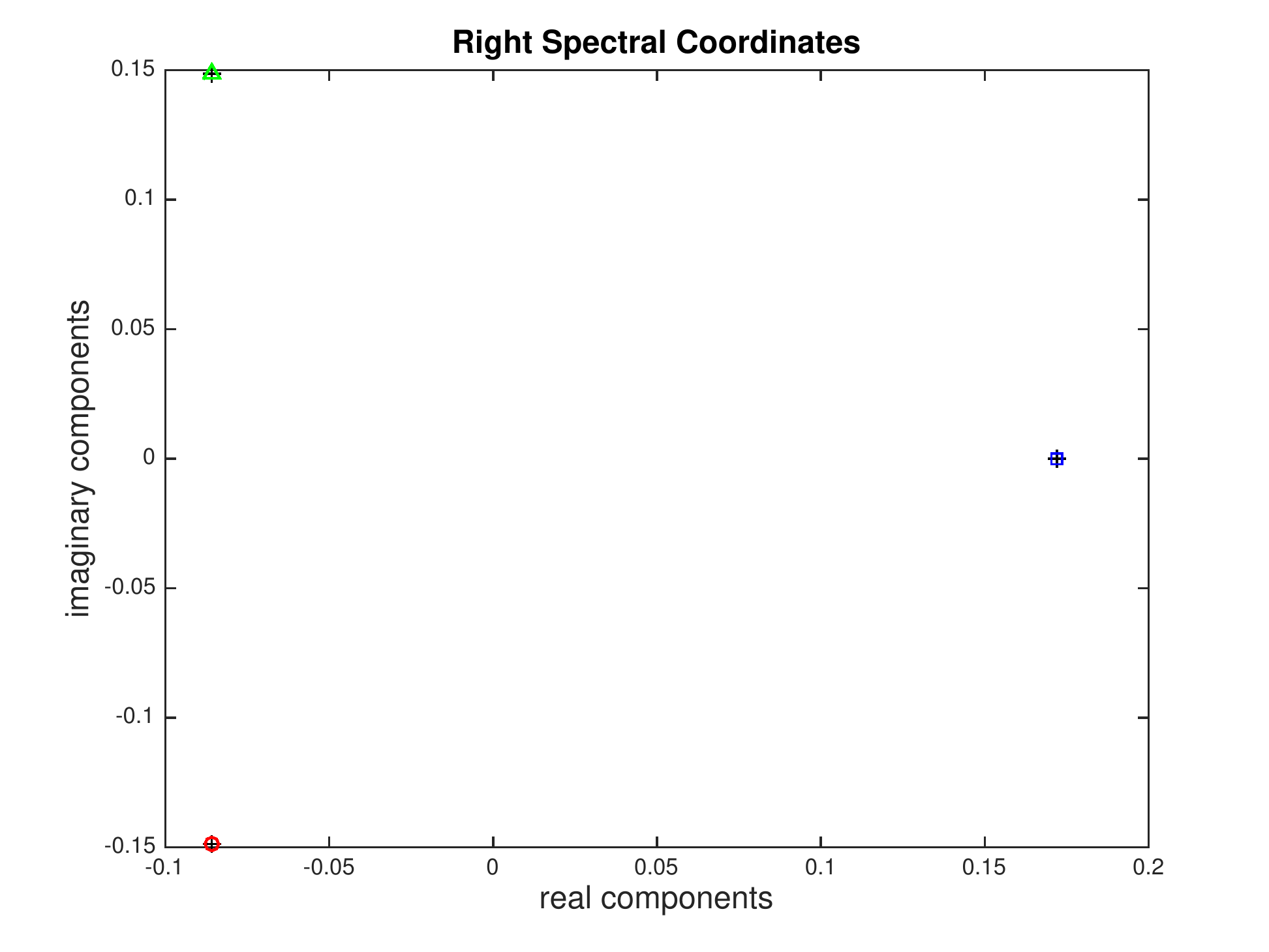} 
\caption{Coordinate embeddings using the left and right eigenvectors associated with $\lambda = \theta_{1,3}$.  In the embedding associated with the right eigenvector (right), the nodes in the same group are embedded to the exact same point while in the embedding using the left eigenvector (left), they are embedded at the same angle, but with different magnitudes.}
\label{fig:exscs}
\end{figure} 

Theorem \ref{thm:sorting} and Lemma \ref{lemma:mapping} show how sorting by angle completely reveals the sets $V_0, V_1$, and $V_2$.   Of course, for a truly 3-cyclic graph, this is not the most efficient manner to classify the nodes of $G$. A breadth-first-search approach accurately labels these sets and is much faster.   In order for the linear algebraic approach to be useful, it needs to be extended to the case where $G$ is a highly, but not purely, 3-cyclic graph or has regions of highly 3-cyclic structure.  Some results concerning this fuzzy 3-cyclic case can be found in Section \ref{sec:fuzzy}.

\begin{remark}
Similarly, the nodes of $G$ can be classified into the three groups using the stateful graph $\cG$ and its stochastic transition matrix $B_3$.   However,  $B_3$ is three times the size of $B$.  Forming $B_3$ and calculating several eigenpairs with eigenvalues close to 1 is not necessary for computing desired spectral coordinates.
Instead, one only needs to compute members of eigenspaces of $B$ with eigenvalues near $\theta_{1,3}$ and use their real and imaginary parts to organize vertices.
\end{remark}

\section{The fuzzy 3-cyclic case}
\label{sec:fuzzy}

The eigenvector approach to classifying nodes into clusters based on 3-cyclic structure in a graph is most useful when it does so in a graph that is not purely 3-cyclic, but instead has a dominant 3-cyclic structure (or region of 3-cyclic substructure) plus added noise.  

\begin{lemma}
Given a matrix $B$ associated with a purely 3-cyclic graph as described in Theorem \ref{thm:3-cyclic_evalues} with vertex set $V = V_0 \cup V_1 \cup V_2$, let $\bx$ and $\by$ be the normalized right and left eigenvectors of B associated with $\theta_{1,3}$.  Then, $| \by^* \bx |$ is bounded below by $| \by^* \bx | \geq \frac{1}{\sqrt[4]{|V|}}$.
\label{lemma:y*x}
\end{lemma}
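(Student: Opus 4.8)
The plan is to reduce $|\by^*\bx|$ to an explicit scalar built from the block structure of the two eigenvectors and then bound it by elementary norm inequalities. First I would pin down the forms of $\bx$ and $\by$ directly from the eigen-equations rather than from the rotated bases of Section~\ref{subsec:spectral}. Writing $B$ in the $3\times3$ block-cyclic form of Theorem~\ref{thm:3-cyclic_evalues} and splitting $\bx=(\bx_0,\bx_1,\bx_2)$ across $V_0,V_1,V_2$, the relations $C_2\bx_2=\theta_{1,3}\bx_0$, $C_0\bx_0=\theta_{1,3}\bx_1$, $C_1\bx_1=\theta_{1,3}\bx_2$ together with $C_i\bfo=\bfo$ force $\bx$ to be block-constant, equal to a common modulus times the phases $1,\theta_{2,3},\theta_{1,3}$ on $V_0,V_1,V_2$. (This is exactly Theorem~\ref{thm:sorting}: every entry of $\bx$ has the same modulus.) Normalizing $\|\bx\|_2=1$ fixes that modulus to $|V|^{-1/2}$. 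Solving $\by^*B=\theta_{1,3}\by^*$ by the analogous recursion shows each block of $\by$ is a phase times the left Perron vector $\bu_i$ of the irreducible diagonal block $M_i$ of $B^3$; because a row-stochastic $C_j$ sends probability vectors to probability vectors, the proportionality constants linking the three blocks are all $1$, so a single scalar $\beta$ governs all three, and one checks the per-block phases of $\by$ are again $1,\theta_{2,3},\theta_{1,3}$ — the same as those of $\bx$.

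The key observation is this phase alignment: block by block the phase of $\by$ matches that of $\bx$, so each product $\overline{y_j}\,x_j=|V|^{-1/2}|y_j|$ is real and nonnegative once the irrelevant global phases of the two vectors are fixed. Hence there is no cancellation and
\[
|\by^*\bx| \;=\; \frac{1}{\sqrt{|V|}}\sum_{j\in V}|y_j| \;=\; \frac{\|\by\|_1}{\sqrt{|V|}} .
\]
This identity is the crux of the argument; it also re-confirms $\by^*\bx\neq0$, consistent with $\theta_{1,3}$ being simple by Perron--Frobenius.

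It then remains to bound $\|\by\|_1$ from below subject to $\|\by\|_2=1$. Using the block decomposition $\by=\beta(\bu_0,\theta_{2,3}\bu_1,\theta_{1,3}\bu_2)$ with each $\bu_i$ a probability vector, we have $\|\by\|_1=3\beta$ and $\beta^{-2}=\sum_i\|\bu_i\|_2^2$, so
\[
|\by^*\bx| \;=\; \frac{3}{\sqrt{|V|}\left(\sum_i\|\bu_i\|_2^2\right)^{1/2}} .
\]
Since $\|\bu_i\|_2\le\|\bu_i\|_1=1$, the denominator is at most $\sqrt{3\,|V|}$, which already yields the clean bound $|\by^*\bx|\ge\sqrt{3}/\sqrt{|V|}$ (equivalently $\|\by\|_1\ge\|\by\|_2=1$). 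Reaching the sharper constant $|V|^{-1/4}$ stated in the lemma requires an upper bound on $\sum_i\|\bu_i\|_2^2$ that improves on the trivial value $3$, i.e. a guarantee that the per-block stationary vectors are not too peaked relative to $|V|$.

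The main obstacle is precisely this last step. The ratio $\|\by\|_1/\|\by\|_2$ measures how spread out the left Perron vector is, and a left eigenvector whose stationary mass concentrates on few vertices drives $\|\by\|_1$ toward its minimum; controlling this concentration is the delicate part and is what determines the exponent in the final bound. Everything upstream — the block forms, the phase alignment, and the reduction to $\|\by\|_1/\sqrt{|V|}$ — is routine once the eigenvector structure of Section~\ref{subsec:spectral} and Theorem~\ref{thm:sorting} is in hand.
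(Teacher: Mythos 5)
Your completed steps are correct and, modulo packaging, they are the paper's own proof: the paper likewise takes the block forms from Section~\ref{subsec:spectral}, treats $\by^*\bx$ as a sum of three nonnegative real terms $\alpha_i\beta_i\bu_i^*\bv_i$, bounds $\bu_i^*\bv_i=\|\bu_i\|_1/\sqrt{|V_i|}\ge\|\bu_i\|_2/\sqrt{|V_i|}$, substitutes $\alpha_i=\sqrt{|V_i|/|V|}$, and concludes from $\beta_0+\beta_1+\beta_2\ge 1$ exactly what you obtained, namely a lower bound of order $1/\sqrt{|V|}$; your identity $|\by^*\bx|=\|\by\|_1/\sqrt{|V|}$ and the constant $\sqrt{3}$ are a slightly sharper repackaging of the same estimate. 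One detail you handle more carefully than the paper: deriving both eigenvectors from the eigen-equations for the \emph{same} eigenvalue $\theta_{1,3}$ is what guarantees the per-block phases of $\by$ and $\bx$ agree. The paper's two displayed rotated bases pair conjugate phase patterns ($1,\theta_{1,3},\theta_{2,3}$ for the right vector against $1,\theta_{2,3},\theta_{1,3}$ for the left), under which the three terms of $\by^*\bx$ would carry phases $1,\theta_{2,3},\theta_{1,3}$ and could cancel; your derivation shows the correct, aligned convention that the lemma's proof implicitly needs.

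However, the obstacle you defer at the end --- upgrading $1/\sqrt{|V|}$ to the stated $1/\sqrt[4]{|V|}$ --- cannot be overcome, because the lemma's bound is false as stated; what you proved is the correct statement. The paper reaches the fourth root only through an exponent slip: it labels the sum $|\alpha_0\beta_0\bu_0^*\bv_0+\alpha_1\beta_1\bu_1^*\bv_1+\alpha_2\beta_2\bu_2^*\bv_2|$ as $\|\by^*\bx\|_2^2$, although (as your identity makes explicit) that sum equals $|\by^*\bx|$ to the \emph{first} power, and then it takes a square root at the last line. The concentration of the left Perron vector that you flagged as the delicate issue genuinely occurs. Take $V_0=\{a_1,\dots,a_k\}$, $V_1=\{b\}$, $V_2=\{c\}$ with edges $a_i\to b$, $b\to c$, $c\to a_i$ for all $i$: this graph is simple, strongly connected, and purely $3$-cyclic with $|V|=k+2$. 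The normalized right eigenvector for $\theta_{1,3}$ has constant modulus $(k+2)^{-1/2}$, while the left eigenvector has entry moduli $1$ on $V_0$ and $k$ on each of $V_1$ and $V_2$, with phases aligned to those of $\bx$, so
\[
|\by^*\bx|=\frac{3k}{\sqrt{k+2k^2}\,\sqrt{k+2}}=\Theta\left(|V|^{-1/2}\right),
\]
which already drops below $|V|^{-1/4}$ at $k=20$ ($0.447$ versus $0.462$) and decays at the strictly faster rate thereafter. So the lemma should assert $|\by^*\bx|\ge 1/\sqrt{|V|}$ (your argument even gives $\sqrt{3}/\sqrt{|V|}$), your proof establishes it, and the downstream constant $2\sqrt[4]{C}$ in Theorem~\ref{thm:fuzzy_sorting} should correspondingly read $2\sqrt{C}$.
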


\begin{proof}
Since $B$ is a purely 3-cyclic graph, as seen in Section \ref{subsec:spectral}, $\bx$ and $\by$ have the form 
$$
\bx = \left[
\barr{r}
\alpha_0 \bv_0 \\
\theta_{1,3} \alpha_1 \bv_1 \\
\theta_{2,3} \alpha_2 \bv_2 
\earr
\right] 
{\rm \, and \,\,}
\by = \left[
\barr{r}
\beta_0 \bu_0 \\
\theta_{2,3} \beta_1 \bu_1 \\
\theta_{1,3} \beta_2 \bu_2 
\earr
\right] \\
$$
where $\alpha_0^2 + \alpha_1^2 + \alpha_2^2 = 1$, $\beta_0^2 + \beta_1^2 + \beta_2^2 = 1$, and $\|\bv_i\|_2 = \|\bu_i\|_2 =1$ for $i=0,1,2$. Additionally, $\bv_i$ is constant for $i=0,1,2$.  Now,
$$\| \by^*\bx\|_2^2 = | \alpha_0 \beta_0 \bu_0^* \bv_0 + \alpha_1 \beta_1 \bu_1^* \bv_1 + \alpha_2 \beta_2 \bu_2^* \bv_2 |
$$ 
\beq
\geq \alpha_0 \beta_0 \frac{1}{\sqrt{|V_0|}} \|\bu_0\|_2 + \alpha_1 \beta_1 \frac{1}{\sqrt{|V_1|}} \|\bu_1\|_2 + \alpha_2 \beta_2 \frac{1}{\sqrt{|V_2|}} \|\bu_2\|_2 
\label{eqn:2norm}
\eeq
$$
= \alpha_0 \beta_0 \frac{1}{\sqrt{|V_0|}} + \alpha_1 \beta_1 \frac{1}{\sqrt{|V_1|}} + \alpha_2 \beta_2 \frac{1}{\sqrt{|V_2|}}.
$$
Due to the fact that $B\bw = \bw$ for
$$\bw = \left[
\barr{r}
\alpha_0 \bv_0 \\
\alpha_1 \bv_1 \\
\alpha_2 \bv_2 
\earr
\right],  $$
it follows that $\alpha_i = \sqrt{\frac{|V_i|}{|V|}}$ for $i=0,1,2$.  Combined with (\ref{eqn:2norm}) above and the fact that $\beta_i > 0$ for $i=1,2,3$, we get
$$
\| \by^*\bx\|_2^2 \geq \frac{1}{\sqrt{|V|}}(\beta_0 + \beta_1 + \beta_2) \geq \frac{1}{\sqrt{|V|}}.
$$
Thus, $| \by^*\bx| \geq \frac{1}{\sqrt[4]{|V|}}$.
\end{proof}

\begin{theorem}
Let $G$ be a graph with two strongly connected components and an adjacency matrix that can be written in block form 
$$A = \left[ \barr{cccc}
A_o & & & \\
& 0 & 0 & C_2 \\
& C_0 & 0 & 0 \\
& 0 & C_1 & 0 
\earr
\right]$$
where $A_o$ is the adjacency matrix of $G_o$, a strongly connected graph that is not 3k-cyclic for any integer $k$, i.e. $\lambda = \theta_{1,3}$ is a simple eigenvalue of $A$.  Let $B = D^{-1}A$ be the stochastic transition matrix associated with $G$. Let $\hat{G}$ be $G$ with noise added in the zero blocks of $G$.  The stochastic row transition matrix of $\hat{G}$ can be written as $\hat{B} = \hat{D}^{-1}\hat{A} = D^{-1}A  +M$.  Then, there exists $\lambda_o \in \sigma(\hat{B})$ such that
$$ 
\left| \lambda_o - \theta_{1,3}\right| < 2\sqrt[4]{C}\left( \max_i  \frac{\hat{d}_i - d_i}{\hat{d}_i}\right)  + \bigO{\left(\max_i  \frac{2(\hat{d}_i - d_i)}{\hat{d}_i} \right)^2}
$$ 
 where $\hat{d}_i$ is the out-degree of node $i$ in $\hat{G}$, $d_i$ is the out-degree of node $i$ in $G$, and $C = |V_0| + |V_1| +|V_2|$ is the number of nodes in the 3-cyclic region of the network. \\
\label{thm:fuzzy_sorting}
\end{theorem}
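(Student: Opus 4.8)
The plan is to treat $\hat{B}$ as a perturbation $\hat{B} = B + M$ of the block-diagonal unperturbed matrix $B = \mathrm{diag}(B_o, B_c)$, where $B_o = D_o^{-1}A_o$ governs $G_o$ and $B_c$ is the stochastic matrix of the purely $3$-cyclic block, and then to apply first-order eigenvalue perturbation theory to the simple eigenvalue $\theta_{1,3}$. First I would record the structural facts I will lean on. Since $G_o$ is strongly connected and not $3k$-cyclic, Theorem~\ref{thm:3-cyclic_evalues} gives $\theta_{1,3}\notin\sigma(B_o)$, while $\theta_{1,3}\in\sigma(B_c)$ is simple; hence $\theta_{1,3}$ is a simple eigenvalue of $B$ whose unit right and left eigenvectors $\bx,\by$ are supported entirely on the $C=|V_0|+|V_1|+|V_2|$ vertices of the $3$-cyclic block and agree there with the eigenvectors of $B_c$. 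Consequently $\by^*\bx=\by_c^*\bx_c$, so Lemma~\ref{lemma:y*x} applied to the purely $3$-cyclic block supplies the condition-number lower bound $|\by^*\bx|\ge C^{-1/4}$. I would also note, as in Theorem~\ref{thm:sorting}, that $\bx$ has constant modulus $1/\sqrt{C}$ on its support.

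Next I would compute the perturbation explicitly. Writing $\hat{A} = A + N$ with $N\in\{0,1\}^{n\times n}$ the noise supported on the zero blocks and $\hat{D} = D + \Delta$ with $\Delta = \mathrm{diag}(\hat{d}_i - d_i)$, a short calculation gives $M = \hat{B} - B = (\hat{D}^{-1}-D^{-1})A + \hat{D}^{-1}N = -EB + \hat{D}^{-1}N$, where $E=\mathrm{diag}(\epsilon_i)$ and $\epsilon_i = (\hat{d}_i - d_i)/\hat{d}_i$. Both summands are controlled in the maximum-absolute-row-sum norm: $-EB$ has row sums $\epsilon_i$ since $B$ is row stochastic, and $\hat{D}^{-1}N$ has row sums exactly $\epsilon_i$, so $\|M\|_\infty \le 2\max_i\epsilon_i$. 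This is the quantity that will drive the $\bigO{\cdot}$ remainder.

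The heart of the argument is the first-order formula $\lambda_o = \theta_{1,3} + \by^*M\bx/(\by^*\bx) + \bigO{\|M\|^2}$ for the eigenvalue $\lambda_o\in\sigma(\hat{B})$ that continues the simple eigenvalue $\theta_{1,3}$, whence $|\lambda_o-\theta_{1,3}|\le |\by^*M\bx|/|\by^*\bx| + \bigO{(2\max_i\epsilon_i)^2}$. I would then bound the numerator using $M\bx = -\theta_{1,3}E\bx + \hat{D}^{-1}N\bx$, valid since $B\bx = \theta_{1,3}\bx$. The first piece satisfies $|\theta_{1,3}\by^*E\bx| = |\sum_i \epsilon_i \overline{y_i}x_i| \le (\max_i\epsilon_i)\|\bx\|_2\|\by\|_2 = \max_i\epsilon_i$ by Cauchy--Schwarz. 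For the second piece, $\hat{D}^{-1}N$ is nonnegative with row sums $\le\epsilon_i$; using that $\bx$ has constant modulus $1/\sqrt{C}$ on its $C$-vertex support and that $\by$ is likewise supported on those $C$ vertices (so $\sum_i|y_i|\le\sqrt{C}$), one gets $|\by^*\hat{D}^{-1}N\bx|\le \max_i\epsilon_i$. Hence $|\by^*M\bx|\le 2\max_i\epsilon_i$, and dividing by $|\by^*\bx|\ge C^{-1/4}$ yields the claimed leading term $2\sqrt[4]{C}(\max_i\epsilon_i)$.

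The main obstacle I anticipate is twofold and lives in the last paragraph. First is producing the clean bound $|\by^*M\bx|\le 2\max_i\epsilon_i$, which genuinely requires the support and constant-modulus structure of $\bx$ and $\by$ rather than a crude $\|M\|\,\|\bx\|\,\|\by\|$ estimate; the crude estimate would leak an extra factor of $\sqrt{C}$ into the first-order term and miss the stated bound. Second is pinning down the perturbation remainder as $\bigO{\|M\|^2}$, which relies on $\theta_{1,3}$ being simple and isolated so that the resolvent bound controlling the second-order term is uniform as the noise level vanishes. The condition-number lower bound from Lemma~\ref{lemma:y*x} is precisely what prevents the division by $|\by^*\bx|$ from blowing up and what supplies the $\sqrt[4]{C}$ factor.
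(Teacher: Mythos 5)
Your proposal is correct, and its skeleton is the same as the paper's: the Stewart--Sun first-order expansion $\lambda_o = \theta_{1,3} + \by^* M \bx/(\by^*\bx) + \bigO{\|M\|^2}$ for the simple eigenvalue, with Lemma~\ref{lemma:y*x} supplying the denominator bound $|\by^*\bx| \ge C^{-1/4}$ and hence the $\sqrt[4]{C}$ factor. Where you genuinely depart from the paper is in the numerator. The paper bounds $|\by^* M \bx| \le \|\by\|_2\|M\|_2\|\bx\|_2 = \|M\|_2$, then writes $\|M\|_2 = \lambda_{max}(M)$ and invokes Gershgorin to bound this by the maximum absolute row sum $\max_i 2(\hat{d}_i - d_i)/\hat{d}_i$ (computed entrywise from $M = \hat{D}^{-1}\hat{A} - D^{-1}A$, exactly as in your row-sum calculation). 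You instead decompose $M = -EB + \hat{D}^{-1}N$ with $E = \mathrm{diag}\bigl((\hat{d}_i-d_i)/\hat{d}_i\bigr)$, use the eigen-equation $B\bx = \theta_{1,3}\bx$, and exploit the support and constant-modulus structure of $\bx$ and $\by$ to bound the two bilinear forms by $\max_i \epsilon_i$ each. Your route is longer but buys rigor that the paper's step lacks: for a nonsymmetric perturbation $M$, Gershgorin controls eigenvalues (the spectral radius), not the spectral norm, and $\|M\|_2 = \lambda_{max}(M)$ holds only for normal matrices; in general $\|M\|_2$ can exceed the maximum row sum when column sums are large (e.g.\ many noise edges aimed at one vertex), so the paper's chain of inequalities is not fully justified as written, while your structural bound on $|\by^*M\bx|$ repairs exactly this step. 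One quibble: your diagnosis that the crude estimate would ``leak an extra factor of $\sqrt{C}$'' is not quite the right accounting --- the defect of the crude estimate is that $\|M\|_2$ is simply not controlled by row sums for nonsymmetric $M$, with the possible loss depending on the column structure of the noise rather than on $C$ --- but this side remark does not affect the validity of your argument.
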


\begin{proof}
From \cite[p.183]{StSu90}, we have that there exists $\lambda_o \in \sigma(A_o)$ such that $$\lambda_o = \theta_{1,3} + \frac{\| \by^* M \bx \|_2}{\| \by^* \bx \|_2} + \bigO{\| M \|^2}.$$  Thus, 
$$
| \lambda_o - \theta_{1,3}| = \left| \frac{\| \by^* M \bx \|_2}{\| \by^* \bx \|_2} + \bigO{\| M \|_2^2} \right|  \leq \frac{\|\by^*\|_2 \|M\|_2 \|\bx\|_2}{\| \by^*\bx \|_2} + \bigO{\| M \|_2^2} 
$$ 
$$ 
= \frac{\|M\|_2}{\| \by^*\bx \|_2} + \bigO{\| M \|_2^2} = \frac{\lambda_{max}(M)}{\| \by^*\bx \|_2} + \bigO{\lambda_{max}(M)^2}
$$
by \cite[p. 282]{Me00}.

Now, by Gershgorin's circle theorem, $\lambda_{max}(M) \leq \max \sum_{j = 1, j \neq i}^n |M_{ij}|$.  As $M = \hat{D}^{-1}\hat{A} - D^{-1}A$, the entries of $M$ are given by:
$$M_{ij}=\left\{\begin{array}{ll}
\frac{1}{\hat{d}_i},& \textnormal{ if } (i,j) \in E(\hat{G})/E(G),\\
\frac{1}{\hat{d}_i} - \frac{1}{d_i} & \textnormal{ if } (i,j) \in E(G),\\
0, & \textnormal{ else }
\end{array}\right .
$$
where $\hat{d}_i$ is the degree of node $i$ in $\hat{G}$.  Thus, for a fixed $i$, $\sum_{j = 1}^n |M_{ij}| =d_i|\frac{1}{\hat{d}_i} - \frac{1}{d_i}| + (\hat{d}_i - d_i)d_i = \frac{2(\hat{d}_i - d_i)}{\hat{d}_i}$.  Combined with the results from Lemma \ref{lemma:y*x}, the theorem follows.
\end{proof}

The bounds presented in Lemma \ref{lemma:y*x} and Theorem \ref{thm:fuzzy_sorting} work well when the 3-cyclic region in the larger network is relatively small and well-separated, that is when both $\sqrt[4]{C}$ and  $\left(\max_i  \frac{\hat{d}_i - d_i}{\hat{d}_i}\right)$ are small.  As the 3-cyclic region gets larger and/or more connected to the rest of the network, the bounds presented in the above theorem increase above $\rho(B) = 1$ and lose usefulness.  However, experimental results suggest that $|\lambda_o - \theta_{1,3}|$ is often small, even in networks where the above bounds are large.   

The magnitude of the $ \bigO{\left(\max_i  \frac{2(\hat{d}_i - d_i)}{\hat{d}_i} \right)^2}$ term is governed by how close to simple $\theta_{1,3}$ is as an eigenvalue of $A$.  If there are several highly cyclic structures in $G_o$, leading to one or more eigenvalues of $A$ close to $\theta_{1,3}$, the magnitude of the higher order terms in the bound will increase.  A detailed discussion of the effects of this on the approximation of $\lambda_o$ is outside the scope of this paper, but in various experiments it seems small.  A more detailed discussion on the approximation of the second order terms for general matrix perturbations can be found in \cite[Ch. 5]{StSu90} and it may be possible to tighten the bounds given in Theorem \ref{thm:fuzzy_sorting} using such techniques.

\begin{lemma}
Given $\lambda =  \rho \exp(\iota 2 \pi \phi) \in \sigma(B)$, with $\rho = 1-\epsilon$, and $B \bx = \lambda \bx$, where 
\beq
\label{eqn:rhoandtfuzz}
\lambda = \rho \exp(\iota 2 \pi \phi)  =   \frac{1}{d_i}\sum_{j \in \cN_i} \frac{p_j}{p_i} \exp(\iota 2 \pi (t_j - t_i))
\eeq
for $x_i  = p_i \exp(\iota 2 \pi t_i) \neq 0$ and $\cN_i$ the set of nodes in the out-neighborhood of node $i$, then $p_j$ decays no faster than $(1-\epsilon)$-slowly as we move away from $i$ along edges in $E$.
\label{lemma:fuzzy1}
\end{lemma}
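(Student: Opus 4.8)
The plan is to reuse the triangle-inequality argument from the proof of Theorem~\ref{thm:sorting} (the step culminating in Equation~(\ref{eqn:rho})), but to retain the general magnitude $\rho = 1-\epsilon$ instead of specializing to $\rho = 1$. First I would take absolute values in the eigenvalue identity (\ref{eqn:rhoandtfuzz}) and apply the triangle inequality to the complex sum, then bound the resulting average by its maximum. For every node $i$ with $x_i \neq 0$ this yields
$$
1-\epsilon \;=\; \rho \;\leq\; \frac{1}{d_i}\sum_{j \in \cN_i} \frac{p_j}{p_i} \;\leq\; \max_{j \in \cN_i} \frac{p_j}{p_i}.
$$
Since $B$ is row stochastic, every node has out-degree $d_i \geq 1$, so $\cN_i$ is nonempty and this chain is always well defined.

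Next I would extract the local decay statement from the outer inequality. Because $\max_{j \in \cN_i} p_j/p_i \geq 1-\epsilon$, there must exist at least one out-neighbor $j^\ast \in \cN_i$ with $p_{j^\ast} \geq (1-\epsilon)\,p_i$. As $p_i > 0$ and $\epsilon < 1$, this gives $p_{j^\ast} \geq (1-\epsilon)\,p_i > 0$, so $x_{j^\ast} \neq 0$ and the same inequality can be applied again at $j^\ast$. Iterating, I would build a walk $i = i_0, i_1, i_2, \ldots$ that follows an out-neighbor of maximal eigenvector magnitude at each step, along which $p_{i_{k+1}} \geq (1-\epsilon)\,p_{i_k}$ and hence $p_{i_k} \geq (1-\epsilon)^k\,p_i$. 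This is precisely the assertion that $p_j$ decays no faster than $(1-\epsilon)$-slowly as one moves away from $i$ along edges of $E$.

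I do not expect a genuine obstacle here: the lemma is the fuzzy analogue of the constant-magnitude conclusion already established for $\lambda = \theta_{1,3}$ in Theorem~\ref{thm:sorting}, where $\epsilon = 0$ forces equality throughout the chain and thus $p_j \equiv p_i$. The only point requiring care is ensuring that the iterated walk stays on the support of $\bx$, which the strict positivity $p_{j^\ast} > 0$ guarantees whenever $\epsilon < 1$; one should also note that the conclusion is an existence statement (a single slowly-decaying outgoing direction at each node), not a bound on the decay along \emph{every} out-edge, since magnitudes may drop sharply in other directions.
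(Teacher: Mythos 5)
Your opening chain --- triangle inequality applied to (\ref{eqn:rhoandtfuzz}), then bounding the average by the maximum --- is exactly how the paper's proof begins, and that part is sound. The gap is in what you extract from it. You deliberately read the lemma as an existence statement: at each node there is \emph{some} out-neighbor $j^\ast$ with $p_{j^\ast}\geq(1-\epsilon)p_i$, hence one greedy walk along which magnitudes decay slowly. The paper proves something stronger and differently anchored: it takes $i$ to be the vertex of \emph{globally maximal} magnitude, so that $p_j/p_i\leq 1$ for every $j$ and the averaging inequality controls \emph{every} out-neighbor rather than just the best one, and it then propagates this control outward step by step. That stronger goal forces a case analysis you never encounter: when the current node $j$ has an edge back to $i$ (i.e.\ $i\in\cN_j$), the term $p_i/p_j\geq 1$ inflates the average, and the per-step factor degrades from $(1-\epsilon)$ to
$$
\gamma \;=\; 1-\epsilon\left(\frac{d_j+(1-\epsilon)^{-1}}{d_j-1}\right),
$$
which is precisely the constant that reappears in the paper's later results. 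Your proposal contains neither the max-vertex anchoring nor this back-edge case.

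The difference is not cosmetic, because Theorem~\ref{thm:fuzzy_mapping} consumes the lemma in its universal form: for \emph{every} node $j$ at shortest-path distance $k$ from the maximal-magnitude vertex it asserts $p_j\geq\gamma^{s}(1-\epsilon)^{k-s}\geq\gamma^{k}$, and the experiments in Section~\ref{sec:experiments} likewise apply the decay bound to all nodes of the highly cyclic region. An existence-only conclusion gives no control over the nodes that Theorem~\ref{thm:fuzzy_mapping} actually quantifies over, so your version of the lemma cannot be substituted into the paper's subsequent argument; already at $\epsilon=0$ your reading yields only a single non-decaying walk, whereas Theorem~\ref{thm:sorting} needs, and obtains via the max-vertex averaging argument, constancy of $p$ over the whole strongly connected graph. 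In fairness, the paper itself is loose at the very spot you diverge: its step from $(1-\epsilon)\leq\max_{j\in\cN_i}p_j/p_i$ to ``$p_j\geq(1-\epsilon)p_i$ for all $j\in\cN_i$'' is also an existence-to-universal leap, and the honest uniform bound from averaging at the maximal vertex is $p_j\geq(1-d_i\epsilon)p_i$. But what the downstream theorems require is a uniform bound of that type over all out-neighbors at each step, and your proposal does not attempt to supply one.
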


\begin{proof}
Let $i$ be a vertex such that $|x_i| \geq |x_j|$, or  $p_i \geq p_j$, for all $j \neq i$.
Applying (\ref{eqn:rho}) from the proof of Theorem \ref{thm:sorting} gives: 
$$
(1- \epsilon) \leq \max_{j \in \cN_i} \frac{p_j}{p_i},
$$
or $p_j \geq (1-\epsilon) p_i$ for all $j \in \cN_i$.   Now, given a fixed $j \in \cN_i$, consider (\ref{eqn:rho}) applied centered at vertex $j$: 
$$
(1- \epsilon) \leq \frac{1}{d_j}\sum_{k \in \cN_j} \frac{p_k}{p_j} 
$$
If $i$ is not in the out-neighborhood of $j$, following the same method as above, it is easy to see that $p_k \geq (1-\epsilon)p_j \geq (1-\epsilon)^2 p_i$.  This can be continued as we step farther and father away from $i$.

If $i \in \cN_j$, a similar inequality holds.  The above equation can be rewritten as:
$$ 
(1- \epsilon) \leq
\frac{1}{d_j} 
\left( 
\frac{p_i}{p_j} + \sum_{k \in \cN_j \setminus \{i\}} \frac{p_k}{p_j} 
\right) \leq 
\frac{1}{d_j(1-\epsilon)} + \frac{d_j-1}{d_j} \max_{k \in \cN_j\setminus \{i\}} \frac{p_k}{p_j} 
$$
From here, we simplify to see:

$$
\max_{k \in \cN_j\setminus \{i\}} \frac{p_k}{p_j} \geq 
\frac{d_j (1-\epsilon) - (1-\epsilon)^{-1}}{d_j-1}  = \frac{d_j - \epsilon d_j -1 - \epsilon (1-\epsilon)^{-1}}{d_j - 1} = 1 - \epsilon \left( \frac{d_j + (1-\epsilon)^{-1}}{d_j-1} \right).
$$
Now, $p_k \geq \left(1 - \epsilon \left( \frac{d_j + (1-\epsilon)^{-1}}{d_j-1} \right)\right)p_j \geq \left(1 - \epsilon \left( \frac{d_j + (1-\epsilon)^{-1}}{d_j-1} \right)\right)(1-\epsilon)p_i$.

As we step farther and father away from node $i$, at each step to node $j$, $p_j$ decays either by a faction of $(1-\epsilon)$ or by $\left(1 - \epsilon \left( \frac{d_j + (1-\epsilon)^{-1}}{d_j-1} \right)\right)$ and the claim holds.\\

\end{proof}

\begin{lemma}
Given the conditions of Lemma \ref{lemma:fuzzy1}, then for any (directed) edge $(i, j) \in E(G)$ the phase change $2 \pi (t_j - t_i)$ differs from $\phi$ by no more than
$$
\cos^{-1} \left( \frac{0.0199-1.98d_i(1-\epsilon)}{ 2 d_i (1 - \epsilon)} \right).
$$
\label{lemma:fuzzy2}
\end{lemma}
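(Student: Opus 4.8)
The plan is to reduce the claim to a statement about the argument of a single complex summand in the eigenvalue equation centered at vertex $i$, and then extract that argument by a law-of-cosines estimate. Working under the hypotheses of Lemma~\ref{lemma:fuzzy1}, I take $i$ to be a vertex of maximal magnitude, $p_i \ge p_k$ for all $k$, and I rotate the complex plane so that $\lambda = \rho\exp(\iota 2\pi\phi)$ points along the positive real axis. Writing $r_k = p_k/p_i$ and $\psi_k = 2\pi(t_k - t_i) - 2\pi\phi$, Equation~(\ref{eqn:rhoandtfuzz}) becomes $\sum_{k\in\cN_i} r_k\exp(\iota\psi_k) = d_i\rho$, a positive real number. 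The quantity to be bounded is exactly $|\psi_j|$ for the chosen out-neighbour $j$ (here I read the ``$\phi$'' in the statement as the argument $2\pi\phi$ of $\lambda$, so that it is measured in the same units as the $\cos^{-1}$).

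First I would isolate the $j$-th term,
$$r_j\exp(\iota\psi_j) = d_i\rho - \sum_{k\in\cN_i,\, k\ne j} r_k\exp(\iota\psi_k),$$
and view this as a triangle whose three side lengths are $d_i\rho$ (the resultant), $r_j$, and $R := \big|\sum_{k\ne j} r_k\exp(\iota\psi_k)\big|$, with $\psi_j$ the angle between the first two legs. The law of cosines then gives the exact identity
$$\cos\psi_j = \frac{(d_i\rho)^2 + r_j^2 - R^2}{2\,d_i\rho\, r_j}.$$
This converts a statement about phases into a statement about the three magnitudes $d_i\rho$, $r_j$, and $R$, all of which I can control.

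Next I would insert the available bounds. Since $i$ is a maximiser, $r_k\le 1$ for every $k$, and Lemma~\ref{lemma:fuzzy1} supplies the matching lower bound $r_k\ge 1-\epsilon$ for $k\in\cN_i$; in particular $1-\epsilon\le r_j\le 1$. The triangle inequality bounds the remainder, $R\le\sum_{k\ne j} r_k$, and substituting these magnitude bounds into the law-of-cosines identity and simplifying (collecting the resulting $(1-\epsilon)$ and $(1-\epsilon)^2$ contributions coming from $\rho$ and the $r_k$) produces a lower bound on $\cos\psi_j$ of the stated form $\frac{0.0199 - 1.98\, d_i(1-\epsilon)}{2\,d_i(1-\epsilon)}$, in which $0.0199 = 1-(1-\epsilon)^2$ and $1.98 = 2(1-\epsilon)$ at $\epsilon = 0.01$, so that the subtracted term is $2(1-\epsilon)^2 d_i$. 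Because $\cos^{-1}$ is decreasing, a lower bound on $\cos\psi_j$ yields the advertised upper bound on $|\psi_j|$, which is precisely the phase deviation across the edge $(i,j)$.

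The main obstacle is the bookkeeping in this last step: the crude estimate $R\le d_i-1$ together with $r_j\le 1$ gives a \emph{tighter} but only conditionally meaningful bound (its $\cos^{-1}$ argument leaves $[-1,1]$ once $d_i$ grows), so to land on a universally valid expression one must keep the factors of $\rho = 1-\epsilon$ attached to the resultant length $d_i\rho$ and balance them against the extreme values $r_j\in\{1-\epsilon,\,1\}$ rather than discarding them. Choosing these substitutions so that the final argument of $\cos^{-1}$ provably lies in $[-1,1]$, and confirming the correct branch so that the inequality controls the angular deviation itself and not its supplement, is the delicate part; the remainder is the routine algebra of expanding $(d_i\rho)^2 + r_j^2 - R^2$.
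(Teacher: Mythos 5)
Your setup coincides with the paper's: both arguments isolate the out-neighbour term in the eigenvalue relation centered at the maximal-magnitude vertex and control the angle via the law of cosines on the triangle whose legs are $d_i\rho$ and $r_j$ and whose third side is $R = \bigl|\sum_{k\in\cN_i,\,k\ne j} r_k\exp(\iota\psi_k)\bigr|$; your exact identity for $\cos\psi_j$ is in fact a cleaner rendering of the paper's geometric description. The gap is in the landing step. The constants $0.0199$ and $1.98$ are \emph{not} $\epsilon$-dependent quantities evaluated at $\epsilon=0.01$: in the paper's proof they are absolute numbers produced by an explicit slack choice, namely fixing the third side of the triangle at $d_i(1-\epsilon)+0.99$, i.e.\ $0.01$ short of the degenerate triangle-inequality limit $d_i(1-\epsilon)+r_j\le d_i(1-\epsilon)+1$. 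With $a=d_i(1-\epsilon)$, $r_j=1$, $R=a+0.99$, the law of cosines gives numerator $1-0.99^2-2(0.99)a = 0.0199-1.98a$, so $0.0199=1-0.99^2$ and $1.98=2\cdot 0.99$, independent of $\epsilon$. Your reading ($0.0199=1-(1-\epsilon)^2$, $1.98=2(1-\epsilon)$ ``at $\epsilon=0.01$'') mistakes a numerical coincidence (since $0.99=1-0.01$) for the provenance; the bound your plan would actually produce, with numerator $1-(1-\epsilon)^2-2d_i(1-\epsilon)^2$, agrees with the stated lemma only at $\epsilon=0.01$ and is a different statement for every other $\epsilon$.

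Moreover, the step you defer to ``bookkeeping'' cannot be completed from the ingredients you permit yourself. From $r_k\le 1$ you get $R\le d_i-1$, which, as you correctly observe, pushes the $\cos^{-1}$ argument below $-1$ once $d_i\epsilon$ is large (roughly $d_i\epsilon\ge 2$); from the triangle inequality alone you get $R\le d_i\rho+r_j$, which yields exactly $\cos\psi_j\ge -1$, i.e.\ nothing. The stated formula follows from neither: it follows only from \emph{positing} that the worst-case configuration has third side $d_i(1-\epsilon)+0.99$, a cut placed $0.01$ below degeneracy. This cap is asserted, not derived, in the paper as well --- indeed, once $d_i-1\ge d_i(1-\epsilon)+1$ the magnitude constraints alone permit configurations with $R$ beyond that cap, so no purely local magnitude argument avoids some such assumption --- but reproducing the lemma as stated requires making that fixed choice explicitly. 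Your proposal, as written, neither introduces this cap nor can it arrive at the claimed inequality for general $\epsilon$.
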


\begin{proof}
Recall equation \ref{eqn:rhoandt} from the proof of Theorem \ref{thm:sorting}:
$$
\rho \exp(\iota 2 \pi \phi)  =   \frac{1}{d_i}\sum_{j \in \cN_i} \frac{p_j}{p_i} \exp(\iota 2 \pi (t_j - t_i))
$$
where $p_i \geq p_j$ for all $j\neq i$ and $\cN_i$ is the set of nodes in the out-neighborhood of node $i$.  This can be rewritten as
$$
d_i \rho  =   \sum_{j \in \cN_i} \frac{p_j}{p_i} \exp(\iota 2 \pi (t_j - t_i - \phi)).
$$

 Plugging in $\rho = 1-\epsilon$ and $p_i = 1$ into the above, we see 
$$
d_i(1- \epsilon) \leq \sum_{j \in \cN_i} p_j \exp(\iota 2 \pi (t_j - t_i) -\phi)
$$

As $p_j \in [(1-\epsilon),1]$ for all $j$, the above is geometrically equivalent to choosing $d_i$ vectors with lengths in $[(1-\epsilon), 1]$ which sum to a vector of length greater than or equal to $d_i(1-\epsilon)$ at an angle of 0.   The maximum difference between the angle of any of these vectors can differ from the angle of the summation vector, 0, is given by letting one vector have unity length with a large deviation from 0 and taking the other $d_i-1$ vectors  to have the same, smaller deviation so that they close the triangle formed by the first vector and the vector of length $d_i(1-\epsilon)$.  The total length of these vectors should be chosen to be close to but less than  $d_i(1-\epsilon) + 1$, so that the triangle inequality holds.  Here, we use the length $d_i(1-\epsilon) + 0.99$.   This produces a triangle with sides $1$, $d_i(1-\epsilon) + 0.99$, and $d_i(1 - \epsilon)$.   The maximum deviation from $\phi$ is given by the angle opposite the side of length $d_i(1-\epsilon) + 0.99$ and can be solved for via the Law of Cosines:
$$
C = \cos^{-1} \left(  \frac{a^2 + b^2 - c^2}{2 a b}  \right) 
$$
Plugging in the appropriate values for $a, b$, and $c$ and simplifying, the fraction inside the inverse cosine becomes:
$$
\frac{1+d_i^2 (1 - \epsilon)^2 - (d_i(1-\epsilon)+0.99)^2}{2 d_i (1 - \epsilon)} = \frac{0.0199-1.98d_i(1-\epsilon)}{ 2 d_i (1 - \epsilon)} 
$$
This proves the claim of Lemma \ref{lemma:fuzzy2}. \\
\end{proof}

\begin{theorem}
Let $\hat{G}$ and $\epsilon$ be defined as in Theorem \ref{thm:fuzzy_sorting}.  Then, there exist eigenpairs of $\hat{B}$, $(\lambda_o, \bx)$ and $(\overline{\lambda_o}, \overline{\bx})$, which can be used to define a mapping into $\mathbb{R}^2$ such that each node $j$ in the highly 3-cyclic area of the network is mapped into a circle of radius $r$ where
$$
r ^2 =1 +\gamma^{2k} - 2\gamma^k \left( \frac{0.0199-1.98d_{max}(1-\epsilon)}{ 2 d_{max} (1 - \epsilon)} \right) 
$$
around vectors of length 1 at angles of $\frac{2\pi}{3}, \pi$, $\frac{4 \pi}{3}$, or $2\pi$ where $d_{max}$ is the maximum degree in the highly 3-cyclic region of the network, $k$ is the number of steps on the shortest path between node $i$ associated with $|x_i| \geq |x_l|$ for all $l$ and node $j$ and $\gamma = 1 - \epsilon \left( \frac{d_{max} + (1-\epsilon)^{-1}}{d_{max}-1} \right)$.
\label{thm:fuzzy_mapping}
\end{theorem}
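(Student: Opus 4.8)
The plan is to assemble Theorem~\ref{thm:fuzzy_mapping} from the three preceding results. Theorem~\ref{thm:fuzzy_sorting} supplies an eigenvalue $\lambda_o = (1-\epsilon)\exp(\iota 2\pi\phi) \in \sigma(\hat B)$ with $\phi \approx \tfrac13$, Lemma~\ref{lemma:fuzzy1} controls the decay of the eigenvector magnitudes, and Lemma~\ref{lemma:fuzzy2} controls the per-edge phase drift; Lemma~\ref{lemma:mapping} then turns the conjugate eigenpairs $(\lambda_o,\bx)$ and $(\overline{\lambda_o},\overline{\bx})$ into a genuine planar embedding. First I would fix this picture: writing $x_j = p_j\exp(\iota 2\pi t_j)$ and normalizing so that the magnitude-maximizing vertex $i$ (the one with $|x_i|\ge|x_l|$ for all $l$, which anchors the construction and lies in the cyclic region as in the pure case) sits at unit radius, Lemma~\ref{lemma:mapping} sends node $j$ to a vector of radius $p_j$ and argument $2\pi t_j$. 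This reduces the theorem to two one-dimensional estimates along a shortest path $i = j_0, j_1, \dots, j_k = j$: a radial bound $\gamma^k \le p_j \le 1$ and an angular bound placing $2\pi t_j$ within $\Delta := \cos^{-1}\big(\tfrac{0.0199-1.98 d_{max}(1-\epsilon)}{2 d_{max}(1-\epsilon)}\big)$ of a cluster center.

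For the radial estimate I would apply Lemma~\ref{lemma:fuzzy1} edge by edge along that path. The lemma guarantees that at each step $p$ decays by no more than a factor of $1-\epsilon$ or of $1-\epsilon\big(\tfrac{d_{j_\ell}+(1-\epsilon)^{-1}}{d_{j_\ell}-1}\big)$; replacing every local degree by $d_{max}$ makes the per-step factor uniform and equal to $\gamma$, so $p_j \ge \gamma^k$, while $p_j \le p_i = 1$ since $i$ maximizes the magnitude. For the angular estimate I would apply Lemma~\ref{lemma:fuzzy2} at each edge of the same path: each edge advances the argument by $\phi\approx\tfrac13$ up to an error of at most $\Delta$ (again using $d_{max}$ uniformly), so the running argument stays near an integer multiple of $2\pi\phi$, which is one of the cluster-center angles listed in the statement, and $2\pi t_j$ lands within the accumulated tolerance of that center.

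With the coordinate of $j$ pinned to radius in $[\gamma^k,1]$ and argument within $\Delta$ of the unit cluster-center vector $\bw$, the radius follows from a single application of the Law of Cosines to the triangle with vertices at the origin, at $\bw$ (side $1$), and at the image of $j$ (side $\gamma^k$ in the extremal case), with included angle $\Delta$:
$$
r^2 = 1 + \gamma^{2k} - 2\gamma^k\cos\Delta = 1 + \gamma^{2k} - 2\gamma^k\Big(\tfrac{0.0199-1.98 d_{max}(1-\epsilon)}{2 d_{max}(1-\epsilon)}\Big).
$$
Hence every such node falls inside the disk of radius $r$ about $\bw$, which is the claim.

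I expect the main obstacle to be the angular step rather than the radial one. Lemma~\ref{lemma:fuzzy2} bounds the phase error across a \emph{single} edge, whereas the embedding records the \emph{accumulated} argument after $k$ edges; a naive union over the path would give a deviation of order $k\Delta$ rather than the single $\Delta$ appearing in $r$. Making the clean statement honest therefore requires arguing that the dominant contribution to the distance from $\bw$ is carried by the final edge (or that the argument is re-anchored at each cluster center), together with the one-variable check that the farthest admissible point from $\bw$ is realized at the corner $(\gamma^k,\Delta)$ rather than at $(1,\Delta)$—which I would settle by examining the sign of $\partial_{\mu}(1+\mu^2-2\mu\cos\Delta)$ on $[\gamma^k,1]$. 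I would flag the phase-accumulation issue as the delicate point, where the bound is effectively a worst-case-per-step estimate.
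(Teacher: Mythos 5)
Your proposal follows essentially the same route as the paper's proof: take the eigenvector guaranteed by Theorem~\ref{thm:fuzzy_sorting} scaled so the magnitude-maximizing vertex has unit modulus, apply Lemma~\ref{lemma:fuzzy1} along a shortest path to get the radial bound $p_j \geq \gamma^k$, apply Lemma~\ref{lemma:fuzzy2} for the angular deviation $\delta$, and finish with the Law of Cosines to obtain $r^2 = 1 + \gamma^{2k} - 2\gamma^k\cos\delta$. The phase-accumulation concern you flag is genuine, but it is equally present in the paper's own proof, which applies the single-edge bound of Lemma~\ref{lemma:fuzzy2} as though it were the total angular deviation after $k$ steps without further argument---so your write-up matches the published proof and is, if anything, more candid about its weak point.
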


\begin{proof}
By Theorem \ref{thm:fuzzy_sorting}, there exists $\lambda \in \sigma(B)$ such that $\left| \lambda - \theta_{1,3}\right| < \epsilon$.  Let this eigenvalue be $\lambda_o$ with right eigenvector $\bx$ scaled so that $\max_{i} |x_i| = 1$.  Then, by Lemmas \ref{lemma:fuzzy1}, given node $j$ associated with entry $x_j = p_j\exp(i2\pi t_j)$, $p_j \geq \left(1 - \epsilon \left( \frac{d_j + (1-\epsilon)^{-1}}{d_j-1} \right)\right)^{s}(1-\epsilon)^{k-s}$ for some $0 \leq s \leq k-1$, so $p_j \geq \left(1 - \epsilon \left( \frac{d_j + (1-\epsilon)^{-1}}{d_j-1} \right)\right)^k = \gamma^k$.    By Lemma \ref{lemma:fuzzy2}, the maximum deviation from the angle $\phi$ is given by $\delta = \cos^{-1} \left( \frac{0.0199-1.98d_i(1-\epsilon)}{ 2 d_i (1 - \epsilon)} \right)$.

Then, $x_j$ can be plotted into $\mathbb{R}^2$ to form a vector of length minimum, $\left(1 - \epsilon \left( \frac{d_j - (1-\epsilon)^{-1}}{d_j-1} \right)\right)^k$, with an angle of at most $\delta$ between it and a vector of length 1 at an angle of $\frac{2\pi}{3}, \frac{4\pi}{3}$, or $2\pi$.  By the Law of Cosines, the distance $r$ between the tips of the two vectors is given by
$$
r^2 = 1^2 + \left(1 - \epsilon \left( \frac{d_j + (1-\epsilon)^{-1}}{d_j-1} \right)\right)^{2k} 
- 2 \left(1 - \epsilon \left( \frac{d_j + (1-\epsilon)^{-1}}{d_j-1} \right)\right)^k \cos(\delta) .
$$
Replacing $d_i$ and $d_j$ with $d_{max}$, plugging in for $\delta$, and simplifying completes the proof. \\
\end{proof}

Theorem \ref{thm:fuzzy_mapping} provides bounds on how well grouped the nodes in $V_0, V_1,$ and $V_2$ will be when embedded into $\mathbb{R}^2$ using the methodology from Lemma \ref{lemma:mapping}. When both $\epsilon$ and $d_{max}$ are small, the radius, $r$, of the circle into which the nodes are mapped is close to 0 for nodes within one step of node $i$ and grows slowly with respect to $k$.  In graphs $G$ where the highly 3-cyclic region of the graph is small and the three groups of nodes $V_0, V_1$, and $V_2$ have many connections between them, it can be expected that most nodes in $G$ are within three steps of node $i$ and, thus, will be mapped to three highly clustered areas in $\mathbb{R}^2$.  This identifies the nodes in the three groups which compromise the highly 3-cyclic region of the network.  However, even in networks where $\epsilon$ is larger, the network has high degrees, or there are many nodes in the highly 3-cyclic region which are more than three steps of $i$, experimentally almost always at least one node from each of $V_0$, $V_1$, and $V_2$ will be well-separated from nodes that are not in the 3-cyclic region.  This can be seen in the examples shown in Section \ref{sec:experiments}.

\section{Experiments}
\label{sec:experiments}
In this section, we show the effectiveness of the above methods for finding highly 3- and 4-cyclic regions in a variety of networks, both generated using a stochastic block model and from a variety of real world applications.  In the following experiments, we restrict ourselves to the examination of smaller networks, so that all of the eigenvalues of the row stochastic adjacency matrices can be computed explicitly.  However, in applications with larger datasets, eigenvector approximation methods can be used (see \cite{MaSo96}, among others).  In the experiments below, we calculated all of $\sigma(B)$ and the associated eigenvectors with MATLAB's {\tt eig()} function, which uses the QZ-algorithm for non-symmetric matrices.   All the eigen-residuals have norm less than 1e-14.

The technique for finding highly-cyclic structure we present in this work makes use of embeddings from complex-valued eigenvectors associated with particular complex-valued eigenvalues  of the {\em row-stochastic propogator}, $B$.   For highly 3-cyclic structure, the eigenvector associated with the eigenvalue closest to $\theta_{1,3} = \exp( \iota 2 \pi / 3)$ provides indication of the desired structure.

\subsection{Stochastic Block Models}
\label{sec:experiments1}
Initially, we examine the ability of our proposed methods to identify highly cyclic structures in models with a considerable amount of ground truth.  To begin, we examine the network described in Example \ref{ex0}.   See the left side of Figure~\ref{fig:ex0c} for a plot of the spectrum for Example~\ref{ex0}, $\numext=2$.   The left eigenvector $\by$ of $D^{-1} A$ associated with $\theta_{1,3}$ is complex-valued, and we have a 2d embedding corresponding to the real and imaginary parts of $\by$.   For each vertex $i$ we have the spectral coordinate $(\mbox{Re}\, y_i, \,\mbox{Im}\, y_i)$.   The middle of Figure~\ref{fig:ex0c} shows this embedding for t Example~\ref{ex0}, $\numext=2$, and the right plot shows the similar embedding for the corresponding right eigenvector.   Spatial clustering (such as dbscan \cite{EsKrSaXu96}) easily picks out the classes consisting of non-overlapping 3-cyclic (red/green/blue triangles), overlapping 3-cyclic (red/blue squares), and overlapping classical (magenta/ yellow squares).   All non-overlapping classical community structure is mapped near the origin and clustered together.   The plot of the spectrum and the 2d embeddings do not change qualitatively when more external community structure is added (we tested the $\numext=8$ and 14 cases, and the additional external communities were embedded near the origin without significant changes to the coordinates associated with the 3-cyclic structure).   2d spatial clustering precisely detects the highly-cyclic structure in all cases.

\begin{figure}
\centering
\includegraphics[width = 2.1in]{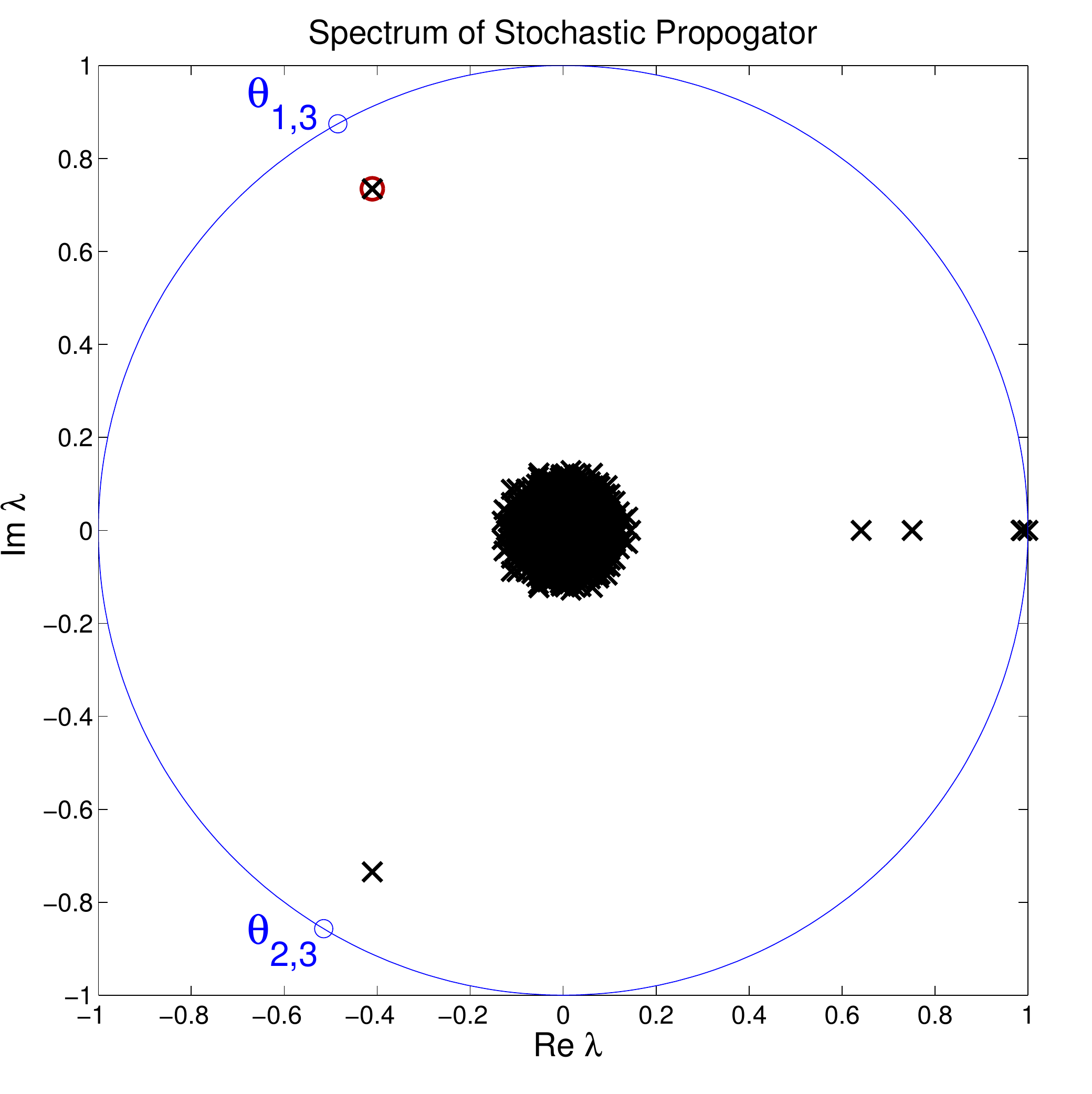}
\includegraphics[width = 2.1in]{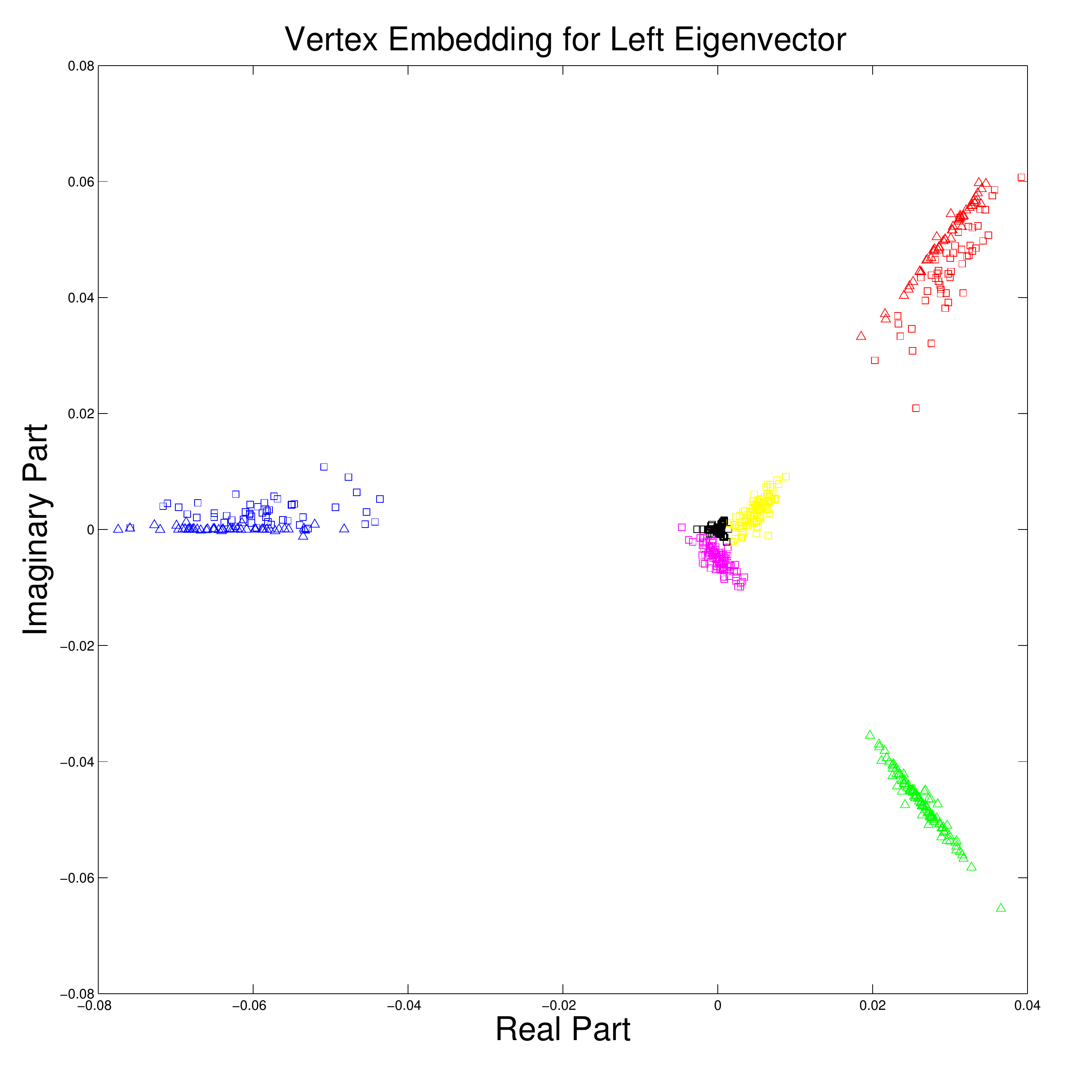}
\includegraphics[width = 2.1in]{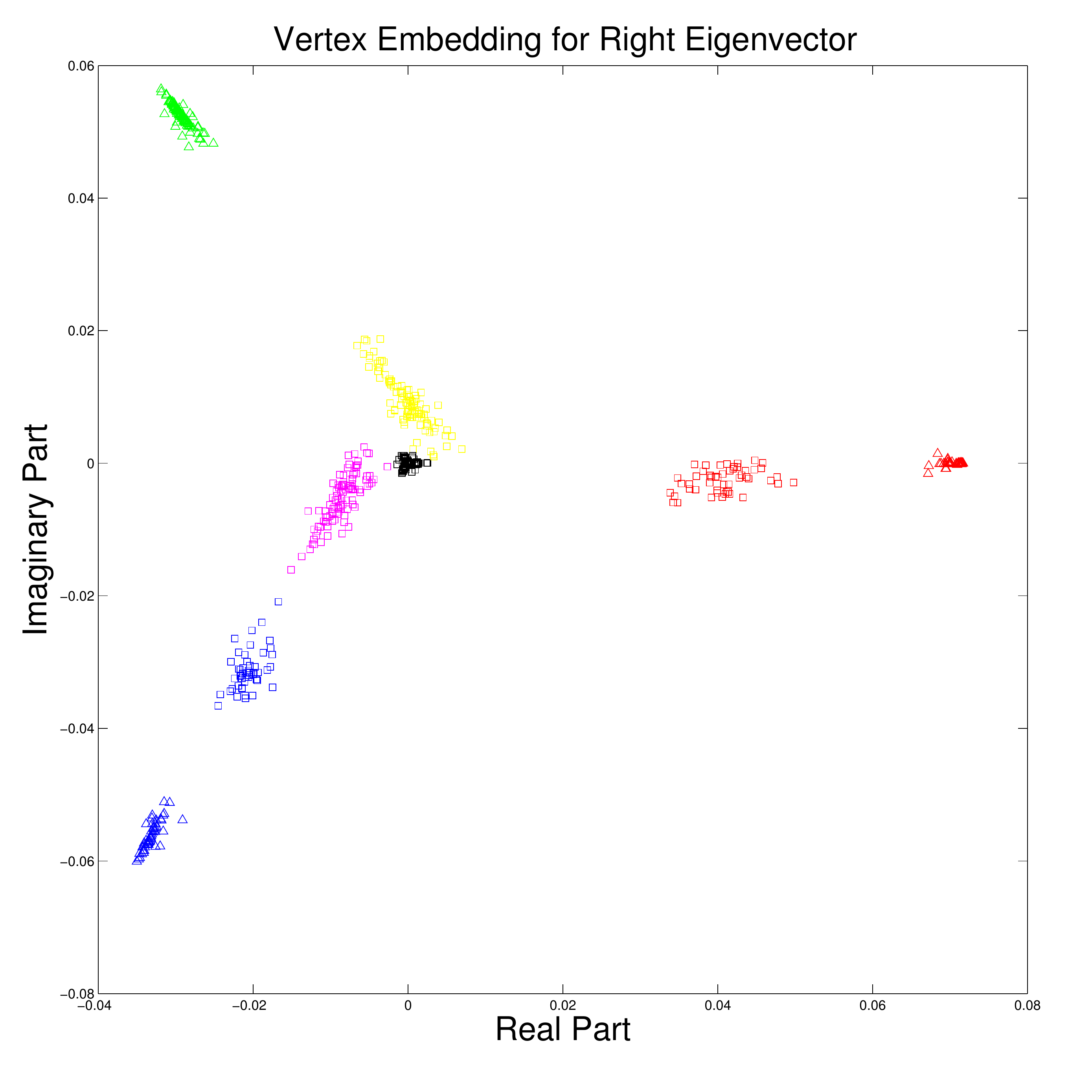} 
\caption{On the left the black x's mark the complex-valued spectrum of $D^{-1} A$ for Example~\ref{ex0}, $\numext=2$.   The eigenvalue closest to $\theta_{1,3} = e^{\iota 2 \pi / 3}$ is of interest (upper-left, circled in dark red).   On the right we plot the real and imaginary parts of the eigenvector associated with this eigenvalue.    The color scheme for these embeddings match that of Figure~\ref{fig:ex0b}.}
\label{fig:ex0c}
\end{figure}

This example suggests that a planar embedding from a single complex eigenpair may be more robustly useful for detecting highly-cyclic structure than a high-dimensional SVD-based embedding.   The eigenvector approach does not require higher-dimensional embeddings for graphs with larger number of non-cyclic structures.   Spatial clustering is greatly simplified, and no search for a useful projection is necessary.   Follow-on analysis for grouping the classes into cyclic structures is also greatly simplified.   The embedding from the left eigenvector seem to be more useful for separating vertices in the 3-cyclic structure from the communities they overlap, whereas that from the right eigenvector does a better job of breaking up the 3-cyclic structure in to the vertices that are internal to the 3-cyclic structure and those that overlap with some classical community structure (see the middle and right of Figure~\ref{fig:ex0c}).   In this work, we focus our analysis on the embedding associated with the left eigenvector, but remark that extending this analysis to the right eigenvector and understanding the interplay of information from both embeddings is an exciting next step. 

Next, we use a stochastic block model to create a network with a blend of non-overlapping non-cyclic, highly 2-cyclic, 3-cyclic, and 4-cyclic substructure.   Specifically, we build a synthetic digraph using a stochastic block model generator that has one classical random digraph community, one two-cyclic community, one three cyclic community, and one four-cyclic community all containing the same number of vertices.  This leads to a network with 10 mutually exclusive groups of vertices, $V_k$, $k=0, ..., 9$, with sizes 
$$
|V_0| = 120, \,
|V_1| = |V_2| = 60, \,
|V_3| = \cdots = |V_5| = 40, \, \mbox{ and } \,
|V_6| = \cdots = |V_9| = 30.
$$  Thus, $|V_0| = |V_1 \cup V_2| = |V_3 \cup \cdots \cup V_5| = |V_6 \cup \cdots \cup V_9| = 120$.

The existence of any edge is governed by one of two probabilities, $\rho_{in}$ and $\rho_{out}$ with $\rho_{in} >> \rho_{out}$.  Then, the probability of directed edge $(i,j) \in E$, $P_{ij}$ is dependent only on the group memberships of $j$ and $i$ where:
$$P_{ij}=\left\{\begin{array}{ll}
\rho_{in},& \textnormal{ if } i,j \in V_0,\\
\rho_{in},& \textnormal{ if } i \in V_1 \textnormal{ and } j \in V_2,\\
\rho_{in},& \textnormal{ if } i \in V_2 \textnormal{ and } j \in V_1,\\
\rho_{in},& \textnormal{ if } i \in V_3 \textnormal{ and } j \in V_4,\\
\rho_{in},& \textnormal{ if } i \in V_4 \textnormal{ and } j \in V_5,\\
\rho_{in},& \textnormal{ if } i \in V_5 \textnormal{ and } j \in V_3,\\
\rho_{in},& \textnormal{ if } i \in V_6 \textnormal{ and } j \in V_7,\\
\rho_{in},& \textnormal{ if } i \in V_7 \textnormal{ and } j \in V_8,\\
\rho_{in},& \textnormal{ if } i \in V_8 \textnormal{ and } j \in V_9,\\
\rho_{in},& \textnormal{ if } i \in V_9 \textnormal{ and } j \in V_6,\\
\rho_{out}, & \textnormal{ else.} \\
\end{array}\right .
$$
That is, the probability of any specific edge is given by $\rho_{in}$ if it falls within the dictated community structure and by $\rho_{out}$ otherwise.  In the example displayed below, we set $\rho_{in} = 0.80$ and $\rho_{out} = 0.01$.   We remark that the non-cyclic community $V_0$ has the most internal edges with high probability, implying it is by far the strongest community in the classical sense.   See the left half of Figure \ref{fig:ex1} for the adjacency matrix $A$ associated with a sample from this digraph generator and the spectrum of $B = D^{-1}A$.

\begin{figure}[h]
\centering
\includegraphics[draft = false, width = .53\textwidth]{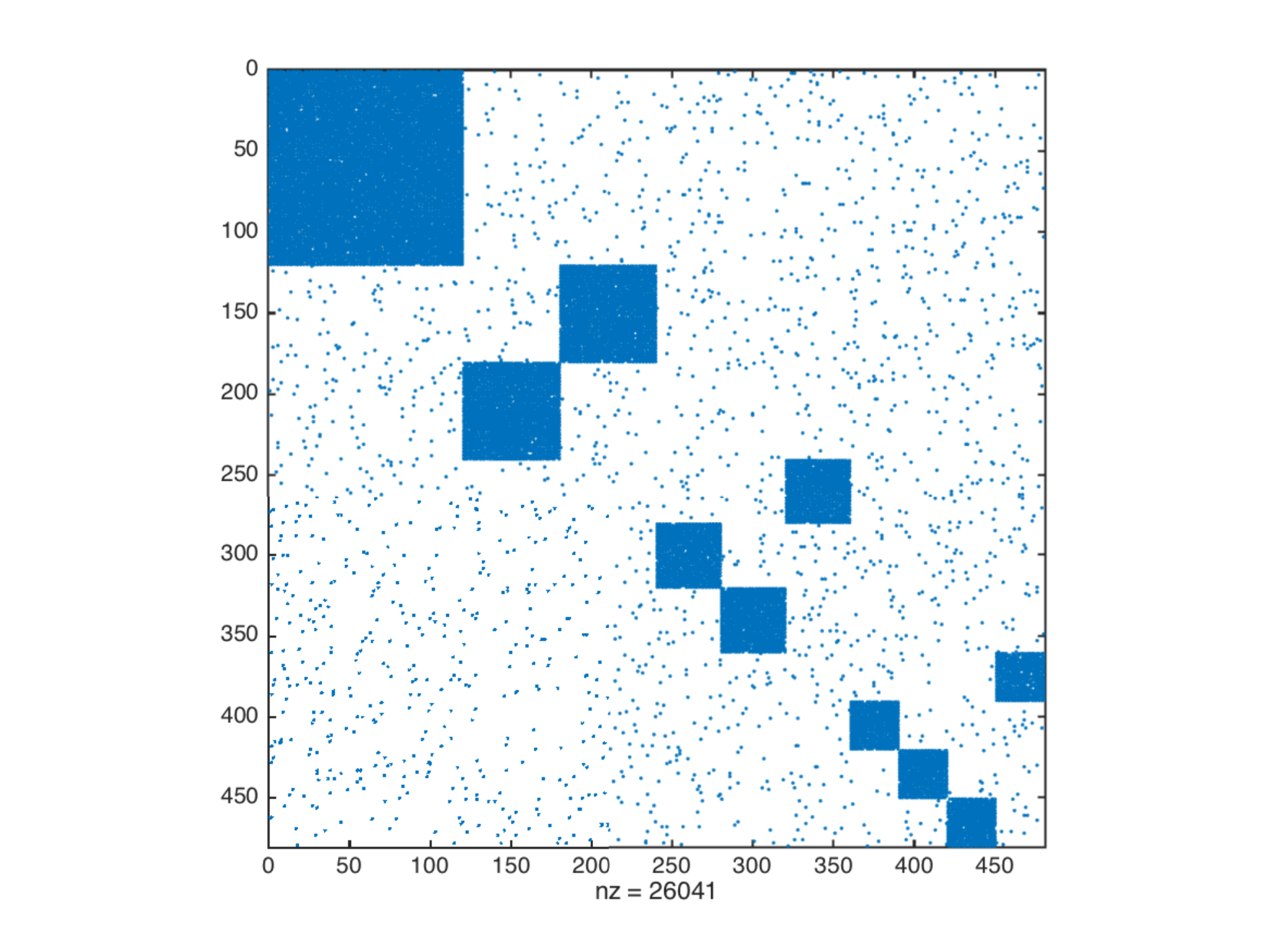} 
\hspace{0.15in}
\includegraphics[draft = false, width = .43\textwidth]{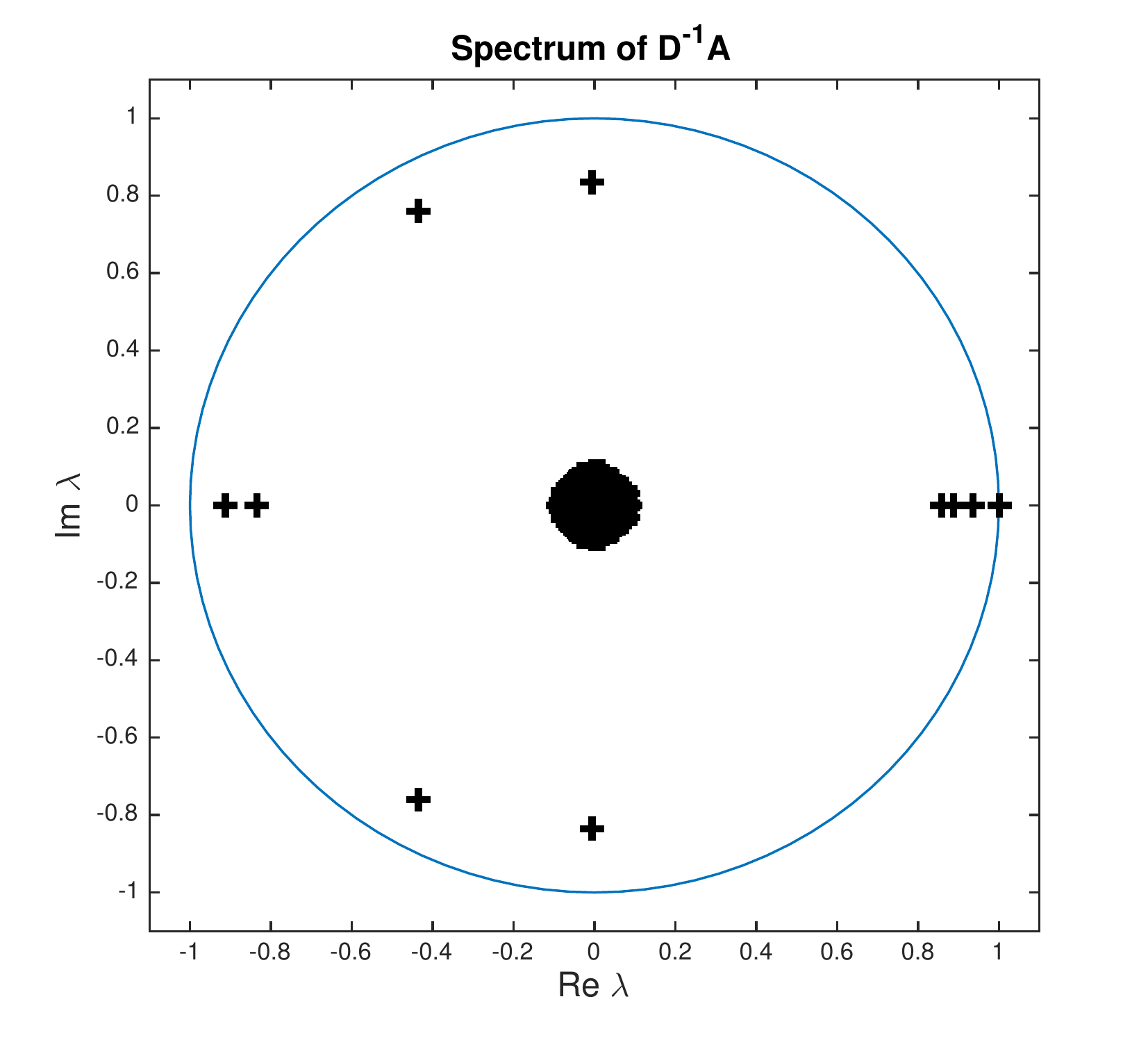} 
\caption{The adjacency matrix (left) and spectrum of row stochastic adjacency matrix (right) of a test problem generated using a stochastic block model.}
\label{fig:ex1}
\end{figure}

We investigate the properties of $\sigma(B)$, focusing on the properties of the eigenspaces associated with the eigenvalues closest to $\theta_{1,3}$ and $\theta_{1,4}$ and verifying that they help us identify which vertices are in highly 3- and 4-cyclic structures, respectively.  A plot of the calculated spectrum in the complex plane can be found on the right half of Figure \ref{fig:ex1}.  The eigenvalue closest to $\theta_{1,3}$ is $\lambda \approx -0.4347 + 0.7535\iota$ (so, $|\theta_{1,3} -\lambda | \approx 0.1301$) and that closest to $\theta_{1,4}$ is $\lambda \approx 0.0010 + 0.8291\iota$ ($|\theta_{1,4} - \lambda | \approx 0.1709$)).  The bounds presented in Theorem \ref{thm:fuzzy_sorting} predict that $|\theta_{1,3} -\lambda| \leq 0.4001$ based on the size of the 3-cyclic region at 120 vertices, the expected value of $d_i = 32$ based on $\rho_{in}$ and the expected value of $\hat{d}_i = 36.4$.  In this example, the bounds on $|\theta_{1,3} -\lambda|$ are not particularly tight, but if they were the eigenvalue in question would still be well separated from the cluster near zero and the 3-cyclic region would be identifiable.

\begin{figure}[h]
\centering
\includegraphics[draft = false, width = 0.48\textwidth]{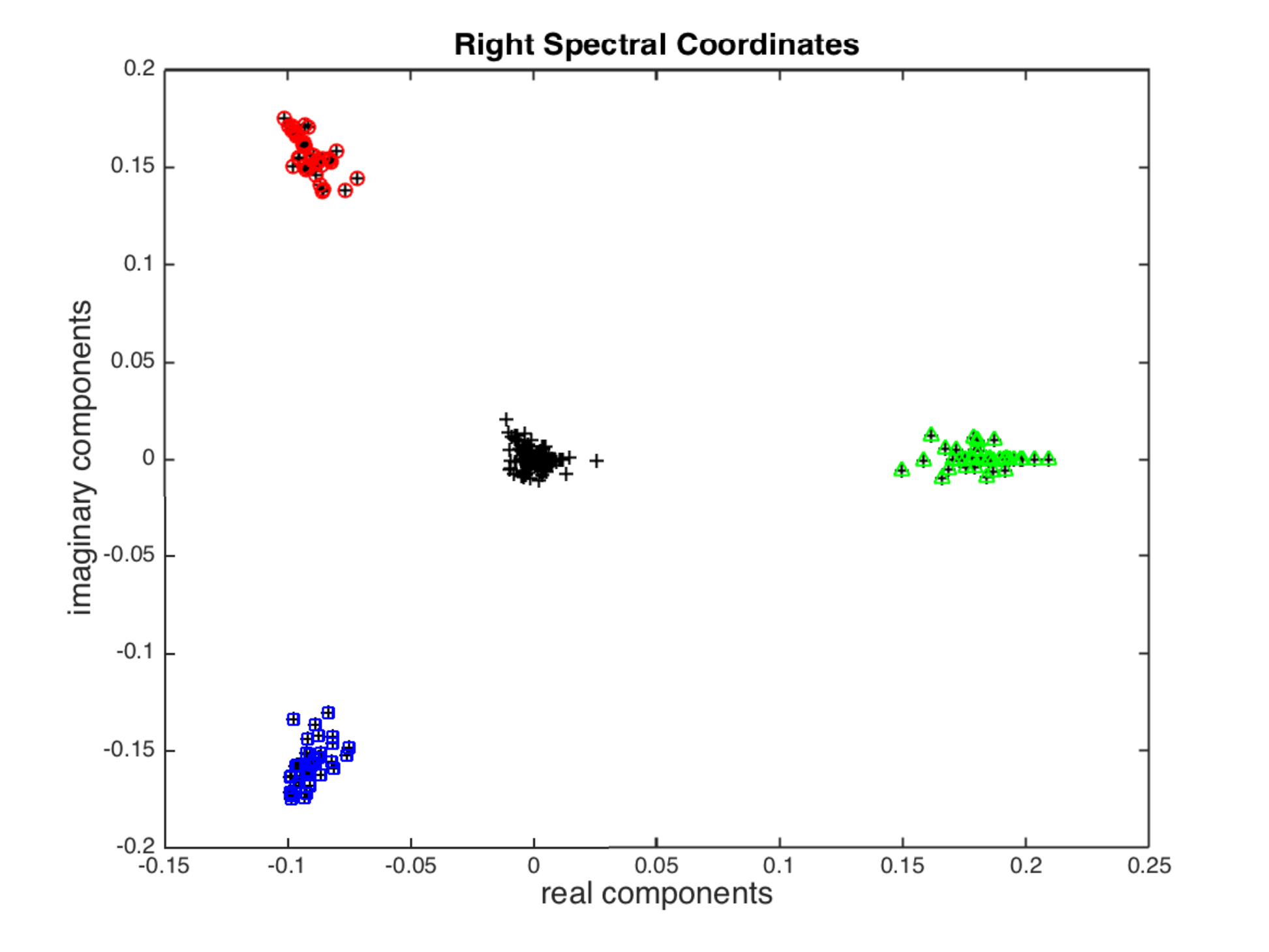}
\hspace{0.15in}
\includegraphics[draft = false, width = 0.48\textwidth]{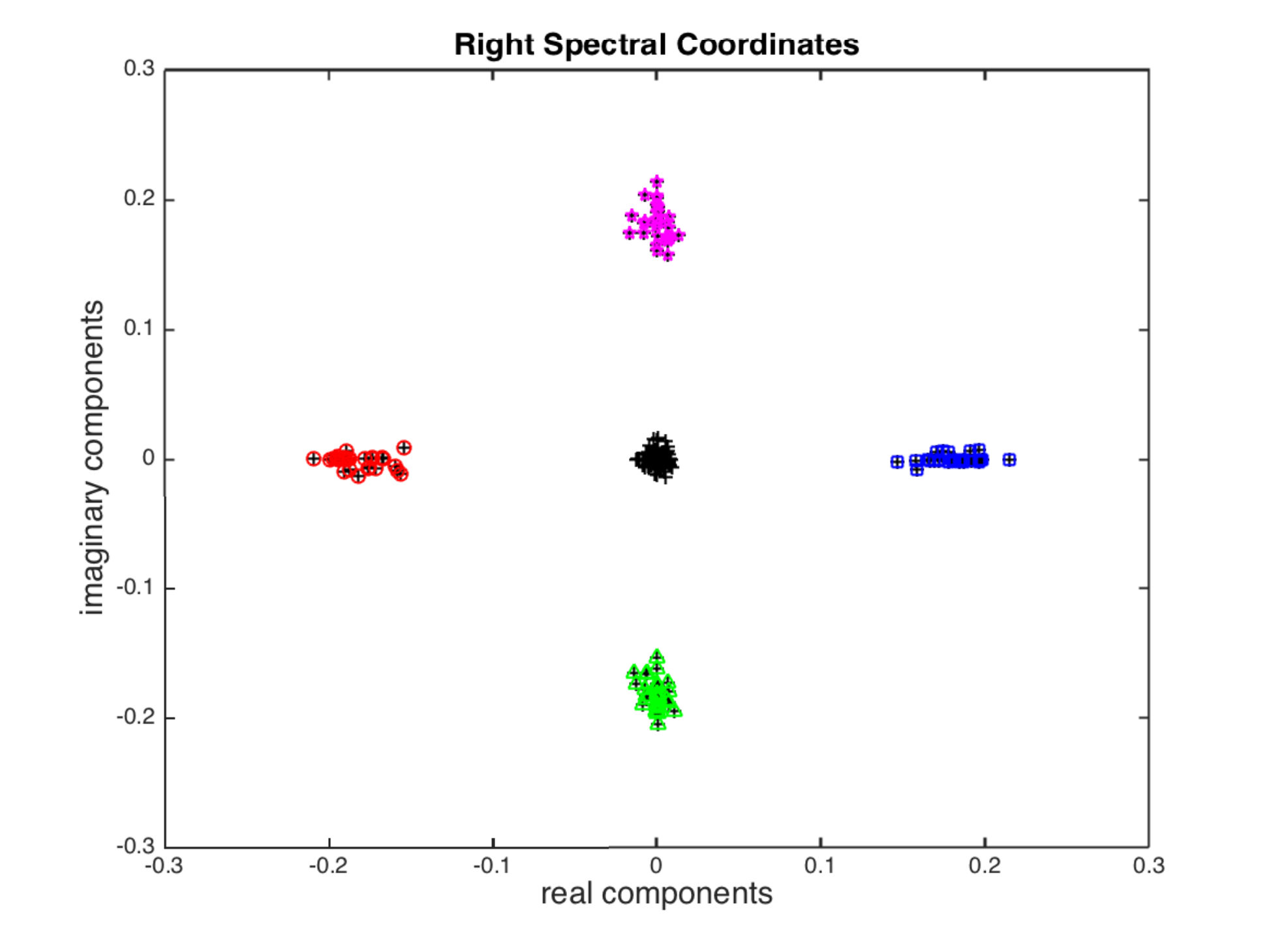} 
\caption{Coordinate embeddings using the right eigenvectors associated with $\lambda \approx \theta_{1,3}$ (left) and $\lambda \approx \theta_{1,4}$ (right) of the synthetic digraph whose adjacency matrix is shown in Figure \ref{fig:ex1}. Black nodes were unclustered in these embeddings.}
\label{fig:ex1scs}
\end{figure}

In Figure \ref{fig:ex1scs}, we embed the nodes of our generated network into $\mathbb{R}^2$ using the right eigenvectors associated with $\lambda \approx -0.4347 + 0.7535\iota \approx \theta_{1,3}$ on the left and $\lambda \approx 0.0010 + 0.8291\iota \approx \theta_{1,4}$.  In the embedding on the right, associated with $\lambda \approx \theta_{1,3}$, the nodes in groups $V_3, V_4,$ and $V_5$ are colored red, green, and blue respectively.  On the right, in the embedding associated with $\lambda \approx \theta_{1,4}$, the nodes in groups $V_6, V_7, V_8,$ and $V_9$ are colored red, green, blue, and magenta, respectively.  In both embeddings, nodes in all other groups are colored black.  In this plot, it is easy to see that the eigenvector associated with $\lambda \approx \theta_{1,3}$ perfectly classifies the 3-cyclic structure in $V_3 \cup V_4 \cup V_5$, while essentially ignoring all other nodes in the network.  According to Lemma \ref{lemma:fuzzy1} (which states that the magnitude of the node embeddings for nodes in the 3-cyclic structure decays no faster than $(1-\epsilon)$ slowly, where $\epsilon = |\theta_{1,3} -\lambda |$) the magnitude of the node embeddings for the red, blue, and green nodes should decay no faster than $(1-\epsilon) \approx 0.8699$.  As seen in the embedding on the left of Figure \ref{fig:ex1scs}, the magnitudes of the node embeddings are within this bound (and often decay even more slowly).  Given that the maximum out-degree in the highly 3-cyclic region of the network is 46 and the node with the largest magnitude in the eigenvector associated with $\lambda \approx \theta_{1,3}$ is node 319, Theorem \ref{thm:fuzzy_mapping} does not provide meaningful information, in that it states that nodes within one step of nodes 319 will be embedded into a circle of radius 1.8904 around a vector of length 1 at an angle of $\frac{2\pi}{3}$, which encompasses the total area in which the nodes have been embedded.  However, even though $\epsilon$ and $d_{max}$ are large enough that Theorem \ref{thm:fuzzy_mapping} is not useful, the nodes are still well separated.

Similarly, the information in the eigenvector associated with $\lambda \approx \theta_{1,4}$ perfectly classifies the 4-cyclic structure in $V_6 \cup \ldots \cup V_9$.  This is to be expected as both the 3-cyclic and 4-cyclic communities are well isolated from the rest of the network.  We will see in Section \ref{sec:real_exp}, that in networks where cyclic structure is not as well isolated (which is the case in many real world complex networks) embedding the nodes no longer fully isolates cyclic communities.


\subsection{Real-World Graphs}
\label{sec:real_exp}
Here, we search for highly 3- and 4-cyclic structure in two real-world directed graphs.  The two graphs we consider here, the Stanford CS web graph and the Enron email network, both of which can be found in the University of Florida Sparse Matrix Collection \cite{UFSparse}.  We calculate the largest strongly connected component (SCC) of both networks using the {\tt MatlabBGL} toolbox \cite{MatlabBGL} and performing the subsequent analysis only on the largest SCC.

The Stanford CS web graph is part of the Gleich group in the UF collection.  Here, the nodes are websites in the Stanford CS domain from 2001 and there is an edge $(i,j) \in E$ if website $i$ links to website $j$.  The original network has 9,914 nodes and 36,854 edges.  The largest strongly connected component has 2,759 nodes and 13,895 edges.  The adjacency matrix of the largest SCC can be seen on the left of Figure \ref{fig:real1} and the spectrum of the row stochastic adjacency matrix is displayed on the right.  The closest eigenvalue to $\theta_{1,3}$ is given by $\lambda \approx -0.4671 + 0.8249\iota$, thus $|\theta_{1,3} - \lambda| \approx 0.0527$.  

\begin{figure}[h]
\centering
\includegraphics[width = .52\textwidth]{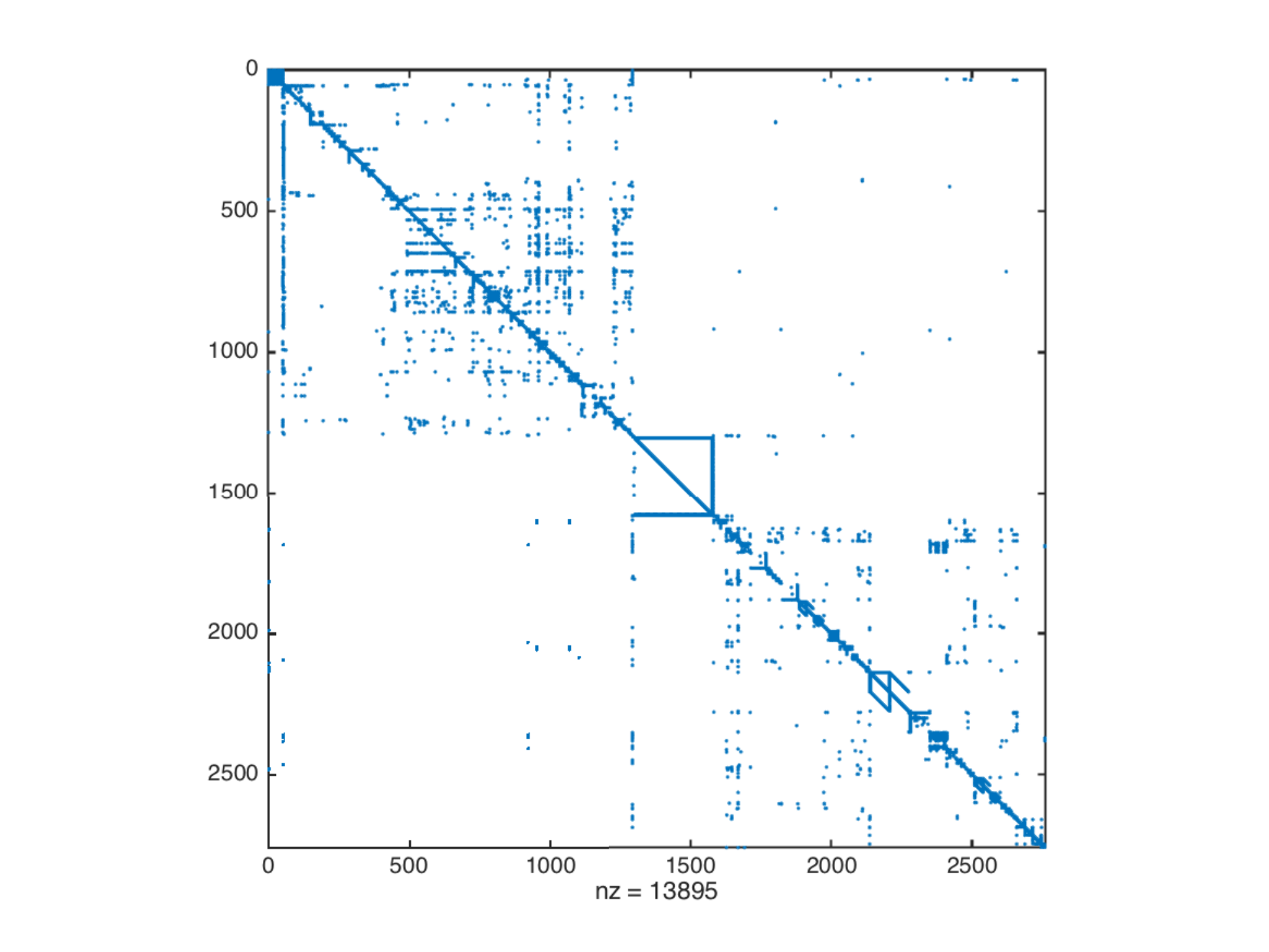} 
\hspace{0.15in}
\includegraphics[width = .44\textwidth]{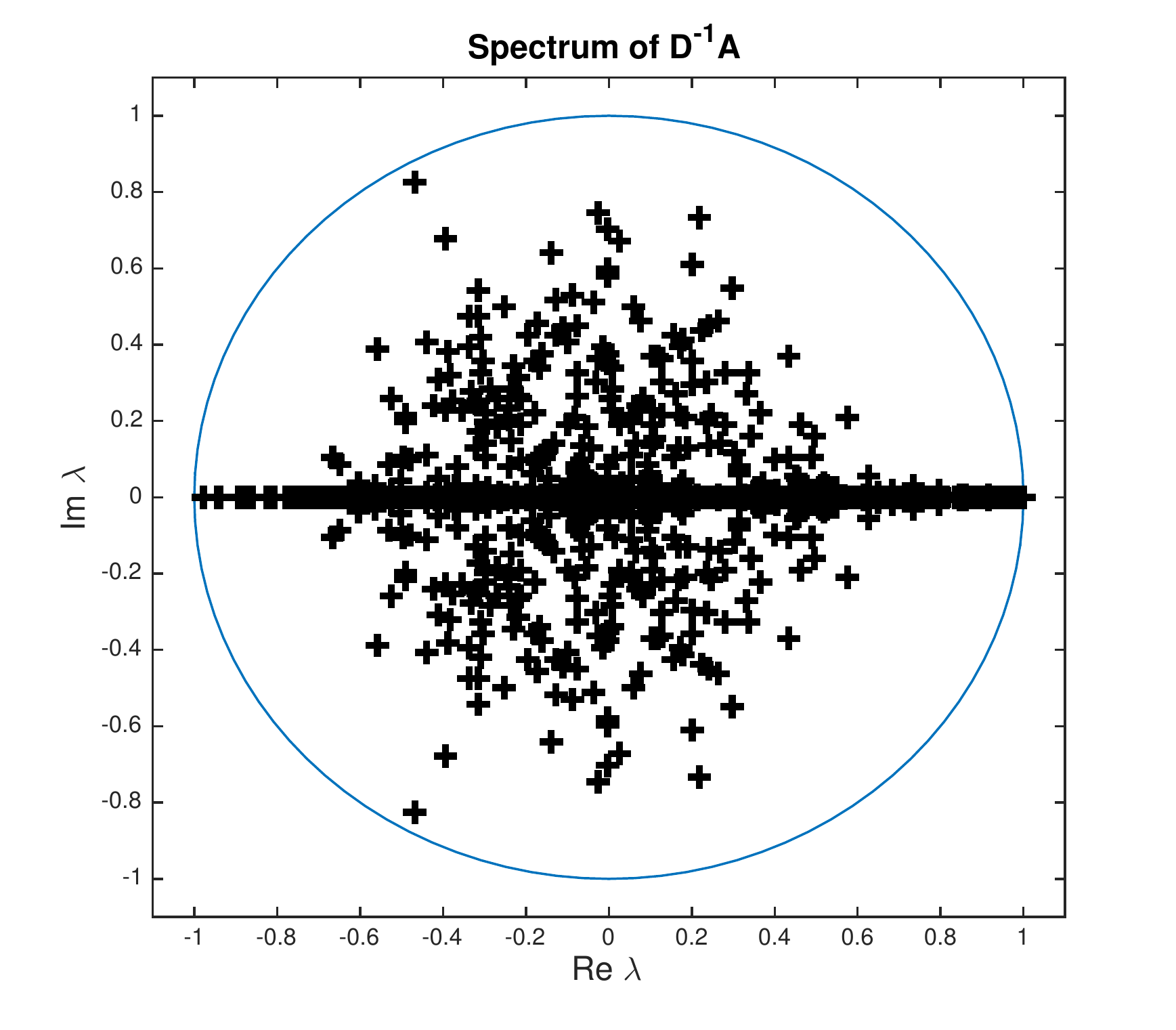} 
\caption{The adjacency matrix (left) and spectrum (right) of the 2001 Stanford webpage network.}
\label{fig:real1}
\end{figure} 

The closeness of $\lambda$ to $\theta_{1,3}$ indicates that there is some highly 3-cyclic structure in the Stanford CS web graph which is well-separated from the rest of the network.  The highest degree in the network is 277 (although, without further analysis it is not clear whether or not the node associated with this degree is in the highly 3-cyclic region of this network).  Using 277 as an upper bound on the maximum degree of a node in the highly 3-cyclic region, Lemma \ref{lemma:fuzzy1} states that the magnitude of the node embeddings will decay no faster than a rate of approximately 0.9469.  The embedding nodes of the Stanford CS web graph into $\mathbb{R}^2$ using the eigenvector associated with $\lambda \approx \theta_{1,3}$ are displayed in Figure \ref{fig:real1_3cyclic}.  Given the large decay rates in the embeddings, all nodes in the 3-cyclic region appear to be well separated.   However, it does not take much connectivity between the 3-cyclic area and the rest of the graph to lead to nodes which are embedded between the 3-cyclic nodes and the rest of the graph, especially when the probability of edges among the 3-cyclic groups is not as high as in the generated networks (this phenomena can also been seen in Figure \ref{fig:ex0c}). Without more information about the exact websites which are involved in this 3-cyclic structure, it is difficult to speculate on an explanation for this 3-cyclic structure.  The complex eigenvalues, however, identify that this 3-cyclic structure exits and provide a starting point for deeper analysis.

\begin{figure}[h]
\centering
\includegraphics[width = .48\textwidth]{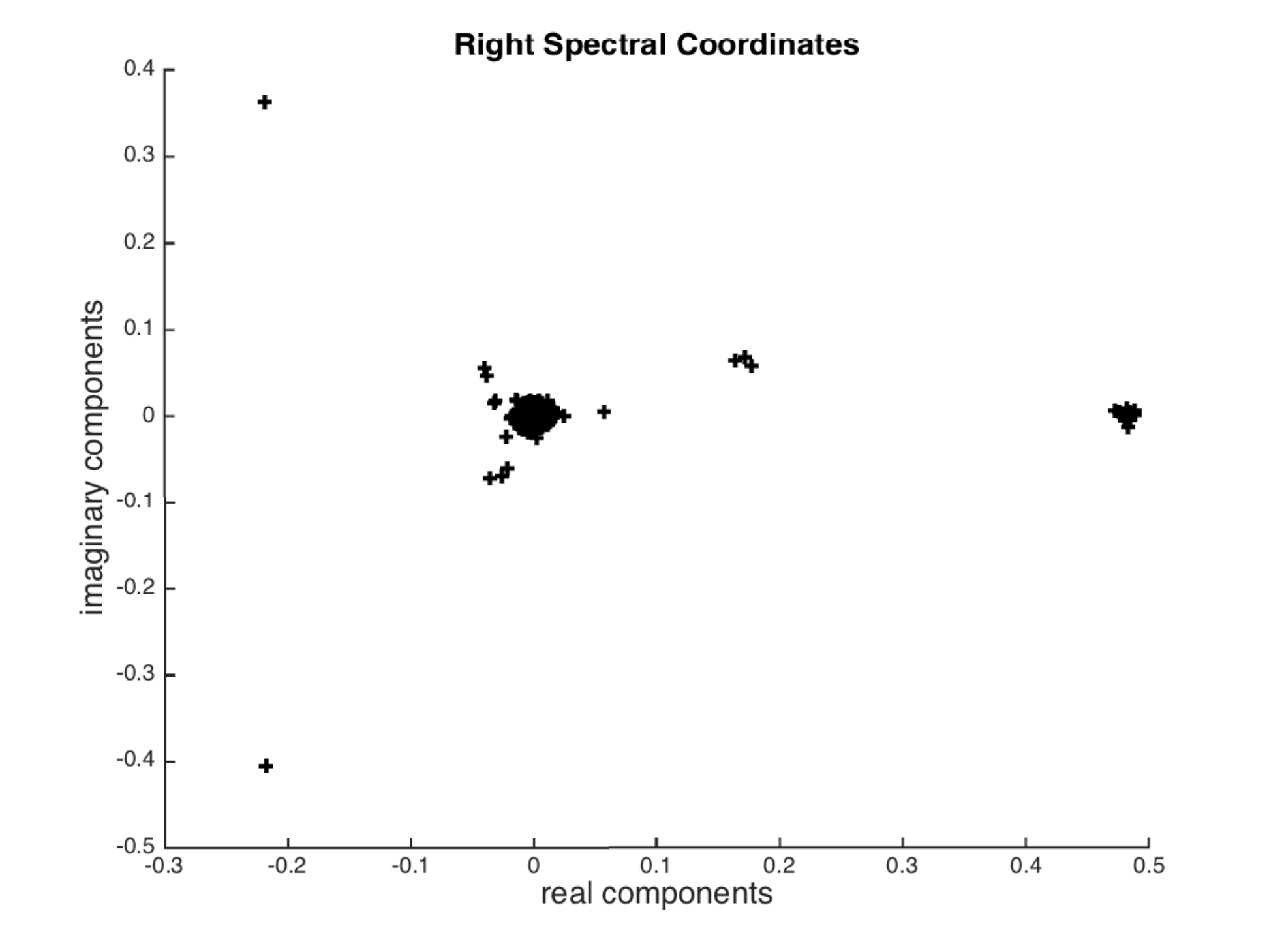} 
\caption{Coordinate embeddings using the right eigenvector associated with $\lambda \approx \theta_{1,3}$ of the 2001 Stanford webpage network.  In this figure, a small amount of Gaussian noise has been added to distinguish nodes in the 3-cyclic region that were embedded on top of each other.}
\label{fig:real1_3cyclic}
\end{figure} 

The embedding nodes of the Stanford CS web graph into $\mathbb{R}^2$ using the eigenvector associated with $\lambda \approx \theta_{1,3}$ are displayed in Figure \ref{fig:real1_3cyclic}.  Here, it is clear that the majority of the nodes in the network are not clearly identified as belonging to the 3-cyclic structure, however there is at least one clearly identified node from each of the 3-cyclic groups.  These can be used as seed nodes in other community detection networks, such as \cite{KlKl14,WhGlDh13} and many others.  Without more information about the exact websites which are involved in this 3-cyclic structure, it is difficult to speculate on an explanation for this 3-cyclic structure.  The complex eigenvalues, however, identify that this 3-cyclic structure exits and provide a starting point for deeper analysis and initial analysis indicates that the structure involves links to and from style files.

The next network we examine is the Enron email network.  The version considered in this paper was provided by the Laboratory for Web Algorithmics (LAW) at the Universita degli Studi di Milano and can be found in the LAW group in the UF collection.  In this network, nodes are email addresses and there is an edge from node $i$ to node $j$ if email address $i$ sent an email to address $j$.  The original network has 69,244 nodes and 276,143 directed edges.  The largest strongly connected component has 8,271 nodes and 147,353 edges.  That is, over half of the edges in the original network are present in the largest SCC even though it contains only about 12\% of the original nodes.  The adjacency matrix of the largest SCC of the Enron email network can be found in the left half of Figure \ref{fig:real2}. The spectrum of the row stochastic adjacency matrix can be found on the right.

\begin{figure}[h]
\centering
\includegraphics[width = .52\textwidth]{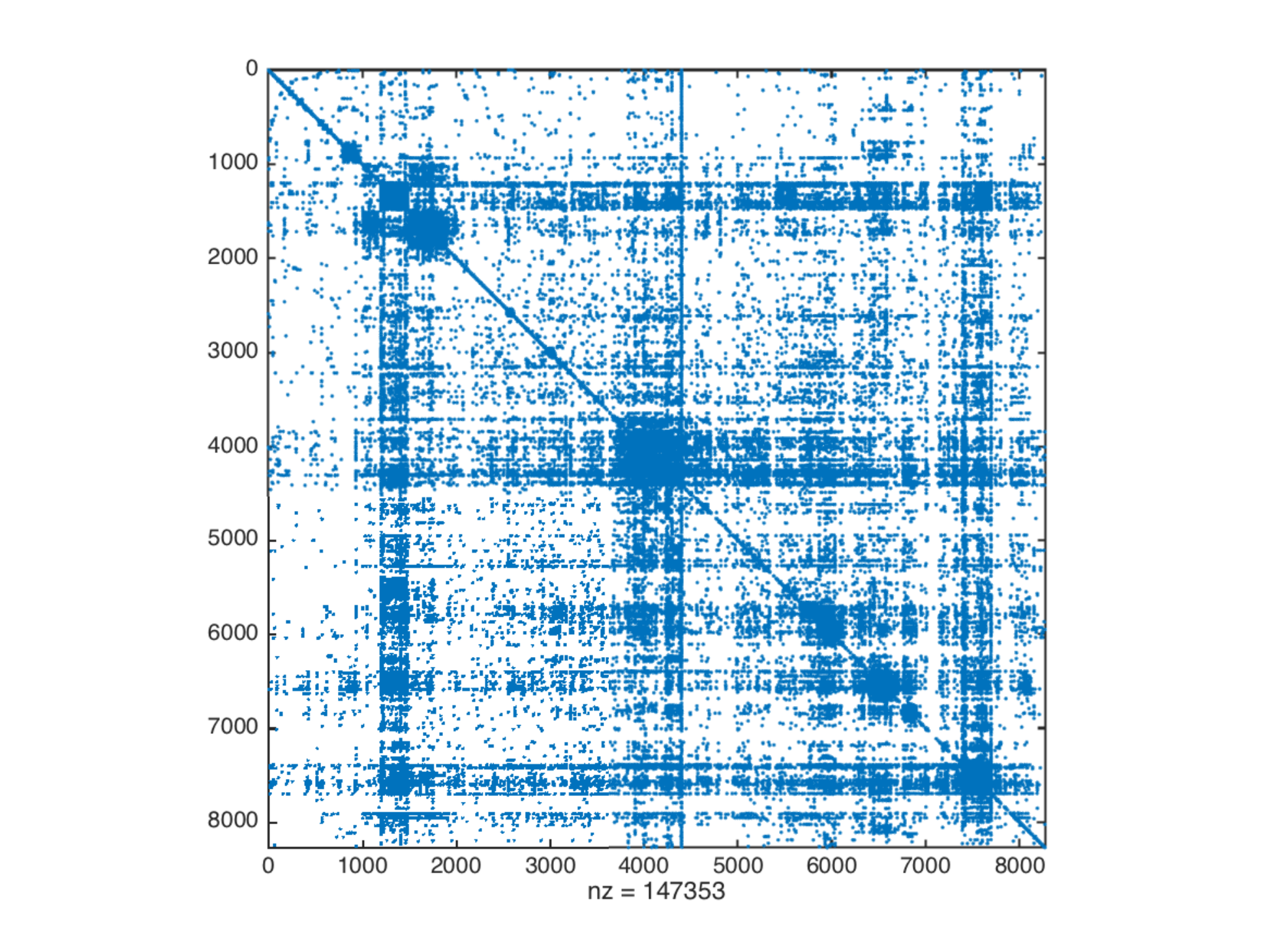} 
\hspace{0.15in}
\includegraphics[width = .42\textwidth]{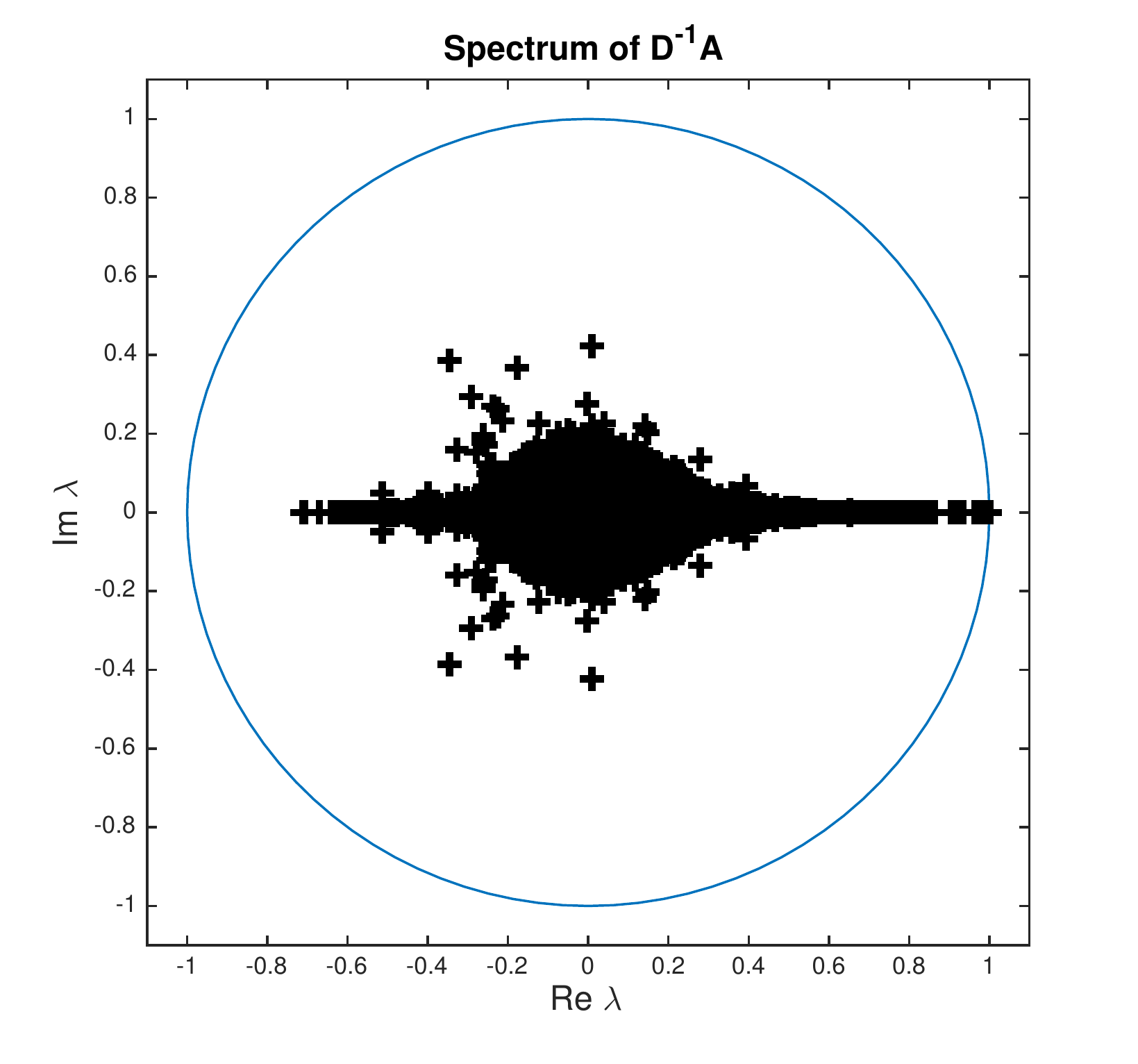} 
\caption{The adjacency matrix (left) and spectrum (right) of the Enron email network.}
\label{fig:real2}
\end{figure} 

The eigenvalues of the Enron email network are concentrated much closer to the origin than in the case of the other networks examined in this paper, indicating that any substructure in the network is very interconnected with the network as a whole.  However, there are still eigenvalues which are separated from the main cluster in the directions of $\theta_{1,3}$ and $\theta_{1,4}$.  The eigenvalue closest to $\theta_{1.3}$ is $\lambda \approx -0.3430 + 0.3866\iota$, which means that $|\theta_{1,3} - \lambda | \approx 0.5045$.  The closest eigenvalue to $\theta_{1,4}$ is $\lambda \approx 0.0104 + 0.4212\iota$, thus $|\theta_{1,4} - \lambda | \approx 0.5789$.  This indicates than any 3- or 4-cyclic substructure is not well-separated from the rest of the network, which is further indicated by the fact that Lemma \ref{lemma:fuzzy1} states that the magnitude of the embeddings of nodes in the highly 3-cyclic region can decay as fast as 0.5045 at each step.  

Even though the highly cyclic substructure is integrated into the Enron email network as a whole, the eigenvectors associated with $\lambda \approx -0.3430 + 0.3866\iota$ and $\lambda \approx 0.0104 + 0.4212\iota$ can still be used to identify one node from each group in the highly 3- or 4-cyclic substructure.  The embeddings of the nodes of the largest SCC into $\mathbb{R}^2$ can be found in Figure \ref{fig:real2_3cyclic}.  The embedding using the eigenvector $\lambda \approx \theta_{1,3}$ is on the left and that using the eigenvector associated with $\lambda \approx \theta_{1.4}$ is on the right.    Here, it is clear that the majority of the nodes in the network are not clearly identified as belonging to the 3-cyclic structure, however there is at least one clearly identified node from each of the 3-cyclic groups.  These can be used as seed nodes in other community detection networks, such as \cite{KlKl14,WhGlDh13} and many others.  In the 4-cyclic case, there is again at least one seed node from each group that is well separated in the embedding.  Combined with the fact that  $\lambda \approx 0.0104 + 0.4212\iota$ is not as close to $\theta_{1,4}$ as $\lambda \approx -0.3430 + 0.3866\iota$ is to $\theta_{1,3}$, this indicates that the 4-cyclic structure in the Enron email network is not as distinctive as the 3-cyclic structure.  And neither of these substructures are as identifiable as the 3-cyclic structure in the Stanford CS web network.  Again, further in-depth analysis is required to determine exactly what is contributing to these structures.

\begin{figure}[h]
\centering
\includegraphics[width = .44\textwidth]{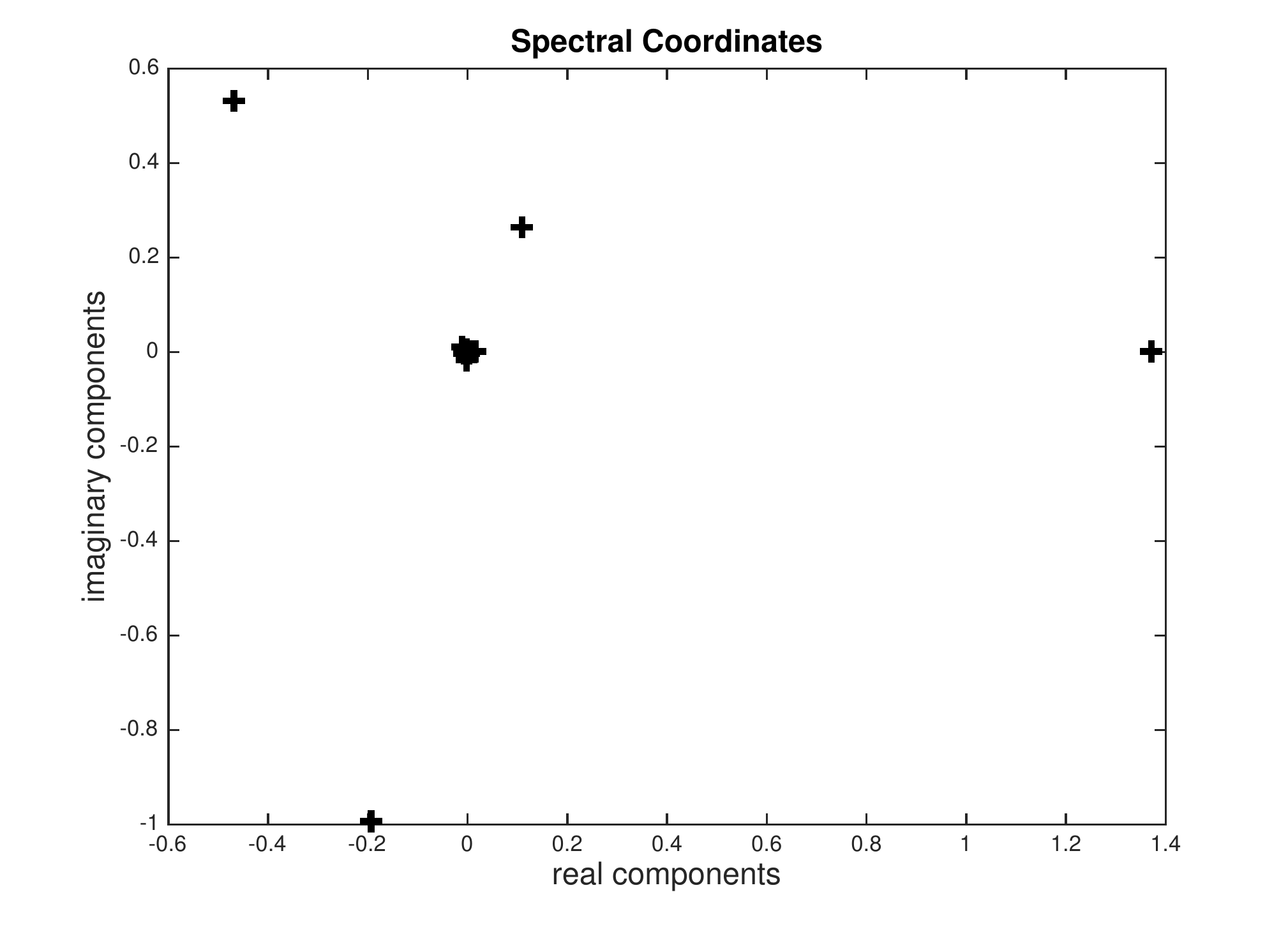} 
\hspace{0.3in}
\includegraphics[width = .44\textwidth]{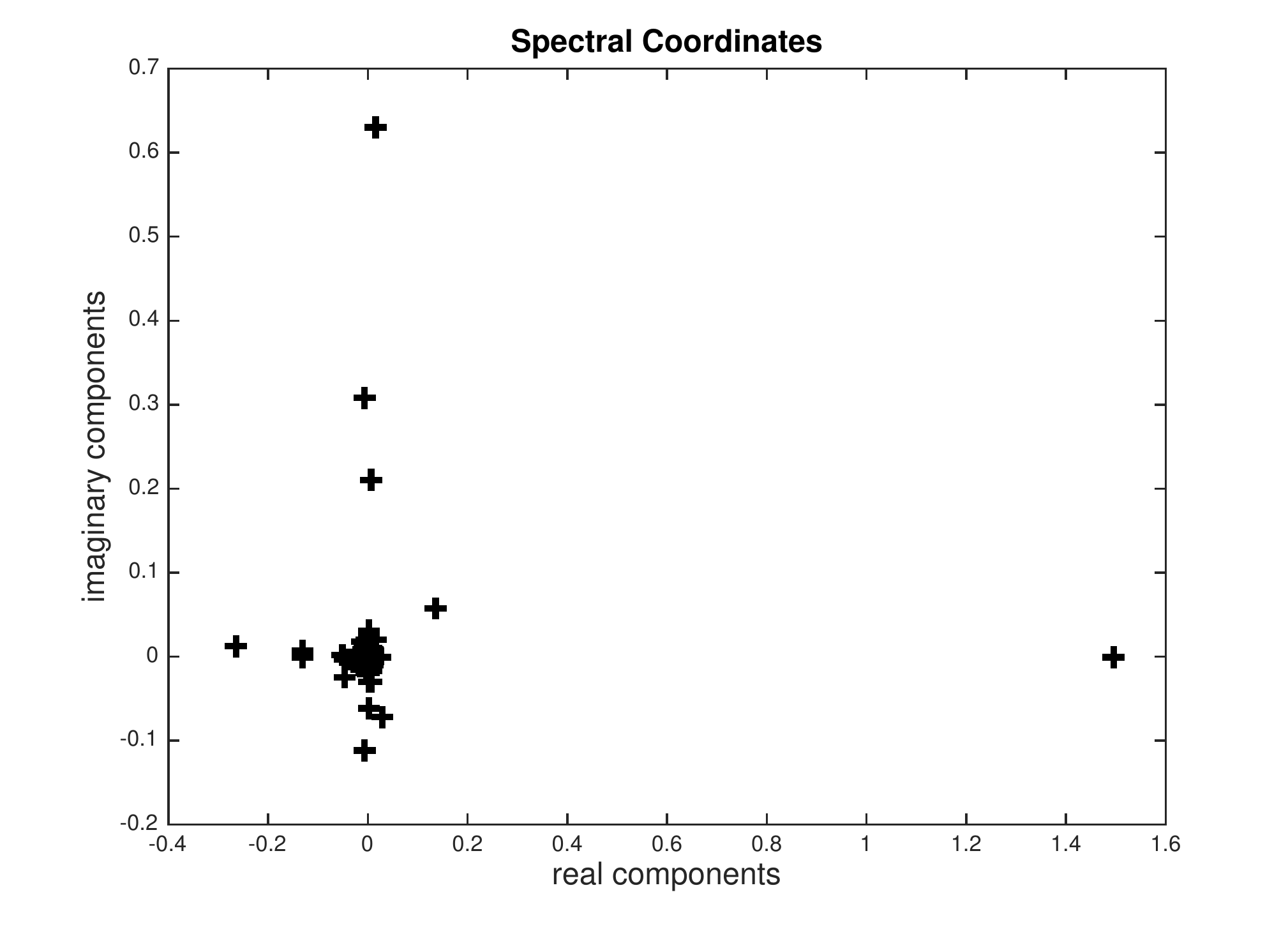} 
\caption{Coordinate embeddings of the right eigenvectors of the Enron email network associated with $\lambda \approx \theta_{1,3}$ (left) and $\lambda \approx \theta_{1,4}$ (right).}
\label{fig:real2_3cyclic}
\end{figure} 

\section{Conclusions and Further Work}
\label{sec:conclusions}

We have studied the relationship between the eigenpairs of row stochastic adjacency matrices of directed networks and the existence of highly cyclic structure in these networks.  In this work, we emphasized networks with both purely and highly 3-cyclic structure, including networks where the highly 3-cyclic region overlapped with dense, non-cyclic communities.  We showed that the existence of eigenvalues at (or near) the imaginary third roots of unity identifies the existence of a 3-cyclic (or highly 3-cyclic) structure in a network and that the eigenvectors associated with these eigenvalues can be used to identify the nodes involved in the individual parts of the 3-cyclic structure.  We additionally demonstrated the effectiveness of these techniques on a variety of generated and real-world networks.  Although our analysis focused on 3-cyclic structure, with slight modifications our methodology can be applied to cycles of any length and we demonstrated the usefulness of eigenvalues near the imaginary fourth roots of unity in identifying highly 4-cyclic structure.  Generally speaking, we suspect that the largest magnitude eigenvalues which are not $\lambda \approx 1$ are likely to provide information regarding general $k$-cyclic structure in a directed network. 

This work is a first step in developing methodologies to identify the existence of communities of varies types of directed structure in complex networks.  Due to the nature of directed edges and the fact that there are, up to isomorphism, seven distinct types of directed triangles, a number of community structures involving edges between three super nodes were not discussed in this paper.  Future work involves extending this methodology to the identification of other types of community structures involving three super nodes.  Another aspect of future work involves improving the bounds in Theorems \ref{thm:fuzzy_sorting} and \ref{thm:fuzzy_mapping} so that they can be effectively be applied to larger 3-cyclic structures.\\

\section*{Funding}
This work was performed under the auspices of the U.S. Department of Energy by Lawrence Livermore National Laboratory under Contract DE-AC52-07NA27344.


\bibliographystyle{siam}  
 \bibliography{3-cyclic_bib}  

\begin{thebibliography}{10}

\bibitem{BeVaBa13}
{\sc M.~Beguerisse-D\'iaz, B.~Vangelov, and M.~Barahona}, {\em Finding role
  communities in directed networks using role-based similarity, markov
  stability and the relaxed minimum spanning tree}, in Global Conference on
  Signal and Information Processing, IEEE, 2013, pp.~937--940.

\bibitem{BeGlLe15}
{\sc A.~R. Benson, D.~F. Gleich, and J.~Leskovec}, {\em Tensor spectral
  clustering for partitioning higher-order network structures}, in SDM'15,
  2015.

\bibitem{BrEr05}
{\sc U.~Brandes and T.~Erlebach}, eds., {\em Network Analysis: Methodological
  Foundations, LNCS Vol. 3418}, Springer, 2005.

\bibitem{Ca07}
{\sc G.~Caldarelli}, {\em Scale Free Networks: Complex Webs in Nature and
  Technology}, Oxford University Press, 2007.

\bibitem{UFSparse}
{\sc T.~Davis and Y.~Hu}, {\em University of florida sparse matrix collection}.

\bibitem{EsKrSaXu96}
{\sc M.~Ester, H.-P. Kriegel, J.~Sander, and X.~Xu}, {\em A density-based
  algorithm for discovering clusters in large spatial databases with noise}, in
  KDD'96, 1996.

\bibitem{Es06}
{\sc E.~Estrada}, {\em Spectral scaling and good expansion properties in
  complex networks}, Europhys. Lett., 73 (2006).

\bibitem{Es11}
\leavevmode\vrule height 2pt depth -1.6pt width 23pt, {\em The Structure of
  Complex Networks:}, Oxford University Press, 2011.

\bibitem{EsFoHi10}
{\sc E.~Estrada, M.~Fox, and D.~J. Higham}, eds., {\em Network Science:
  Complexity in Nature and Technology}, Springer, 2010.

\bibitem{FeKoPe99}
{\sc U.~Feige, G.~Kortsarz, and D.~Peleg}, {\em The dense k-subgraph problem},
  Algorithmics, 29 (1999).

\bibitem{Fo10}
{\sc S.~Fortunato}, {\em Community detection in graphs}, Physics Reports, 486
  (2010), pp.~75--174.

\bibitem{FrKa01}
{\sc A.~Frieze and R.~Kannan}, {\em A new approach to the planted clique
  problem}, 2003.

\bibitem{MatlabBGL}
{\sc D.~Gleich}, {\em Matlabbgl}, 2008.

\bibitem{JaWaFi15}
{\sc L.~Jasny, J.~Waggle, and D.~R. Fisher}, {\em An empirical examination of
  echo chambers un us climate policy networks}, Nature Climate Change, 5
  (2015), pp.~782--786.

\bibitem{JiGaGaWa14}
{\sc S.~Jia, L.~Gao, Y.~Gao, and H.~Wang}, {\em Anti-triangle centrality-based
  community detection in complex networks}, IET Syst. Biol., 8 (2014),
  pp.~116--125.

\bibitem{KeGi11}
{\sc J.~Kepner and J.~Gilbert}, eds., {\em Graph Algorithms in the Language of
  Linear Algebra}, SIAM, 2011.

\bibitem{KiKi05}
{\sc H.~Kim and J.~M. Kim}, {\em Cyclic topology in complex networks}, Phys.
  Rev. E, 72 (2005), p.~036109.

\bibitem{KiSh12}
{\sc S.~Kim and T.~Shi}, {\em Scalable spectral algorithms for community
  detection in directed networks}, arXiv:1211.6807,  (2012).

\bibitem{KiSoJe10}
{\sc Y.~Kim, S.~W. Son, and H.~Jeong}, {\em Finding communities in directed
  networks}, Phys. Rev. E, 81 (2010), p.~016103.

\bibitem{KlKl14}
{\sc I.~M. Kloumann and J.~M. Kleinberg}, {\em Community membership
  identification from small seed sets}, in KDD'14, 2014.

\bibitem{KlGlKo14}
{\sc C.~Klymko, D.~F. Gleich, and T.~G. Kolda}, {\em Using triangles to improve
  community detection in directed networks}, in The Second ASE International
  Conference on Big Data Science and Computing, 2014.

\bibitem{LeNe08}
{\sc E.~A. Leicht and M.~E.~J. Newman}, {\em Community structure in directed
  networks}, Phys. Rev. Lett., 100 (2008), p.~118703.

\bibitem{MaVa13}
{\sc F.~D. Malliaros and M.~Vazirgiannis}, {\em Clustering and community
  detection in directed networks: A survey}, Physics Reports, 533 (2013),
  pp.~95--142.

\bibitem{MaSo96}
{\sc K.~J. Maschhoff and D.~C. Sorensen}, {\em P\_arpack: An efficient portable
  large scale eigenvalue package for distributed memory parallel
  architectures}, in Applied Parallel Computing Industrial Computation and
  Optimization, Springer, 1996.

\bibitem{Mc01}
{\sc F.~McSherry}, {\em Spectral partitioning of random graphs}.
\newblock http://www.cc.gatech.edu/~mihail/D.8802readings/mcsherrystoc01.pdf,
  2001.

\bibitem{Me00}
{\sc C.~D. Meyer}, {\em Matrix Analysis and Applied Linear Algebra}, SIAM,
  2000.

\bibitem{MiZiWi05}
{\sc M.~Middendorf, E.~Ziv, and C.~H. Wiggins}, {\em Inferring network
  mechanisims: The drosophila melanogaster protein interaction network}, PNAS,
  102 (2005), pp.~3192--3197.

\bibitem{Ne03}
{\sc M.~E.~J. Newman}, {\em The structure and function of complex networks},
  SIAM Review, 45 (2003), pp.~167--256.

\bibitem{Ne06}
\leavevmode\vrule height 2pt depth -1.6pt width 23pt, {\em Modularity and
  community structure in networks}, Proceedings of the National Academy of
  Sciences, 103 (2006), pp.~8577--8582.

\bibitem{Ne10}
\leavevmode\vrule height 2pt depth -1.6pt width 23pt, {\em Networks: An
  Introduction}, Cambridge University Press, Cambridge, UK, 2010.

\bibitem{NeGi04}
{\sc M.~E.~J. Newman and M.~Girvan}, {\em Finding and evaluating community
  structure in networks}, Phys. Rev. E, 69 (2004), p.~026113.

\bibitem{Ro11}
{\sc K.~Rohe}, {\em Analysis of Spectral Clustering and the Lasso under
  Nonstandard Statistical Models}, PhD thesis, U.C. Berkeley, 2011.

\bibitem{RoQiYu15}
{\sc K.~Rohe, T.~Qin, and B.~Yu}, {\em Co-clustering for directed graphs: the
  stochastic co-blockmodel and spectral algorithm di-sim}, arXiv:1204.2296,
  (2015).

\bibitem{SeArGo11}
{\sc B.~Serrour, A.~Arenas, and S.~Gomez}, {\em Detecting communities of
  triangles in complex networks using spectral optimization}, Computer
  Communications, 34 (2011), pp.~629--634.

\bibitem{SeKoPi12}
{\sc C.~Seshadhri, T.~G. Kolda, and A.~Pinar}, {\em Community structure and
  scale-free ccollection of erd\'os-r\'enyi graphs}, Phys. Rev. E, 85 (2012),
  p.~056109.

\bibitem{SePiDuKo13}
{\sc C.~Seshadhri, A.~Pinar, N.~Durak, and T.~G. Kolda}, {\em Directed closure
  measures for networks with reciprocity}.
\newblock arXiv:1302.6220, February 2013.

\bibitem{SePiKo13}
{\sc C.~Seshadhri, A.~Pinar, and T.~G. Kolda}, {\em Triadic measures on graphs:
  The power of wedge sampling}, in Proceedings of the 2013 SIAM International
  Conference on Data Mining, 2013.

\bibitem{StSu90}
{\sc G.~W. Stewart and J.~Sun}, {\em Matrix Perturbation Theory}, Academic
  Press, Inc., 1990.

\bibitem{VaOlBa05}
{\sc A.~V\'azquez, J.~G. Oliveira, and A.~L. Barab\'asi}, {\em Inhomogeneous
  evolution of subgraphs and cycles in complex networks}, Phys. Rev. E, 71
  (2005), p.~025103(R).

\bibitem{Vu14}
{\sc V.~Vu}, {\em A simple svd algorithm for finding hidden partitions},
  arXiv:1404.3918,  (2014).

\bibitem{WhGlDh13}
{\sc J.~J. Whang, D.~F. Gleich, and I.~S. Dhillon}, {\em Overlapping community
  detection using neighborhood-inflated seed expansion}, in CIKM'13, 2013.

\end{thebibliography}

\end{document}